\newtheorem{theorem}{Theorem}
\newtheorem{lemma}[theorem]{Lemma}
\newtheorem{proposition}[theorem]{Proposition}
\newtheorem{corollary}[theorem]{Corollary}
\newtheorem{conjecture}[theorem]{Conjecture}
\newtheorem{remark}[theorem]{Remark}
\newtheorem{rem}[theorem]{Remark}
\newtheorem{prop}[theorem]{Proposition}
\numberwithin{equation}{section}
\numberwithin{theorem}{section}
\renewcommand{\Pr}{\mathop{\bf Pr\/}}
\newcommand{\E}{\mathop{\bf E\/}}
\newcommand{\Var}{\mathop{\bf Var\/}}
\newcommand{\ab}[1]{\langle#1\rangle}
\newcommand{\floor}[1]{\lfloor {#1} \rfloor}
\newcommand{\ceil}[1]{\lceil {#1}\rceil}
\newcommand{\ind}[1]{^{(#1)}}
\newcommand\defeq{\ensuremath{\stackrel{\rm def}{=}}} 
\DeclareMathOperator{\MC}{\text{Max-Cut}}
\newcommand\Bollobas{Bollob{\'a}s }
\begin{document}

\title{Lower bounds for Max-Cut in $H$-free graphs via semidefinite programming}
\author{Charles Carlson\thanks{Department of Computer Science, University of Colorado Boulder, Boulder, CO 80302. Email: {\tt charles.carlson@colorado.edu}.} , 
Alexandra Kolla\thanks{Department of Computer Science, University of Colorado Boulder, Boulder, CO 80302. Email: {\tt alexandra.kolla@colorado.edu}. Research supported by NSF CAREER grant 1452923 as well as NSF AF grant 1814385} , 
Ray Li\thanks{Department of Computer Science, Stanford University, Stanford, CA 94305. Email: {\tt rayyli@cs.stanford.edu}.  Research supported by an NSF GRF grant DGE-1656518 and by NSF grant CCF-1814629.} , 
Nitya Mani\thanks{Department of Mathematics and Computer Science, Stanford University, Stanford, CA 94305. Email: {\tt nmani@cs.stanford.edu}. Research supported in part by a Stanford Undergraduate Advising and Research Major Grant.} , 
Benny Sudakov\thanks{Department of Mathematics, ETH, 8092 Zurich, Switzerland. benjamin.sudakov@math.ethz.ch.	Research supported in part by SNSF grant 200021-175573.} , 
Luca Trevisan\thanks{Computer Science Division, U.C. Berkeley, Berkeley, CA 94720. Email: {\tt luca@berkeley.edu}. Research supported by the NSF under grant CCF 1815434. Work on this project has also received funding from the European Research Council (ERC) under the European Union's Horizon 2020 research and innovation programme (grant agreement No. 834861).}}
\date{}

\maketitle
\begin{abstract}
For a graph $G$, let $f(G)$ denote the size of the maximum cut in $G$. 
The problem of estimating $f(G)$ as a function of the number of vertices and edges of $G$ has a long history and was extensively studied in the last fifty years. In this paper we propose an approach, based on semidefinite programming (SDP), to prove lower bounds on $f(G)$. We use this approach to find large cuts in graphs with few triangles and in
$K_r$-free graphs.
\end{abstract}

\section{Introduction}

The celebrated Max-Cut problem asks for the largest bipartite subgraph of a graph $G$, i.e.,
for a partition of the vertex set of $G$ into disjoint sets $V_1$ and $V_2$ so that 
the number of edges of $G$ crossing $V_1$ and $V_2$ is maximal. This 
problem has been the subject of extensive research, both from a largely 
algorithmic perspective in computer science and from an extremal perspective in combinatorics.
Throughout, let $G$ denote a graph with $n$ vertices and $m$ edges with maximal cut of size $f(G)$.
The extremal version of Max-Cut problem asks to give bounds on $f(G)$ solely as a function of $m$ and $n$.
This question was first raised more than fifty years ago by Erd\H{o}s~\cite{E1} and has attracted a lot of attention 
since then (see, e.g.,~\cite{E73, ER75, EFPS, AL96, SHE92, BS1, AL05, Su07, CFKS} and their references). 

It is well known that every graph $G$ with $m$ edges has a cut of size at least $m/2$. To see this, consider a random 
partition of vertices of the vertices $G$ into two parts $V_1, V_2$ and estimate the expected number of edges between $V_1$ and $V_2$. On the other hand, already in 1960's Erd\H{o}s~\cite{E1} observed that the constant $1/2$ cannot 
be improved even if we consider very restricted families of graphs, e.g., graphs that contain no short cycles.
Therefore the main question, which has been studied by many researchers, is to estimate the error term $f(G)-m/2$, which we call {\it surplus},  for various families of graphs $G$. 

The elementary bound $f(G) \geq m/2$  was improved by Edwards \cite{E73, E75} who showed that every graph with $m$ edges has a cut of size at least $\frac{m}{2}+ \frac{\sqrt {8m+1}-1}{8}$. 
This result is easily seen to be tight in case $G$
is a complete graph on an odd number of vertices, that is, whenever $m=\binom{k}{2}$ for some odd integer $k$. Estimates on the
second error term for other values of $m$ can be found in~\cite{AH} and~\cite{BS1}.

Although the $\sqrt{m}$ error term is tight in general, it was observed by Erd\H{o}s and Lov\'asz \cite{ER75} that for triangle-free graph it can be improved to at least $m^{2/3+o(1)}$.
This naturally yiels a motivating question: what is the best surplus which can always be achieved if we assume that our family of graphs
is {\it $H$-free}, i.e., no graph contains a fixed graph $H$ as a subgraph.
It is not difficult to show (see, e.g.~\cite{AL03}) that for every fixed
graph $H$ there is some $\epsilon=\epsilon(H)>0$ such that $f(G) \geq \frac{m}{2}+\Omega(m^{1/2+\epsilon})$ for all $H$-free graphs with $m$ edges. However, the problem of 
estimating the error term more precisely is not easy, even for relatively 
simple graphs $H$. It is plausible to conjecture (see~\cite{AL05}) that for every fixed graph $H$ there is a constant $c_H$ such that every $H$-free graph $G$ with $m$ edges has a cut with surplus at least
$\Theta(m^{c_H})$, i.e., there is both a lower bound and an infinite sequence of example showing that exponent $c_H$ can not be improved.
This conjecture is very difficult. Even in the case $H=K_3$ determining the correct error term took almost twenty years. Following the works of~\cite{ER75, PT, SHE92}, Alon~\cite{AL96} proved that every $m$-edge triangle free graph has a cut with surplus of order $m^{4/5}$ and that this is tight up to constant factors. There are several other forbidden graphs $H$ for which we know quite accurately the error term for the extremal Max-Cut problem in $H$-free graphs.  For example, it was proved in~\cite{AL05}, that if $H=C_r$ for $r=4, 6, 10$ then  $c_H=\frac{r+1}{r+2}$. The answer is also known in the case when $H$ is a complete bipartite graph $K_{2,s}$ or $K_{3,s}$ (see~\cite{AL05} for details).

\paragraph{New approach to Max-Cut using semidefinite programming.}
Many extremal results for the Max-Cut problem rely on quite elaborate probabilistic arguments.
A well known example of such an argument is a proof by Shearer~\cite{SHE92} that if $G$ is a triangle-free
graph with $n$ vertices and $m$ edges, and if $d_1, d_2, \ldots, d_n$ are the degrees of its vertices, then
$f(G) \geq \frac{m}{2} + O(\sum_{i=1}^n \sqrt d_i)$. The proof is quite intricate and is based on first choosing a random cut
and then randomly redistributing some of the vertices, depending on how many their neighbors are on the same side as the chosen vertex in the initial cut.
Shearer's arguments were further extended, with more technically involved proofs, in~\cite{AL05} to show that the same lower bound remains valid for graphs $G$ with relatively sparse neighborhoods (i.e., graphs which locally have few triangles).

In this article we propose a different approach to give lower bounds on the $\MC$ of sparse $H$-free graphs using approximation by semidefinite programming (SDP). This approach is intuitive and computationally simple. The main idea was inspired by the celebrated approximation algorithm of 
Goemans and Williamson~\cite{GW95} of the $\MC$: given a graph $G$ with $m$ edges, we first construct an explicit solution for the standard $\MC$ SDP relexation of $G$ which has value at least $(\frac{1}{2}+W)m$ for some positive surplus $W$.
We then apply a Goemans-Williamson randomized rounding, based on the sign of the scalar product with random unit vector, to extract a cut in $G$ whose surplus is within constant factor of $W$.
Using this approach we prove the following result.

\begin{theorem}
	\label{thm:sdp}
	Let $G = (V,E)$ be a graph with $n$ vertices and $m$ edges.
	For every $i \in [n]$, let $V_i$ be any subset of neighbours of vertex $i$ and $\varepsilon_i \leq \frac{1}{\sqrt{\vert V_i \vert}}$.
	Then,
	\begin{align}
	f(G) \geq \frac{m}{2} + \sum_{i=1}^n \frac{\varepsilon_i \vert V_i \vert }{4 \pi} - \sum_{(i,j) \in E} \frac{ \varepsilon_i \varepsilon_j |V_i\cap V_j|  }{2}.
	\end{align}
\end{theorem}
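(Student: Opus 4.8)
The plan is to follow the SDP-rounding strategy outlined in the introduction: first exhibit an explicit feasible vector solution to the Max-Cut SDP whose objective value is large, and then apply Goemans--Williamson hyperplane rounding to convert this into a genuine cut. Concretely, I would assign to each vertex $i$ a unit vector $x_i \in \RR^N$ (for a suitable $N$, e.g.\ $N = n+1$) of the form
\begin{align}
x_i = \sqrt{1 - \varepsilon_i^2 |V_i|}\, e_0 + \varepsilon_i \sum_{k \in V_i} e_k,
\end{align}
where $\{e_0, e_1, \dots, e_n\}$ is an orthonormal basis, $e_0$ is a ``dummy'' coordinate and $e_k$ is the coordinate indexed by vertex $k$. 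The constraint $\varepsilon_i \le 1/\sqrt{|V_i|}$ guarantees $1 - \varepsilon_i^2|V_i| \ge 0$, so each $x_i$ is a well-defined unit vector (one checks $\|x_i\|^2 = (1-\varepsilon_i^2|V_i|) + \varepsilon_i^2|V_i| = 1$). The intuition is that $x_i$ points ``mostly'' toward the common direction $e_0$ but is perturbed toward the coordinates of the neighbours we have chosen to record.

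Next I would compute the inner products $\langle x_i, x_j\rangle$ for each edge $(i,j) \in E$. Since $x_i$ and $x_j$ share the $e_0$ component and overlap in the coordinates indexed by $V_i \cap V_j$, and additionally $j \in V_i$ or $i \in V_j$ may contribute, the main term is
\begin{align}
\langle x_i, x_j \rangle = \sqrt{(1-\varepsilon_i^2|V_i|)(1-\varepsilon_j^2|V_j|)} + \varepsilon_i \varepsilon_j |V_i \cap V_j| + (\text{terms from } i \in V_j \text{ or } j \in V_i).
\end{align}
The key point is that this inner product is close to the nonnegative quantity $\sqrt{(1-\varepsilon_i^2|V_i|)(1-\varepsilon_j^2|V_j|)}$ plus the small correction $\varepsilon_i\varepsilon_j|V_i\cap V_j|$. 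For Goemans--Williamson rounding, a random hyperplane through the origin separates $x_i$ and $x_j$ with probability $\frac{1}{\pi}\arccos\langle x_i, x_j\rangle$, so the expected cut size is $\sum_{(i,j)\in E} \frac{1}{\pi}\arccos\langle x_i,x_j\rangle$. To extract the claimed bound I would use the inequality $\frac{1}{\pi}\arccos(t) \ge \frac12 - \frac{t}{\pi}$ for $t \in [-1,1]$ (which follows from concavity/convexity of $\arccos$ on the relevant range, or simply from the fact that $\arccos$ lies above its tangent-type linear lower bound; more precisely $\arccos(t) \geq \pi/2 - (\pi/2) t \geq \pi/2 - t$ is too weak, so I would instead use $\arccos t \ge \pi/2 - \arcsin t$ and $\arcsin t \le (\pi/2) t$ for $t \ge 0$ together with a direct bound, taking care of signs). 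Summing over edges, the $\frac12$ contributes $m/2$, and the linear-in-$\langle x_i,x_j\rangle$ terms split into a ``diagonal gain'' from the $e_0$-overlap and a ``loss'' from the $\varepsilon_i\varepsilon_j|V_i\cap V_j|$ overlaps, matching the two sums in the statement.

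The main obstacle, and where the bookkeeping must be done carefully, is controlling the term $\sqrt{(1-\varepsilon_i^2|V_i|)(1-\varepsilon_j^2|V_j|)}$ and extracting from it the positive surplus $\sum_i \frac{\varepsilon_i|V_i|}{4\pi}$. Using $\sqrt{1-a}\,\sqrt{1-b} \le 1 - \frac{a+b}{2}$ (valid for $a,b \in [0,1]$ by AM-GM or concavity of $\sqrt{\cdot}$), one gets $\sqrt{(1-\varepsilon_i^2|V_i|)(1-\varepsilon_j^2|V_j|)} \le 1 - \frac12(\varepsilon_i^2|V_i| + \varepsilon_j^2|V_j|)$, so that $-\frac{1}{\pi}\langle x_i,x_j\rangle$ contributes at least $\frac{1}{2\pi}(\varepsilon_i^2|V_i| + \varepsilon_j^2|V_j|)$ minus the $\frac{1}{\pi}\varepsilon_i\varepsilon_j|V_i\cap V_j|$ loss (and I must also carefully account for the possible extra $\varepsilon_i$ or $\varepsilon_j$ term when one vertex lies in the other's chosen neighbour set --- this can only help, or can be absorbed). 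Summing $\frac{1}{2\pi}\varepsilon_i^2|V_i|$ over all edges incident to $i$ gives $\frac{1}{2\pi}\varepsilon_i^2|V_i|\cdot d_i$ where $d_i = \deg(i) \ge |V_i|$, which is at least $\frac{1}{2\pi}\varepsilon_i^2|V_i|^2$; but we want $\frac{\varepsilon_i|V_i|}{4\pi}$, so I would need the elementary inequality $\varepsilon_i^2|V_i|^2 \ge \varepsilon_i |V_i| / 2 \cdot (\text{something})$ --- in fact since $\varepsilon_i \le 1/\sqrt{|V_i|}$ fails to give this directly, the correct route is that each edge $(i,j)$ is counted once and we only collect $\frac12\varepsilon_i^2|V_i|$ per such edge summed over $j \in N(i)$, yielding $\frac12 \varepsilon_i^2 |V_i| d_i \geq \frac12 \varepsilon_i^2 |V_i|^2$, and then reconciling the constant $\frac{1}{2\pi}$ versus $\frac{1}{4\pi}$ by being slightly lossy, or by a sharper linear lower bound on $\arccos$ near $1$. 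Getting these constants to line up exactly with $\frac{1}{4\pi}$ and $\frac12$ is the delicate part; everything else is a direct computation.
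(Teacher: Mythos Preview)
Your vector construction is fatally flawed. By giving every vertex a large component along a \emph{common} direction $e_0$, you force $\langle x_i,x_j\rangle$ to be close to $+1$ for every edge: indeed, with your vectors one computes exactly
\[
\langle x_i,x_j\rangle \;=\; \sqrt{(1-\varepsilon_i^2|V_i|)(1-\varepsilon_j^2|V_j|)} \;+\; \varepsilon_i\varepsilon_j|V_i\cap V_j|\;\ge\;0,
\]
and there are no further ``$i\in V_j$'' cross-terms, since $x_j$ has no $e_j$-component. Consequently the SDP objective $\sum_{(i,j)\in E}\tfrac12(1-\langle x_i,x_j\rangle)$ is close to $0$, not to $m/2$, and Goemans--Williamson rounding cuts each edge with probability $\tfrac{1}{\pi}\arccos\langle x_i,x_j\rangle$ close to $0$. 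The inequality $\tfrac{1}{\pi}\arccos t\ge \tfrac12-\tfrac{t}{\pi}$ you propose is simply false for $t>0$ (at $t=1$ the left side is $0$ and the right side is $\tfrac12-\tfrac1\pi>0$); the correct inequality $\arcsin t\ge t$ for $t\ge 0$ goes the wrong way for your purposes. Your subsequent attempt to extract $\sum_i \tfrac{\varepsilon_i|V_i|}{4\pi}$ from $\sum_i \varepsilon_i^2|V_i|d_i$ also fails, since $\varepsilon_i d_i\ge \tfrac12$ is not implied by the hypotheses.

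The paper's construction avoids both problems at once. It assigns to vertex $i$ the (normalized) vector with a $1$ in its \emph{own} coordinate $e_i$ and $-\varepsilon_i$ (note the sign) in each coordinate $e_k$ for $k\in V_i$. Distinct vertices are then nearly orthogonal by default, and for an edge $(i,j)$ with $j\in V_i$ the coordinate $e_j$ contributes $(-\varepsilon_i)\cdot 1<0$ to the inner product. Thus $\langle v^{(i)},v^{(j)}\rangle$ is small and typically negative, so that $\arccos$ is near $\pi/2$ and each edge is cut with probability slightly above $\tfrac12$. The gain term $\sum_i \tfrac{\varepsilon_i|V_i|}{4\pi}$ then arises directly from summing these negative $-\varepsilon_i/4$ contributions over edges, with no need for the $\varepsilon_i^2|V_i|d_i$ detour.
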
 

\noindent
This results implies the Shearer's bound \cite{SHE92}. To see this, set $V_i$ to the neighbors of $i$ and $\varepsilon_i=\frac{1}{\sqrt{d_i}}$ for all $i$. Then, if $G$ is triangle-free graph, then  $|V_i\cap V_j|=0$ for every pair of adjacent vertices $i,j$. 

The fact that we apply Goemans-Williamson SDP rounding in this setting is perhaps surprising for a few reasons.
In general, our result obtains a surplus of $\Omega(W)$ from an SDP solution with surplus $W$, which is not possible in general.
The best cut that can be guaranteed from any kind of rounding of a Max-Cut SDP solution with value $(\frac{1}{2}+W)m$ is $(\frac{1}{2}+\Omega(\frac{W}{\log W}))m$ (see \cite{OW08}). 
Furthermore, this is achieved using the RPR$^2$ rounding algorithm, not the Geomans-Williamson rounding algorithm.
Nevertheless, we show that our explicit Max-Cut solution has additional properties that circumvents these issues and permits a better analysis.

\paragraph{New lower bound for Max-Cut of triangle sparse graphs}

Using Theorem~\ref{thm:sdp}, we give a new result on the Max-Cut of triangle sparse graphs that is more convenient to use than previous similar results.
A graph $G$ is \emph{$d$-degenerate} if there exists an ordering of the vertices $1,\dots,n$ such that vertex $i$ has at most $d$ neighbors $j<i$.
Equivalently, a graph is $d$-degenerate if every induced subgraph has a vertex of degree at most $d$.
Degeneracy is a broader notion of graph sparseness than maximum degree: all maximum degree $d$ graphs are $d$-degenerate, but the star graph is 1-degenerate while having maximum degree $n-1$.
Theorem~\ref{thm:sdp} gives the following useful corollary on the Max-Cut of $d$-degenerate graphs. 
\begin{corollary}
	\label{cor:sdp-2}
	Let $\varepsilon\le \frac{1}{\sqrt{d}}$.
	Let $G$ be a $d$-degenerate graph with $m$ edges and $t$ triangles.
	Then
	\begin{align}
	f(G)\ge \frac{m}{2} + \frac{\varepsilon m}{4\pi} - \frac{\varepsilon^2 t}{2}.
	\end{align}
\end{corollary}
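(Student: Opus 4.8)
The plan is to apply Theorem~\ref{thm:sdp} with a judicious choice of the sets $V_i$ coming from a degeneracy ordering of $G$. Fix an ordering $1,\dots,n$ of $V$ witnessing $d$-degeneracy, so that each vertex $i$ has at most $d$ neighbours $j<i$. For each $i$ I would take $V_i := \{\, j : j\sim i,\ j<i \,\}$, the set of ``back-neighbours'' of $i$, and set $\varepsilon_i := \varepsilon$ for every $i$. Since $|V_i|\le d$ by the choice of ordering, we have $\varepsilon = \varepsilon_i \le \tfrac{1}{\sqrt d} \le \tfrac{1}{\sqrt{|V_i|}}$ (the constraint being vacuous when $V_i=\emptyset$, in which case the corresponding term vanishes anyway), so the hypotheses of Theorem~\ref{thm:sdp} are met.

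The first sum in Theorem~\ref{thm:sdp} is immediate: every edge of $G$ is counted exactly once among the $V_i$, namely in $V_i$ for its larger endpoint $i$, so $\sum_{i=1}^n |V_i| = m$ and hence $\sum_{i=1}^n \tfrac{\varepsilon_i |V_i|}{4\pi} = \tfrac{\varepsilon m}{4\pi}$.

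The main point is to show that the error term is governed by the triangle count, i.e. that $\sum_{(i,j)\in E} |V_i\cap V_j| = t$. For an edge $(i,j)$, a vertex $k$ lies in $V_i\cap V_j$ precisely when $k$ is adjacent to both $i$ and $j$ and $k<\min(i,j)$; equivalently, $\{k,i,j\}$ spans a triangle in which $k$ is the least vertex in the ordering. Conversely, a triangle with vertices $a<b<c$ is accounted for exactly once in the double sum, via $k=a$ together with the edge $(i,j)=(b,c)$, which is present since $a,b,c$ are pairwise adjacent. Summing over all edges therefore gives exactly $t$, so $\sum_{(i,j)\in E} \tfrac{\varepsilon_i\varepsilon_j |V_i\cap V_j|}{2} = \tfrac{\varepsilon^2 t}{2}$, and substituting both computations into Theorem~\ref{thm:sdp} yields the claimed inequality.

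There is no serious obstacle here; the only things to be careful about are (i) checking that $\varepsilon_i \le 1/\sqrt{|V_i|}$ holds for every $i$, including the degenerate case $V_i=\emptyset$ (where the term contributes nothing), and (ii) verifying that the ``least vertex'' correspondence between triangles of $G$ and pairs $\big((i,j),k\big)$ with $(i,j)\in E$ and $k\in V_i\cap V_j$ is exactly one-to-one, so that no triangle is double-counted and none is missed. Both points are routine once the back-neighbour sets have been set up.
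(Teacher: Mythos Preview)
Your proof is correct and follows exactly the same approach as the paper: choose a degeneracy ordering, let $V_i$ be the back-neighbours of $i$, set all $\varepsilon_i=\varepsilon$, and observe that $\sum_i |V_i|=m$ while $\sum_{(i,j)\in E}|V_i\cap V_j|=t$. You have simply fleshed out the details of the bijection and the hypothesis check more carefully than the paper does.
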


\noindent
Indeed, let $1,\dots,n$ be an ordering of the vertices such that any $i$ has at most $d$ neighbors $j<i$, and let $V_i$ be this set of neighbors.
Let $\varepsilon_i=\varepsilon$ for all $i$.
In this way, $\sum_{i}^{} |V_i|$ counts every edge exactly once and $\sum_{(i,j)\in E}^{} |V_i\cap V_j|$ counts every triangle exactly once, and the result follows.
This shows that graphs with few triangles have cuts with surplus similar to triangle-free graphs. 

This result is new and more convenient to use than existing results in this vein, because it relies only on the global count of the number of triangles, rather than a local triangle sparseness property assumed by prior results. 
For example, it was shown that (using Lemma 3.3 of \cite{AL05}) a $d$-degenerate graph with a local triangle-sparseness property, namely that every large induced subgraph with a common neighbor is sparse, has Max-Cut at least $\frac{m}{2} + \Omega(\frac{m}{\sqrt{d}})$.
However, we can achieve the same result with only the guarantee that the global number of triangles is small.
In particular, when there are at most $O(m\sqrt{d})$ triangles, which is always the case with the local triangle-sparseness assumption above, setting $\varepsilon=\Theta(\frac{1}{\sqrt{d}})$ in Corollary~\ref{cor:sdp-2} gives that the Max-Cut is again at least $\frac{m}{2} + \Omega(\frac{m}{\sqrt{d}})$.

\paragraph{Corollary: Lower bounds for Max-Cut of $H$-free degenerate graphs.}
We illustrate usefulness of the above results by giving the following lower bound on the Max-Cut of $K_r$-free graphs.
\begin{theorem}
	\label{thm:kr-lb}
	Let $r \geq 3$.
	There exists a constant $c = c(r) >0$ such that, for all $K_r$-free $d$-degenerate graphs $G$ with $m$ edges,
	\begin{align}
	f(G) \ge \left(\frac{1}{2} +\frac{c}{d^{1-1/(2r-4)}}\right)m.
	\label{eq:kr-lb}
	\end{align}
\end{theorem}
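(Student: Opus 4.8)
The plan is to derive Theorem~\ref{thm:kr-lb} from Theorem~\ref{thm:sdp} (and its Corollary~\ref{cor:sdp-2}), with the sets $V_i$ taken to be back-neighbourhoods in a degeneracy ordering, by controlling the number of triangles, via an induction on $r$. Write $q := 2r-4 = 2(r-2)$, so the target is surplus $\Omega(m/d^{1-1/q})$; since $1-1/q \ge 1/2$ for $r \ge 3$, any $\varepsilon$ of order $d^{-(1-1/q)}$ satisfies $\varepsilon \le 1/\sqrt d$ and is admissible in Corollary~\ref{cor:sdp-2}. The base case $r=3$ is immediate: a triangle-free graph has no triangles, so Corollary~\ref{cor:sdp-2} with $\varepsilon = 1/\sqrt d$ gives surplus $m/(4\pi\sqrt d)$, which is the case $q=2$. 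For the inductive step, fix a degeneracy ordering $1,\dots,n$ of $G$, let $N^-(v)$ be the neighbours of $v$ preceding it (so $|N^-(v)| \le d$ and $\sum_v |N^-(v)| = m$), and recall that $\sum_{(u,v)\in E} |N^-(u) \cap N^-(v)|$ equals the number of triangles of $G$, each being counted once, on the edge joining its two smaller vertices.

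Set the threshold $\theta := d^{1-1/q}$ and call a vertex $v$ \emph{light} if $e(G[N^-(v)]) \le \theta\, |N^-(v)|$ and \emph{heavy} otherwise; for heavy $v$ the induced graph $G[N^-(v)]$ is $K_{r-1}$-free (adjoining $v$ to a $K_{r-1}$ inside it would give a $K_r$), has at most $d$ vertices, and has more than $\theta\,|N^-(v)|$ edges, so in particular $|N^-(v)| > 2\theta$. I would split on whether $\sum_{v \text{ light}} |N^-(v)| \ge m/2$. If so, apply Theorem~\ref{thm:sdp} to $G$ with $V_v := N^-(v)$ and $\varepsilon_v := \frac{1}{4\pi\theta}$ for light $v$, and $\varepsilon_v := 0$ (with $V_v$ any single back-neighbour) for heavy $v$; this is admissible since $\frac{1}{4\pi\theta} \le \frac{1}{\sqrt d}$. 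The linear term of Theorem~\ref{thm:sdp} then equals $\frac{1}{(4\pi)^2\theta}\sum_{v \text{ light}}|N^-(v)|$, while in the quadratic term only edges with both endpoints light contribute, and there $\sum |N^-(u)\cap N^-(v)| \le \sum_{v \text{ light}} e(G[N^-(v)]) \le \theta \sum_{v \text{ light}}|N^-(v)|$, so the quadratic term is at most half the linear term. Hence $f(G) \ge \frac m2 + \frac{1}{2(4\pi)^2\theta}\sum_{v \text{ light}}|N^-(v)| \ge \frac m2 + \frac{m}{64\pi^2 d^{1-1/q}}$, as desired.

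In the remaining case $\sum_{v \text{ heavy}}|N^-(v)| > m/2$, each heavy $v$ gives a $K_{r-1}$-free graph $G[N^-(v)]$ on at most $d$ vertices (hence $d$-degenerate) with more than $\theta\,|N^-(v)|$ edges. By the inductive hypothesis (for $r-1$, exponent $q-2$), $G[N^-(v)]$ has a cut with surplus at least a constant times $e(G[N^-(v)]) / d^{1-1/(q-2)} \ge \theta\,|N^-(v)| / d^{1-1/(q-2)} = |N^-(v)|\, d^{2/(q(q-2))}$; extending an optimal such cut to $V(G)$ by a uniformly random assignment of the remaining vertices gives $f(G) \ge \frac m2 + \Omega(|N^-(v)|\, d^{2/(q(q-2))})$ for every heavy $v$, and using $|N^-(v)| > 2\theta$ this is $\Omega(d^{\,1-1/q+2/(q(q-2))})$. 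This already yields the theorem whenever $m$ is bounded by a fixed power of $d$.

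The main obstacle is the heavy-dominated case when $m$ is large: the heavy back-neighbourhoods are then numerous and carry most of the edges, yet each has only $\le d$ vertices and they overlap heavily, so no single local cut accounts for $\Omega(m)$ edges, and one cannot just sum the local cuts. I expect the fix to be an iterated light/heavy decomposition: recurse on the forbidden clique size from $r$ down to $3$, and at each level feed the edge-mass whose current back-neighbourhood is light at that level into Theorem~\ref{thm:sdp} with a suitably scaled $\varepsilon$, so that the exponent $1-1/(2r-4)$ emerges from combining the $r-2$ levels. The hardest point should be that a vertex lies in many back-neighbourhoods simultaneously, so the $\varepsilon$- and vector-choices arising from different levels and neighbourhoods must be merged into a single SDP solution (for instance by placing distinct neighbourhoods' contributions in orthogonal coordinates and renormalising), and one must check that no edge-mass leaks between levels and that the recursion closes with the claimed loss.
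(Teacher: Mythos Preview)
Your light/heavy split and the light-case analysis are correct and correspond exactly to the triangle-sparse part of the paper's argument (Corollary~\ref{cor:sdp-2} applied with the back-neighbourhoods). The gap, as you yourself identify, is the heavy-dominated case when $m$ is large; and your proposed fix does not close it.

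There are two separate problems. First, the overlapping-neighbourhood obstacle is real, but the remedy is not to merge the local SDP vectors into one solution. The paper instead uses an \emph{iterative disjoint extraction} (Lemma~\ref{lem:lb-4}): repeatedly find a single dense back-neighbourhood $V_i$, remove it from the graph, and continue on the remainder. This produces \emph{disjoint} $K_{r-1}$-free pieces $G_1,\dots,G_k$ with $n(G_i)\le d$, $m(G_i)\ge n(G_i)/(8\varepsilon)$, and $\sum_i n(G_i)\ge m/(6d)$, plus a triangle-sparse leftover $V_{k+1}$. Because the pieces are disjoint, one can simply take an optimal cut on each and randomise the cross edges, so the surpluses add (Lemma~\ref{lem:general}). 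No orthogonal-coordinate trick is needed.

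Second, and more seriously, even granting the disjoint decomposition, your induction on $r$ does \emph{not} produce the exponent $1-\frac{1}{2r-4}$. Each piece $G_i$ is $K_{r-1}$-free but may still be $(d-1)$-degenerate (it has up to $d$ vertices), so the inductive hypothesis only yields surplus $c_{r-1}\,m(G_i)/d^{\beta_{r-1}}$. Balancing the light and heavy cases then forces the recursion $\beta_r=\tfrac{1+\beta_{r-1}}{2}$ with $\beta_3=\tfrac12$, whence $\beta_r=1-2^{-(r-2)}$; this matches the target only for $r\le 4$ and is strictly worse for $r\ge 5$. The paper avoids this loss by not inducting on the theorem at all: for the pieces it uses the Ramsey-based chromatic bound $\chi(K_{r-1}\text{-free on }n\text{ vertices})\le O(n^{(r-3)/(r-2)})$ (Lemma~\ref{lem:chikr}), which via Lemma~\ref{lem:mcchi} gives surplus $\Omega\bigl(m(G_i)/n(G_i)^{(r-3)/(r-2)}\bigr)\ge\Omega\bigl(m(G_i)/d^{(r-3)/(r-2)}\bigr)$. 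Plugging this into the decomposition and choosing $\varepsilon=d^{-1+1/(2r-4)}$ is exactly what makes the two cases balance at the claimed exponent.
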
 
\noindent
Lower bounds such as Theorem~\ref{thm:kr-lb} giving a surplus of the form $c\cdot \frac{m}{d^\alpha}$ are more fine-grained than those that depend only on the number of edges.
Accordingly, they are useful for obtaining lower bounds the Max-Cut independent of the degeneracy:
many tight Max-Cut lower bounds in $H$ free graphs of the form $\frac{m}{2}+cm^{\alpha}$ first establish that $f(G) \ge \frac{m}{2} + c\cdot \frac{m}{\sqrt{d}}$ for all $H$-free graphs, and then case-working on the degeneracy. \cite{AL05}

In the case of $r=4$ one can use our arguments together with Alon's result on Max-Cut in triangle-free graphs to improve Theorem~\ref{thm:kr-lb} further to $m/2+cm/d^{2/3}$.
While Theorem~\ref{thm:kr-lb} gives nontrivial bounds for $K_r$-free graphs, we believe that a stronger statement is true and propose the following conjecture.

\begin{conjecture}\label{conj:opt}
For any graph $H$, there exists a constant $c=c(H)>0$ such that, for all $H$-free $d$-degenerate graphs with $m\ge 1$ edges,
\begin{align}
	f(G)\ge \left(\frac{1}{2} + \frac{c}{\sqrt{d}}\right)m.
  \label{eq:conj_opt}
\end{align}
\end{conjecture}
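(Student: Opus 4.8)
The conjecture is open; we describe the line of attack we find most promising and where it appears to get stuck. The first step is a reduction to forbidden cliques. If $H$ has $r = |V(H)|$ vertices, then $H$ is a subgraph of $K_r$, so any graph that contains $K_r$ also contains $H$; hence every $H$-free graph is $K_r$-free, and a bound of the form~\eqref{eq:conj_opt} for $K_r$-free graphs implies the same bound, with the same constant, for $H$-free graphs. It therefore suffices to prove: for every $r \ge 3$ there is $c(r) > 0$ such that $f(G) \ge \left(\tfrac12 + c(r)/\sqrt d\right)m$ for all $K_r$-free $d$-degenerate graphs $G$ with $m$ edges. The case $r = 3$ is already Theorem~\ref{thm:kr-lb} (equivalently, Corollary~\ref{cor:sdp-2} with $t = 0$ and $\varepsilon = 1/\sqrt d$), so the real content is $r \ge 4$, and ideally one wants a single argument that is uniform in $r$.

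The plan is to induct on $r$ through Theorem~\ref{thm:sdp}. Fix a degeneracy ordering $1,\dots,n$, let $N^-(i)$ be the set of back-neighbors of $i$ (so $|N^-(i)| \le d$ and $\sum_i |N^-(i)| = m$), and observe that since $i$ is adjacent to all of $N^-(i)$, the induced subgraph $G[N^-(i)]$ is $K_{r-1}$-free. Each triangle of $G$ has a unique top vertex $i$ and corresponds to an edge of $G[N^-(i)]$, so the triangle count of $G$ equals $\sum_i |E(G[N^-(i)])|$. If every back-neighborhood is sparse --- say $|E(G[N^-(i)])| \le \delta\, d\,|N^-(i)|$ for a small $\delta = \delta(r)$ --- then the number of triangles is at most $\delta d m$, and Corollary~\ref{cor:sdp-2} with $\varepsilon$ a small multiple of $1/\sqrt d$ already yields surplus $\Omega(m/\sqrt d)$. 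The hard regime is when many back-neighborhoods are dense.

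To handle the dense regime one should exploit that a dense $K_{r-1}$-free graph has, by the inductive hypothesis, a large cut, and feed this structure into the vectors underlying Theorem~\ref{thm:sdp}. Concretely, instead of taking $V_i = N^-(i)$ with pairwise-orthonormal ``indicator'' directions for its elements --- precisely the choice that produces the lossy term $\varepsilon^2 t$ --- one would attach to the elements of $N^-(i)$ the recursively constructed near-optimal SDP vectors for $G[N^-(i)]$, so that the subtracted quantity $\sum_{(i,j)\in E}\varepsilon_i\varepsilon_j|V_i\cap V_j|$ is replaced by something governed by the \emph{cut} structure of the neighborhoods rather than their edge counts. One would then re-run the Goemans--Williamson rounding, checking that the composite SDP solution retains the special properties discussed after Theorem~\ref{thm:sdp} that limit the rounding loss to a constant factor; choosing $\varepsilon_i$ adaptively as a function of the local density of $N^-(i)$ should interpolate between the sparse and dense regimes.

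The main obstacle --- and the reason Theorem~\ref{thm:kr-lb} reaches only the exponent $1 - 1/(2r-4)$, and even for $r = 4$ (using Alon's bound for triangle-free graphs) only $d^{2/3}$ in the denominator rather than the conjectured $d^{1/2}$ --- is that the back-neighborhoods $N^-(i)$ overlap in an essentially arbitrary way, so it is unclear how to build one globally consistent family of vectors that simultaneously realizes the recursive gain inside every dense $N^-(i)$ while keeping the pairwise inner products bounded along all edges of $G$ at once. Phrased in terms of triangles: one must charge the up to $\Theta(dm)$ triangles of $G$ against the surplus far more efficiently than the uniform estimate $\varepsilon^2 t$ allows, using that each triangle lies inside a $K_{r-1}$-free neighborhood that itself has a large cut; turning this from a vertex-by-vertex accounting into a global one, and controlling the interaction between edges internal to a dense neighborhood and edges leaving it, is the crux of the conjecture.
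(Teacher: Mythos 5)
This statement is Conjecture~\ref{conj:opt}; the paper does not prove it, and in fact states in Section~\ref{sec:conclusion} that even a surplus of $m/d^{1-\delta}$ for a fixed $\delta$ independent of $H$ is out of reach of current techniques. You correctly recognize this and offer a strategy rather than a proof, so there is no proof in the paper to compare against; I will only comment on the strategy. Your reduction to $H=K_r$ is sound and is the same one the paper uses in Appendix~\ref{sec:conj}, and your sparse/dense dichotomy on back-neighborhoods is essentially the decomposition the paper already implements in Lemmas~\ref{lem:lb-3}, \ref{lem:lb-4} and \ref{lem:general}: peel off dense back-neighborhoods (which are $K_{r-1}$-free, inviting recursion) until the remainder is triangle-sparse enough for Corollary~\ref{cor:sdp-2}. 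The reason that machinery stalls at exponent $1-1/(2r-4)$ is the one you identify: the gain available inside a neighborhood $G_i$ on at most $d$ vertices is only of order $m(G_i)^a$ with $a<1$, which falls short of the needed $m(G_i)/\sqrt{d}$, and it is not known how to couple the neighborhood solutions into one globally consistent SDP vector family.

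One quantitative point in your sketch is wrong and misstates where the hard regime begins. You claim that if every back-neighborhood satisfies $|E(G[N^-(i)])|\le \delta d\,|N^-(i)|$ for a small constant $\delta=\delta(r)$, then Corollary~\ref{cor:sdp-2} already yields surplus $\Omega(m/\sqrt{d})$. It does not: that hypothesis only gives $t\le \delta d m$, and with $\varepsilon=c_0/\sqrt{d}$ the correction term $\varepsilon^2 t/2$ is of order $\delta c_0^2 m$, which swamps the gain $c_0 m/(4\pi\sqrt{d})$ unless $\delta=O(1/\sqrt{d})$; optimizing $\varepsilon$ instead only recovers a surplus of order $m/(\delta d)$, i.e., the trivial bound. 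The correct threshold, stated in the paper's introduction, is $t=O(m\sqrt{d})$, i.e., back-neighborhoods of average internal degree $O(\sqrt{d})$ rather than $O(\delta d)$. Consequently the regime your recursion must handle is not merely ``constant-density'' neighborhoods but everything of density above roughly $\sqrt{d}/d$, which is most of the range; this is consistent with your own closing assessment that the crux is charging the up to $\Theta(dm)$ triangles against the surplus far more efficiently than the uniform $\varepsilon^2 t$ estimate allows.
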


Our Theorem~\ref{thm:sdp} implies this conjecture for various graphs $H$, e.g.,  $K_{2,s}, K_{3,s}, C_{r}$ and for any graph $H$ which contains a vertex whose deletion makes it acyclic. 
This was already observed in~\cite{AL05} using the weaker, locally triangle-sparse form of Corollary~\ref{cor:sdp-2} described earlier. 

Conjecture~\ref{conj:opt} provides a natural route to proving a closely related conjecture proposed by Alon, Bollob{\'a}s, Krivelevich, and Sudakov~\cite{AL03}. 
\begin{conjecture}[\cite{AL03}]
\label{conj:abks}
For any graph $H$, there exists constants $\varepsilon=\varepsilon(H)>0$ and $c=c(H)>0$ such that, for all $H$-free graphs with $m\ge 1$ edges,
\begin{align}
  f(G)\ge \frac{m}{2} + cm^{3/4+\varepsilon}.
\end{align}
\end{conjecture}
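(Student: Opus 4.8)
The plan is to follow the degeneracy-reduction route suggested above: use the semidefinite-programming bounds of this paper for $H$-free graphs of small degeneracy, handle graphs of large degeneracy by a structural argument, and balance the threshold between the two regimes. Write $h=|V(H)|$. Since an $H$-free graph is automatically $K_h$-free, Theorem~\ref{thm:kr-lb}, and conjecturally the sharper Conjecture~\ref{conj:opt}, applies to every $H$-free graph, while the Edwards bound \cite{E73} always gives a baseline surplus of order $\sqrt m$.

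Fix a threshold $d_0$, to be taken as a small power of $m$. If $G$ has degeneracy at most $d_0$, then Theorem~\ref{thm:kr-lb} with $r=h$ gives
\begin{align}
  f(G)-\frac m2 \ \ge\ \frac{c_H\, m}{d_0^{\,1-1/(2h-4)}},
\end{align}
a power of $m$ strictly above $\tfrac12$ that grows as $d_0$ shrinks (under Conjecture~\ref{conj:opt} the denominator improves to $\sqrt{d_0}$). Used on its own with $d_0$ as large as the trivial bound $O(\sqrt m)$ on the degeneracy permits, this recovers only an exponent $3/4$ (and, via Theorem~\ref{thm:kr-lb} alone, less), so the point of the complementary case is to license a smaller choice of $d_0$.

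If $G$ has degeneracy greater than $d_0$, it contains a subgraph $G'$ of minimum degree exceeding $d_0$; then $G'$ has more than $d_0$ vertices and more than $\tfrac12 d_0^2$ edges, and $G'$ is still $H$-free. Since any cut of $G'$ extends to a cut of $G$ that captures, in expectation, half the edges outside $G'$, it suffices to produce a cut of $G'$ with surplus well above $m^{3/4}$. Here I would split once more. If $G'$ (or a further subgraph of it) is \emph{dense}, with $\Theta(|V(G')|^2)$ edges, then by the stability theorem of \Erdos and Simonovits $G'$ lies within $o(|V(G')|^2)$ edges of the Turán graph $T(|V(G')|,\chi(H)-1)$, whose maximum cut has surplus $\Omega(|V(G')|^2)$; hence $G'$ has surplus $\Omega(d_0^2)$, a large power of $m$. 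Otherwise $G'$ is locally sparse enough that either Corollary~\ref{cor:sdp-2} applies to $G'$ with little loss --- giving surplus $\Omega(e(G')/\sqrt{d_0})$ --- or one runs an induction on $h$: fixing $v^\ast\in V(H)$, the neighbourhood $G[N(v)]$ of any vertex $v$ is $(H-v^\ast)$-free on $\deg(v)$ vertices, so the inductive hypothesis for $H-v^\ast$ (with $h-1$ vertices, base case $h=3$ being Alon's triangle-free theorem \cite{AL96}, which gives surplus $\Omega(m^{4/5})$, comfortably above $m^{3/4}$) yields a good cut of $G[N(v)]$ for high-degree $v$. Balancing $d_0$ against whichever bound applies then gives surplus $\Omega(m^{3/4+\varepsilon})$ for some $\varepsilon=\varepsilon(H)>0$, with the induction arranged so that the exponent never drops below $3/4$ as $h$ grows --- which is precisely why the base case must be Alon's $m^{4/5}$ rather than the Edwards $m^{1/2}$.

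The main obstacle is the large-degeneracy case, and this is exactly why Conjecture~\ref{conj:abks} is still open while Conjecture~\ref{conj:opt} would already yield the exponent $3/4$ on the nose. One needs the dichotomy ``dense versus locally sparse'' on the minimum-degree subgraph $G'$ to be genuinely exhaustive and lossless: a version of stability (with a companion Max-Cut estimate for graphs merely \emph{close} to complete multipartite) robust throughout the dense range $e(G')=\Theta(|V(G')|^2)$, not only near the Turán threshold, together with a treatment of intermediate graphs that are neither clearly dense nor clearly locally sparse. In the inductive route the crux is instead the assembly step: a good cut of one neighbourhood $G[N(v)]$ controls only the $\binom{\deg v}{2}$-scale edges inside it, a vanishing fraction of $m$, so one must aggregate the surpluses coming from many neighbourhoods simultaneously and without double counting --- essentially a judicious-partition statement. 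It seems plausible that combining Theorem~\ref{thm:sdp}, which already avoids the usual loss in SDP rounding, with such an aggregation is the right way to push the exponent past $3/4$; this would also be a natural first step toward Conjecture~\ref{conj:opt} and the broader program of \cite{AL05, AL03}.
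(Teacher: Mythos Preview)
This statement is Conjecture~\ref{conj:abks}, an open problem; the paper does not prove it unconditionally. What the paper does establish is Theorem~\ref{thm:conj} in Appendix~\ref{sec:conj}: Conjecture~\ref{conj:opt} implies Conjecture~\ref{conj:abks}. Your proposal is therefore aiming higher than the paper itself, and you correctly flag that the high-degeneracy case is the real obstruction and the reason the conjecture remains open.

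Your high-level framework --- threshold on the degeneracy $d_0$, use a $d$-degenerate bound below and a dense-subgraph argument above --- matches the paper's conditional proof of Theorem~\ref{thm:conj}. There the threshold is $d_0 = m^{1/2-\varepsilon/4}$, and in the low-degeneracy case Conjecture~\ref{conj:opt} gives surplus $cm/\sqrt{d_0}=cm^{3/4+\varepsilon/8}$; you rightly note that Theorem~\ref{thm:kr-lb} alone is not strong enough here.

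Where you and the paper diverge is the dense case. You propose either \Erdos--Simonovits stability when $G'$ is near-extremal, or an induction passing to vertex neighbourhoods, and you identify the gaps in each: stability only controls graphs near the Tur\'an threshold, not merely with $e(G')=\Theta(|V(G')|^2)$, and the neighbourhood route needs an aggregation step to combine surpluses across many neighbourhoods without loss. The paper's mechanism (Lemmas~\ref{lem:conj-color} and~\ref{lem:conj-induct}) is different from both. Given a dense $K_{r+1}$-free subgraph, it iteratively strips vertices lying in few $K_r$'s (feeding those into an induction on $r$); on what remains, every vertex lies in many $K_r$'s, and the paper picks a random set $T$ of $t\approx n^{\varepsilon}$ vertices, colours each surviving vertex by an $(r{-}1)$-subset of $T$ with which it forms a $K_r$, and shows (after deleting a negligible set of bad edges) that this is a proper colouring with $\binom{t}{r-1}$ colours of a subgraph carrying a constant fraction of the edges. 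Lemma~\ref{lem:mcchi} then yields surplus $\Omega(m^{1-\delta})$. This random-subset-colouring argument, adapted from~\cite{AL96}, handles the dense regime uniformly and sidesteps both the stability restriction and the aggregation problem you raise.

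So your outline is a reasonable sketch of the \emph{conditional} implication, but the dense-case ideas you propose are genuinely incomplete, as you acknowledge, and the paper's route through that case is quite different from either of them. An unconditional proof of Conjecture~\ref{conj:abks} is not attempted in the paper and remains open.
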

Since every graph with $m$ edges is obviously $\sqrt{2m}$-degenerate, the Conjecture \ref{conj:opt} implies immediately a weaker form of Conjecture~\ref{conj:abks} with surplus of order $m^{3/4}$. With some extra technical work (see Appendix~\ref{sec:conj}) we can show that it actually implies the full conjecture, achieving a surplus of $m^{3/4+\varepsilon}$ for any graph $H$.
For many graphs $H$ for which Conjecture~\ref{conj:abks} is known, \eqref{eq:conj_opt} was implicitly established for $H$-free graphs \cite{AL05}, making Conjecture~\ref{conj:opt} a plausible stepping stone to Conjecture~\ref{conj:abks}. 
As further evidence of the plausibility of Conjecture~\ref{conj:opt}, we show that Conjecture~\ref{conj:abks} implies a weaker form of Conjecture~\ref{conj:opt}, namely that any $H$-free graph has Max-Cut $\frac{m}{2} + cm \cdot d^{-5/7}$.
Using similar techniques, we can obtain nontrivial, unconditional results on the Max-Cut of $d$-degenerate $H$-free graphs for particular graphs $H$.
See Appendix~\ref{app:gen} for a table of results and proofs.

Conjecture~\ref{conj:opt}, if true, gives a surplus of $\Omega(\frac{m}{\sqrt{d}})$ that is optimal up to a multiplicative constant factor for every fixed graph $H$ which contains a cycle. To see this, consider an Erd\H{o}s-R\'enyi
random graph $G(n,p)$ with $p=n^{-1+\delta}$. Using standard Chernoff-type estimates, one can easily show that with high probability that this graph is $O(np)$-degenerate and its Max-Cut has size at most $\frac{1}{4}\binom{n}{2}p+O(n\sqrt{np})$. Moreover, if $\delta=\delta(H)>0$ is small enough, then with high probability $G(n,p)$ contains only very few copies of $H$ which can be destroyed by deleting few vertices, without changing the degeneracy and surplus of the Max-Cut (see Appendix~\ref{app:ub}).

\section{Lower bounds for Max-Cut using SDP}
\label{sec:general}
In this section we give a lower bound for $f(G)$ in graphs with few triangles, showing Theorem~\ref{thm:sdp}.
To prove this result, we make heavy use of the SDP relaxation of the $\MC$ problem, formulated below for a graph $G = (V, E)$:
\begin{align}
\text{maximize}\qquad & \sum_{(i,j)\in E}^{} \frac{1}{2}(1-\ab{v\ind{i},v\ind{j}})\nonumber\\
\text{subject to}\qquad & \|v\ind{i}\|^2=1\,\forall i\in V.
\label{eq:sdp}
\end{align}
We leverage the classical Goemans-Williamson \cite{GW95} rounding algorithm which that gives an integral solution from a vector solution to the $\MC$ SDP.

\begin{proof}[Proof of Theorem~\ref{thm:sdp}]
	For $i\in[n]$, define $\tilde v\ind{i}\in\mathbb{R}^n$ by
	\begin{align}
	\tilde v\ind{i}_j \ &= \ \left\{
	\begin{tabular}{ll}
	1 & $i=j$\\
	$-\varepsilon_i$ & $j \in V_i$\\
	0 & otherwise.\\
	\end{tabular}
	\right..
	\end{align}
	For $i\in[n]$, let $v\ind{i}\defeq \frac{\tilde v\ind{i}}{\|\tilde v\ind{i}\|}\in\mathbb{R}^n$.
	Then $1\le \|\tilde v\ind{i}\|\le 1 + \varepsilon_i^2 \vert V_i \vert \le 2$ for all $i$.
	For each edge $(i,j)$ with $i \in V_j$, we have
	\begin{align}
	v_i^{(i)} v_i^{(j)} = \frac{1}{\|\tilde v\ind{i}\|}\cdot \frac{-\varepsilon_j}{\|\tilde v\ind{j}\|} \le \frac{-\varepsilon_j}{4}.
	\end{align} 
	For $k \in V_i \cap V_j$, we have $v_k^{(i)} v_k^{(j)}\le \varepsilon_i \varepsilon_j$.
	For $k\not \in \{i,j\}\cup (V_i \cap V_j)$, we have $v_k^{(i)} v_k^{(j)}=0$ as $v_k\ind{i} = 0$ or $v_k\ind{j} = 0$.
	Thus, for all edges $(i,j)$,
	\begin{align}
	\ab{v\ind{i},v\ind{j}}\le -\frac{\varepsilon_i}{4}\mathbbm{1}_{V_j}(i) -\frac{\varepsilon_j}{4}\mathbbm{1}_{V_i}(j) + |V_i\cap V_j| \varepsilon_i \varepsilon_j.
	\label{}
	\end{align}
	
	Here, $\mathbbm{1}_S(i)$ is 1 if $i\in S$ and 0 otherwise.
	Vectors $v\ind{1},\dots,v\ind{n}$ form a vector solution to the SDP~\eqref{eq:sdp}.
	We now round this solution using the Goemans-Williamson \cite{GW95} rounding algorithm.
	Let $w$ denote a uniformly random unit vector, $A=\{i\in[n]:\ab{v\ind{i},w}\ge 0\}$, and $B=[n]\setminus A$.
	Note that the angle between vectors $v\ind{i},v\ind{j}$ is equal to $\cos^{-1}(\ab{v\ind{i},v\ind{j}})$, so the probability an edge $(i,j)$ is cut is
	\begin{align}
	\Pr[(i,j)\text{ cut}]
	\ &= \  \frac{\cos^{-1}(\ab{v\ind{i},v\ind{j}})}{\pi} \nonumber\\
	\ &= \ \frac{1}{2} - \frac{\sin^{-1}(\ab{v\ind{i},v\ind{j}})}{\pi} \nonumber\\  
	\ &\ge \ \frac{1}{2} - \frac{1}{\pi}\sin^{-1}\left(|V_i\cap V_j| \varepsilon_i \varepsilon_j -\frac{\varepsilon_i}{4}\mathbbm{1}_{V_j}(i) -\frac{\varepsilon_j}{4}\mathbbm{1}_{V_i}(j) \right) \nonumber\\  
	\ &\ge \ \frac{1}{2} - \frac{1}{\pi} \cdot \left(
	\frac{\pi}{2}\cdot |V_i\cap V_j| \varepsilon_i \varepsilon_j -\frac{\varepsilon_i}{4}\mathbbm{1}_{V_j}(i) -\frac{\varepsilon_j}{4}\mathbbm{1}_{V_i}(j)
	\right) \nonumber\\  
	\ &= \ \frac{1}{2} + \frac{\varepsilon_i}{4\pi}\mathbbm{1}_{V_j}(i) + \frac{\varepsilon_j}{4\pi}\mathbbm{1}_{V_i}(j) - \frac{|V_i\cap V_j| \varepsilon_i \varepsilon_j}{2}.\nonumber
	\end{align}
	In the last inequality, we used that, for $a,b\in[0,1]$, we have $\sin^{-1}(a-b)\le \frac{\pi}{2}a - b$. 
	This is true as $\sin^{-1}(x)\le \frac{\pi}{2}x$ when $x$ is positive and $\sin^{-1}(x)\le x$ when $x$ is negative.
	Thus, the expected size of the cut given by $A\sqcup B$ is, by linearity of expectation,
	\begin{align}
	\sum_{(i,j)\in E}^{} \Pr[(i,j)\text{ cut}]
	\ &\ge \ \sum_{\substack{ (i,j) \in E \\ i < j}}
	\left(\frac{1}{2} + \frac{\varepsilon_i}{4}
	\mathbbm{1}_{V_j}(i) 
	+ \frac{\varepsilon_j}{4}
	\mathbbm{1}_{V_i}(j) 
	- \frac{|V_i\cap V_j| \varepsilon_i \varepsilon_j}{2} \right) \nonumber\\
	\ &= \  \frac{m}{2} + \sum_{i=1}^n \frac{ \vert V_i \vert \varepsilon_i }{4 \pi}- \sum_{(i,j) \in E} \frac{ |V_i\cap V_j| \varepsilon_i \varepsilon_j}{2}.
	\qedhere
	\end{align}
\end{proof}

In the proof of Theorem~\ref{thm:kr-lb} we use the following consequence of Corollary \ref{cor:sdp-2}.
\begin{corollary}
	\label{cor:lb-2}
	There exists an absolute constant $c>0$ such that the following holds.
	For all $d\ge 1$ and  $\varepsilon\le\frac{1}{\sqrt{d}}$, if a $d$-degenerate graph $G = (V, E)$ has $m$ edges and at most $\frac{m}{8\varepsilon}$ triangles then
	\begin{align}
	f(G) \ge \left(\frac{1}{2} + c \varepsilon\right)  \cdot m.
	\label{}
	\end{align}
\end{corollary}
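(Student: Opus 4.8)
The plan is to simply invoke Corollary~\ref{cor:sdp-2} and absorb the triangle term into the error using the hypothesis $t \le \frac{m}{8\varepsilon}$. First I would note that the hypotheses of Corollary~\ref{cor:sdp-2} are met verbatim: $G$ is $d$-degenerate with $m$ edges and some number $t$ of triangles, and $\varepsilon \le \frac{1}{\sqrt d}$. Hence
\begin{align}
f(G)\ \ge\ \frac{m}{2} + \frac{\varepsilon m}{4\pi} - \frac{\varepsilon^2 t}{2}.
\end{align}

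Next I would bound the last term using $t \le \frac{m}{8\varepsilon}$, which gives $\frac{\varepsilon^2 t}{2} \le \frac{\varepsilon^2}{2}\cdot\frac{m}{8\varepsilon} = \frac{\varepsilon m}{16}$, and therefore
\begin{align}
f(G)\ \ge\ \frac{m}{2} + \frac{\varepsilon m}{4\pi} - \frac{\varepsilon m}{16}\ =\ \frac{m}{2} + \left(\frac{1}{4\pi} - \frac{1}{16}\right)\varepsilon m.
\end{align}
Finally I would observe that $\frac{1}{4\pi} - \frac{1}{16} = \frac{4-\pi}{16\pi} > 0$ since $\pi < 4$, so the claim holds with the absolute constant $c \defeq \frac{4-\pi}{16\pi}$ (numerically about $0.017$); of course any smaller positive constant works as well.

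There is no real obstacle here: the corollary is a direct specialization of Corollary~\ref{cor:sdp-2}. The only point worth flagging is that the argument only works because the constant $\frac{1}{4\pi}$ produced by the Goemans--Williamson rounding strictly exceeds the constant $\frac{1}{16}$ coming from the permitted triangle budget; the factor $\frac18$ in the bound $t\le \frac{m}{8\varepsilon}$ is exactly what leaves slack for this, and one could trade it off against a larger (but still positive) constant $c$ if desired.
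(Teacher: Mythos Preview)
Your proof is correct and is exactly the intended derivation: the paper states Corollary~\ref{cor:lb-2} as a direct consequence of Corollary~\ref{cor:sdp-2} without giving details, and your computation fills in precisely those details, yielding $c=\frac{4-\pi}{16\pi}\approx 0.017$ (consistent with the paper's later remark that one may take $c=\frac{1}{60}$).
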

 
 \section{Decomposition of degenerate graphs}
 
 In a graph $G=(V,E)$, let $n(G)$ and $m(G)$ denote the number of vertices and edges, respectively.
 For a vertex subset $V'\subset V$, let $G[V']$ denote the subgraph induced by $V'$.
 We show that $d$-degenerate graphs with few triangles have small subsets of neighborhoods with many edges. 
 \begin{lemma}[]
 	\label{lem:lb-3}
 	Let $d\ge 1$ and $\varepsilon>0$, and let $G=(V,E)$ be a $d$-degenerate graph with at least $\frac{m(G)}{\varepsilon}$ triangles. 
 	Then there exists a subset $V'$ of at most $d$ vertices with a common neighbor in $G$ such that the induced subgraph $G[V']$ has at least $\frac{|V'|}{\varepsilon}$ edges.
 \end{lemma}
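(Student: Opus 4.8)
The plan is to count the triangles of $G$ by grouping them according to their last vertex in a degeneracy ordering. Fix an ordering $v_1,\dots,v_n$ of $V$ such that each $v_i$ has at most $d$ neighbours among $v_1,\dots,v_{i-1}$, and for a vertex $v$ let $B(v)$ denote this set of earlier neighbours, so that $|B(v)|\le d$ for every $v$. The key observation is that a triple $\{a,b,c\}$ with $c$ last among the three in the ordering is a triangle of $G$ if and only if $a,b\in B(c)$ and $\{a,b\}\in E$, i.e.\ if and only if $\{a,b\}$ is an edge of the induced subgraph $G[B(c)]$. Consequently, the number of triangles of $G$ equals $\sum_{v\in V} m(G[B(v)])$. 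Charging each edge of $G$ to its later endpoint analogously gives $m(G)=\sum_{v\in V}|B(v)|$.

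Next I combine these two identities with the hypothesis. Since $G$ has at least $m(G)/\varepsilon$ triangles,
\[
  \sum_{v\in V} m(G[B(v)]) \;\ge\; \frac{m(G)}{\varepsilon} \;=\; \sum_{v\in V}\frac{|B(v)|}{\varepsilon},
\]
so $\sum_{v\in V}\bigl(m(G[B(v)])-|B(v)|/\varepsilon\bigr)\ge 0$. The summands with $B(v)=\emptyset$ are zero, so there is a vertex $c$ with $B(c)\neq\emptyset$ and $m(G[B(c)])\ge |B(c)|/\varepsilon$; here one uses that we may assume $G$ has at least one edge, as otherwise the statement is vacuous. (Alternatively, take $c$ to be any vertex maximizing $m(G[B(v)])-|B(v)|/\varepsilon$.)

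Finally, set $V':=B(c)$. Then $|V'|\le d$; the vertex $c$ is a common neighbour of all of $V'$ by the definition of $B(c)$; and $m(G[V'])=m(G[B(c)])\ge |B(c)|/\varepsilon=|V'|/\varepsilon$, as required. I do not expect a genuine obstacle here beyond spotting the ``count each triangle at its top vertex'' device that turns the global triangle count into an averaging statement over back-neighbourhoods; the only point requiring a little care is ensuring the averaging step returns a vertex $c$ with nonempty back-neighbourhood, which is automatic because empty back-neighbourhoods contribute nothing to either side of the displayed inequality.
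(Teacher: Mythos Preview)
Your proof is correct and follows essentially the same approach as the paper's: fix a degeneracy ordering, assign each triangle to its last vertex so that the triangle count decomposes as $\sum_v m(G[B(v)])$ and the edge count as $\sum_v |B(v)|$, and then average to find a vertex whose back-neighbourhood works. Your treatment is slightly more careful than the paper's in noting that vertices with empty back-neighbourhood contribute zero to both sums, ensuring the averaging picks a $c$ with $B(c)\neq\emptyset$.
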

 \begin{proof}
 	Since $G$ is $d$-degenerate, we fix an ordering $1,\dots,n$ of the vertices such that $d_<(i) \le d$ for all $i\in[n]$, where $d_<(i)$ denotes the number of neighbors $j<i$ of $i$. Then, if $t_<(i)$ denotes the number of triangles $\{i,j,k\}$ of $G$ where $j,k < i$, we have
 	\begin{align}
 	\sum_{i}^{} t_<(i)
 	\ = \ t(G)
 	\ \ge \ \frac{m(G)}{\varepsilon}
 	\ = \ \sum_{i=1}^{n} \frac{d_<(i)}{\varepsilon}. 
 	\label{}
 	\end{align}
 	Hence, there must exist some $i$ such that $t_<(i) \ge \frac{d_<(i)}{\varepsilon}$.
 	Let $V'$ denote the neighbors of $i$ with index less than $i$.
 	By definition, the vertices of $V'$ have common neighbor $i$.
 	Additionally, $G[V']$ has at least $\frac{d_<(i)}{\varepsilon}$ edges and $d_<(i)\le d$ vertices, proving the lemma.
 \end{proof}

 We use this lemma to partition the vertices of any $d$-degenerate graph in a useful way.
 
 \begin{lemma}
 	\label{lem:lb-4}
 	Let $\varepsilon > 0$.
 	Let $G = (V, E)$ be a $d$-degenerate graph on $n$ vertices with $m$ edges.
 	Then there exists a partition $V_1,\dots,V_{k+1}$ of the vertex set $V$ with the following properties.
 	\begin{enumerate}
 		\item For $i=1,\dots,k$, the vertex subset $V_i$ has at most $d$ vertices and has a common neighbor, and the induced subgraph $G[V_i]$ has at least $\frac{|V_i|}{\varepsilon}$ edges. 
 		\item The induced subgraph $G[V_{k+1}]$ has at most $\frac{m(G[V_{k+1}])}{\varepsilon}$ triangles.
 	\end{enumerate}
 	\label{lem:technical}
 \end{lemma}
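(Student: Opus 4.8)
The plan is to iteratively peel off the subsets $V_1,V_2,\dots$ guaranteed by Lemma~\ref{lem:lb-3}, stopping once the remaining graph is triangle-sparse. Concretely, I would set $G_0 \defeq G$ and proceed greedily: having constructed $G_0,\dots,G_{j}$, if the current graph $G_j$ has at least $\frac{m(G_j)}{\varepsilon}$ triangles, then Lemma~\ref{lem:lb-3} (applied to $G_j$, which is still $d$-degenerate as an induced subgraph of a $d$-degenerate graph) produces a vertex subset $V_{j+1}$ with at most $d$ vertices, with a common neighbor in $G_j$, and with $e(G_j[V_{j+1}]) \ge \frac{|V_{j+1}|}{\varepsilon}$. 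Remove these vertices: $G_{j+1} \defeq G_j[V \setminus (V_1 \cup \dots \cup V_{j+1})]$. If instead $G_j$ has fewer than $\frac{m(G_j)}{\varepsilon}$ triangles, we stop, set $k \defeq j$, and let $V_{k+1}$ be the remaining vertex set.

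Two things need to be checked. First, the process terminates: each $V_{j+1}$ is nonempty (indeed it contains at least two vertices, since $G_j[V_{j+1}]$ has a positive number of edges, as $\frac{|V_{j+1}|}{\varepsilon} > 0$), so the number of vertices strictly decreases at each step and the process halts after at most $n$ steps, at which point the remaining graph is empty and vacuously satisfies property~2. Second, property~1 must hold \emph{in $G$}, not merely in $G_j$: but a common neighbor of $V_{j+1}$ in $G_j$ is also a common neighbor in the supergraph $G$, and $G[V_{j+1}]$ contains every edge of $G_j[V_{j+1}]$ (induced subgraphs of induced subgraphs), so $e(G[V_{j+1}]) \ge e(G_j[V_{j+1}]) \ge \frac{|V_{j+1}|}{\varepsilon}$, and $|V_{j+1}| \le d$ is immediate. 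Property~2 for $V_{k+1}$ is exactly the stopping condition, since $G[V_{k+1}] = G_k$ by construction. Finally, $V_1,\dots,V_{k+1}$ partition $V$ because at each step we remove the chosen vertices and $V_{k+1}$ collects whatever is left. If $k=0$ (i.e.\ $G$ itself already has few triangles), the partition is just $V_1 = V$ playing the role of $V_{k+1}$, which is fine.

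I do not expect a genuine obstacle here; the lemma is essentially a clean induction wrapped around Lemma~\ref{lem:lb-3}, and the only points requiring a moment's care are the termination argument (handled by the strict decrease in vertex count) and the observation that the ``common neighbor'' and ``many edges'' properties are monotone under passing from $G_j$ up to $G$. One small subtlety worth stating explicitly in the write-up: the common neighbor of $V_{j+1}$ lies in $V(G_j)$, which may include vertices later placed in some $V_{j'}$ with $j' > j+1$ or in $V_{k+1}$ — this is allowed, since the lemma only requires the common neighbor to exist in $G$, not to lie outside the part. So the proof is short; the substance is all in Lemma~\ref{lem:lb-3}.
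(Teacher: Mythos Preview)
Your proposal is correct and follows essentially the same approach as the paper: iteratively apply Lemma~\ref{lem:lb-3} to peel off dense neighborhood subsets until the remaining induced subgraph is triangle-sparse, then declare that remainder to be $V_{k+1}$. Your write-up is in fact more careful than the paper's on termination and on why the properties of $V_{j+1}$ in $G_j$ transfer to $G$ (indeed $G_j[V_{j+1}] = G[V_{j+1}]$ exactly, since $G_j$ is induced).
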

 \begin{proof}
 	We construct the partition iteratively.
 	Let $V_0^*=V$.
 	For $i\ge 1$, we partition the vertex subset $V_{i-1}^*$ into $V_i\sqcup V_i^*$ as follows.
 	If $G[V_{i-1}^*]$ has at least $\frac{m(G[V_{i-1}^*])}{\varepsilon}$ triangles, then by applying Lemma~\ref{lem:lb-3} to the induced subgraph $G[V_{i-1}^*]$, there exists a vertex subset $V_i$ with a common neighbor in $V_{i-1}^*$ such that $|V_i|\le d$ and the induced subgraph $G[V_i]$ has at most $\frac{|V_i|}{\varepsilon}$ edges.
 	In this case, let $V_i^*\defeq V_{i-1}^*\setminus V_i$.
 	Let $k$ denote the maximum index such that $V_k^*$ is defined, and let $V_{k+1}\defeq V_k^*$.
 	By construction, $V_1,\dots,V_k$ satisfy the desired conditions.
 	By definition of $k$, the induced subgraph $G[V_k^*]$ has at most $\frac{m(G[V_k^*])}{\varepsilon}$ triangles, so for $V_{k+1}=V_k^*$, we obtain the desired result.
 \end{proof}

 \subsection{Large Max-Cut from decompositions}
 \label{s:framework}
 For a $d$-degenerate graph $G=(V,E)$, in a partition $V_1,\dots,V_{k+1}$ of $V$ given by Lemma~\ref{lem:lb-4}, the induced subgraph $G[V_{k+1}]$ has few triangles, and thus, by Corollary~\ref{cor:sdp-2}, has a cut with good surplus. This allows us to obtain the following technical result regarding the Max-Cut of $H$-free $d$-degenerate graphs.

 \begin{lemma} 
 	\label{lem:general} 
 	There exists an absolute constant $c>0$ such that the following holds.
 	Let $H$ be a graph and $H'$ be obtained by deleting any vertex of $H$.
 	Let $0<\varepsilon<\frac{1}{\sqrt{d}}$.
 	For any $H$-free $d$-degenerate graph $G=(V,E)$, one of the following holds:
 	\begin{itemize}
 		\item We have
 		\begin{align}
 		f(G) \ge \left( \frac{1}{2} + c\varepsilon \right)m.
 		\label{e:cond1}
 		\end{align}
 		\item There exist graphs $G_1,\dots,G_k$ such that five conditions hold: (i) graphs $G_i$ are $H'$-free for all $i$, (ii) $n(G_i)\le d$ for all $i$, (iii) $m(G_i) \ge \frac{n(G_i)}{8\varepsilon}$ for all $i$, (iv) $n(G_1)+\cdots+n(G_k)\ge \frac{m}{6d}$, and (v)
 		\begin{align}
 		f(G) \ge \frac{m(G)}{2} + \sum_{i=1}^{k} \left(f(G_i) - \frac{m(G_i)}{2}\right).
 		\label{e:cond2}
 		\end{align}
 	\end{itemize}
 \end{lemma}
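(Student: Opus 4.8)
The plan is to feed $G$ into the decomposition of Lemma~\ref{lem:lb-4}, but with its parameter set to $8\varepsilon$ instead of $\varepsilon$, obtaining a partition $V = V_1 \sqcup \cdots \sqcup V_k \sqcup V_{k+1}$ in which each $V_i$ ($i\le k$) spans at least $|V_i|/(8\varepsilon)$ edges and has a common neighbour, while $G[V_{k+1}]$ has at most $m(G[V_{k+1}])/(8\varepsilon)$ triangles. I would take $G_i := G[V_i]$ as the candidate graphs. Conditions (ii) and (iii) are then immediate from Lemma~\ref{lem:lb-4}. For (i), I would use the common neighbour $u_i$ of $V_i$: if $G[V_i]$ contained a copy of $H' = H - v$, then sending $v$ to $u_i$ (which is adjacent to every vertex of $V_i$, hence to the image of every neighbour of $v$ in $H$) embeds $H$ into $G$, contradicting $H$-freeness, so $G[V_i]$ is $H'$-free. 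For (v) I would use the standard combining bound: cut each $G[V_i]$ ($i\le k$) optimally, place each vertex of $V_{k+1}$ on a uniformly random side, and then independently flip each block $V_i$ with probability $\tfrac12$; every edge inside some $V_i$ is then cut with its optimal frequency and every other edge is cut with probability $\tfrac12$, so $f(G) \ge \sum_{i\le k} f(G[V_i]) + \tfrac12\big(m - \sum_{i\le k} m(G[V_i])\big)$, which rearranges to \eqref{e:cond2}. Thus the entire content of the lemma is condition (iv) versus the first alternative.

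Write $R = V_1 \cup \cdots \cup V_k$ and $N = |R| = \sum_{i\le k} n(G_i)$. If $N \ge m/(6d)$, then condition (iv) holds as well and we are in the second alternative, so I may assume $N < m/(6d)$ and must prove $f(G) \ge (\tfrac12 + c\varepsilon)m$. The key first observation is that, since $G[R]$ is $d$-degenerate on $N$ vertices, $m(G[R]) \le dN < m/6$. Let $e^\star$ be the number of edges between $R$ and $V_{k+1}$ and $m' = m(G[V_{k+1}])$, so $m = m(G[R]) + e^\star + m'$. I would now split on $e^\star$. If $e^\star \ge \tfrac23 m$, the bipartition $(R,V_{k+1})$ is by itself a cut of size $e^\star \ge \tfrac12 m + \tfrac16 m$, and since $\varepsilon < 1/\sqrt d \le 1$ this already gives $f(G) \ge (\tfrac12 + c\varepsilon)m$ for any $c \le \tfrac16$. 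If $e^\star < \tfrac23 m$, then $m' = m - m(G[R]) - e^\star > \tfrac16 m$; now $G[V_{k+1}]$ is $d$-degenerate with $m'$ edges and at most $m'/(8\varepsilon)$ triangles, so Corollary~\ref{cor:lb-2} gives $f(G[V_{k+1}]) \ge (\tfrac12 + c_0\varepsilon)m'$ for the absolute constant $c_0$ of that corollary; extending this cut to $G$ by placing each vertex of $R$ uniformly at random cuts half of the remaining $m - m'$ edges in expectation, so $f(G) \ge (\tfrac12 + c_0\varepsilon)m' + \tfrac12(m - m') = \tfrac12 m + c_0\varepsilon m' > \tfrac12 m + \tfrac{c_0}{6}\varepsilon m$. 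Taking $c = c_0/6$ (which is at most $\tfrac16$) covers both sub-cases.

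The hard part, and the reason the two sub-cases are needed, is that a small $R$ does \emph{not} force $m'$ to be large: in a $d$-degenerate graph a tiny set of vertices can be incident to almost all the edges (think of a few high-degree hubs), so the bound $m(G[R]) \le dN$ controls only the edges inside $R$ and says nothing about the $R$--$V_{k+1}$ edges. The case split isolates precisely this scenario — when those crossing edges dominate, the crude cut $(R,V_{k+1})$ is already good enough, and otherwise $G[V_{k+1}]$ still carries a constant fraction of the edges and Corollary~\ref{cor:lb-2} applies. A minor but genuine point is keeping $c$ an absolute constant: I would deliberately use the $\varepsilon$-free threshold $\tfrac23 m$ for the split (rather than, say, $\tfrac12 m + \varepsilon m$) so that the argument never needs a lower bound on $d$ or an upper bound on $\varepsilon$ beyond $\varepsilon < 1/\sqrt d$.
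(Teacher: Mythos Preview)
Your proposal is correct and follows essentially the same approach as the paper: apply Lemma~\ref{lem:lb-4} with parameter $8\varepsilon$, verify (i)--(iii) and (v) directly, and then resolve (iv) versus~\eqref{e:cond1} via the three-way split $m = m(G[R]) + e^\star + m'$, using $m(G[R])\le dN$ together with Corollary~\ref{cor:lb-2} on $G[V_{k+1}]$. The only cosmetic difference is organizational: you branch first on whether $N \ge m/(6d)$ and then split the complementary case in two, whereas the paper derives the single combined bound $f(G)\ge \tfrac{m}{2} + c_1\varepsilon\, m_{k+1} + \sum_{i\le k}(f(G_i)-\tfrac{m_i}{2})$ up front and then runs the same three cases directly.
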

 \begin{proof}
 	Let $c_1 < 1$ be the parameter given by Corollary~\ref{cor:lb-2}.
 	Let $c = \frac{c_1}{6}$.
 	Let $G = (V, E)$ be a $d$-degenerate $H$-free graph. 
 	Applying Lemma~\ref{lem:technical} with parameter $8\varepsilon$, we can find a partition $V_1,\dots,V_{k+1}$ of the vertex set $V$ with the following properties.
 	\begin{enumerate}
 		\item For $i=1,\dots,k$, the vertex subset $V_i$ has at most $d$ vertices and has a common neighbor, and the induced subgraph $G[V_i]$ at least $\frac{|V_i|}{8\varepsilon}$ edges. 
 		\item The subgraph $G[V_{k+1}]$ has at most $\frac{m(G[V_{k+1}])}{8\varepsilon}$ triangles. 
 	\end{enumerate}
 	For $i=1,\dots,k+1$, let $G_i\defeq G[V_i]$ and let $m_i\defeq m(G_i)$.
 	For $i=1,\dots,k$, since $G$ is $H$-free and each $V_i$ is a subset of some vertex neighborhood in $G$, the graphs $G_i$ are $H'$-free.
 	For $i = 1, \ldots , k$, fix a maximal cut of $G_i$ with associated vertex partition $V_i = A_i \sqcup B_i$.
 	By the second property above, the graph $G_{k+1}$ has at most $\frac{m_{k+1}}{8\varepsilon}$ triangles. 
 	Applying Corollary~\ref{cor:lb-2} with parameter $\varepsilon$, we can find a cut of $G_{k+1}$ of size at least $(\frac{1}{2} + c_1\varepsilon)m_{k+1}$ with associated vertex partition $V_{k+1} = A_{k+1} \sqcup B_{k+1}$.
 	
 	We now construct a cut of $G$ by randomly combining the cuts obtained above for each $G_i$. 
 	Independently, for each $i=1,\dots,k+1$, we add either $A_i$ or $B_i$ to vertex set $A$, each with probability $\frac12$. 
 	Setting $B=V\setminus A$, gives a cut of $G$.
 	As $V_1,\dots,V_{k+1}$ partition $V$, each of the $m-(m_1+\cdots+m_{k+1})$ edges that is not in one of the induced graphs $G_1,\dots,G_{k+1}$ has exactly one endpoint in each of $A, B$ with probability $1/2$. This allows us to compute the expected size of the cut (a lower bound on $f(G)$ as there is some instantiation of this random process that achieves this expected size).
 	\begin{align}
 	\label{eq:general-1}
 	f(G) \ &\ge \  \frac{1}{2}(m - (m_1+\cdots+m_{k+1})) + \left( \frac{1}{2} + c_1\varepsilon \right) \cdot m_{k+1} + \sum_{i=1}^{k} f(G_i) \nonumber\\
 	\ &= \   \frac{m}{2} + c_1\varepsilon m_{k+1} + \sum_{i=1}^{k} \left( f(G_i) - \frac{m_i}{2} \right).
 	\end{align}
 	We bound~\eqref{eq:general-1} based on the distribution of edges in $G$ in $3$ cases:
 	\begin{itemize}
 		\item $m_{k+1}\ge \frac{m}{6}$.
 		Since $f(G_i)\ge \frac{m_i}{2}$ for all $i=1,\dots,k$, \eqref{e:cond1} holds:
 		\begin{align*}
 		f(G)
 		\ \ge \  \frac{m}{2} + c_1\varepsilon m_{k+1} 
 		\ \ge \  \left(\frac{1}{2} + c\varepsilon\right)\cdot m.
 		\end{align*}
 		\item The number of edges between $V_1\cup\cdots\cup V_{k}$ and $V_{k+1}$ is at least $\frac{2m}{3}$.
 		Then, the cut given by vertex partition $V = A' \sqcup B'$ with $A'=V_1\cup\cdots\cup V_k$ and $B'=V_{k+1}$ has at least $\frac{2m}{3}$ edges, in which case $f(G)\ge \frac{2m}{3}> (\frac{1}{2} + \frac{c_1\varepsilon}{6})\cdot m$, so~\eqref{e:cond1} holds.
 		\item $G' = G[V_1\cup\cdots\cup V_k]$ has at least $\frac{m}{6}$ edges. We show \eqref{e:cond2} holds. By construction, for $i=1,\dots,k$, the graph $G_i$ is $H'$ free, has at most $d$ vertices, and has at least $\frac{m_i}{8\varepsilon}$ edges.
 		Since $G$ is $d$-degenerate, $G'$ is as well, so
 		\begin{align}
 		\frac{m}{6} \le m(G') \le d\cdot n(G') = d\cdot \sum_{i = 1}^k n(G_i),
 		\end{align}
 		Hence $n(G_1)+\cdots+n(G_k)\ge \frac{m}{6d}$.
 		Lastly, by \eqref{eq:general-1}, we have
 		\begin{align*}
 		f(G) \ge \frac{m}{2} + \sum_{i=1}^{k} \left(f(G_i) - \frac{m_i}{2}\right).
 		\end{align*}
 	\end{itemize}
 	This covers all possible cases, and in each case we showed either~\eqref{e:cond1} or~\eqref{e:cond2} holds.
 \end{proof}
 \begin{remark}
 	In Corollary~\ref{cor:lb-2} we can take $c = \frac{1}{60}$, and in Lemma~\ref{lem:general} we can take $c = \frac{1}{360}$.
 \end{remark}

 Lemma~\ref{lem:general} allows us to convert Max-Cut lower bounds on $H$-free graphs to Max-Cut lower bounds on $H$-free $d$-degenerate graphs.
 \begin{lemma}
 	\label{lem:general-2}
 	Let $H$ be a graph and $H'$ be obtained by deleting any vertex of $H$.
 	Suppose that there exists constants $a=a(H')\in[\frac{1}{2}, 1]$ and $c'=c'(H')>0$ such that for all $H'$-free graphs $G$ with $m'\ge 1$ edges, $f(G) \ge \frac{m'}{2} + c'\cdot (m')^{a}$.
 	Then there exists a constant $c=c(H)>0$ such that for all $H$-free $d$-degenerate graphs $G$ with $m\ge 1$ edges,
 	\begin{align*}
 	f(G) \ge \left(\frac{1}{2} + cd^{-\frac{2-a}{1+a}}\right)\cdot m.
 	\end{align*}
 \end{lemma}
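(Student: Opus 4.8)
The plan is to invoke Lemma~\ref{lem:general} with a parameter $\varepsilon$ tuned as a function of $d$, and then patch together its two alternatives. Write $\beta := \tfrac{2-a}{1+a}$, so that the target surplus per edge is $c\,d^{-\beta}$, and set $\varepsilon := \lambda\, d^{-\beta}$ with $\lambda := \tfrac18$. Since $a\le 1$ we have $-\beta = \tfrac{a-2}{1+a}\le -\tfrac12$, so $\varepsilon \le \lambda\, d^{-1/2} < d^{-1/2} = 1/\sqrt d$ for every $d\ge1$; thus $\varepsilon$ lies in the admissible range of Lemma~\ref{lem:general}. The only properties of $\lambda$ that will matter are $\lambda<1$ (just used) and $\lambda\le\tfrac18$ (used below to force $m(G_i)\ge1$).

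First I would apply Lemma~\ref{lem:general} to $G$ with this $\varepsilon$, writing $c_2>0$ for its absolute constant. In the first alternative we are immediately done, as $f(G)\ge(\tfrac12+c_2\varepsilon)m=(\tfrac12+c_2\lambda\,d^{-\beta})m$. So suppose the second alternative holds: we obtain $H'$-free graphs $G_1,\dots,G_k$ with $n(G_i)\le d$, $m(G_i)\ge\tfrac{n(G_i)}{8\varepsilon}$, $\sum_i n(G_i)\ge\tfrac{m}{6d}$, and $f(G)\ge\tfrac m2+\sum_{i=1}^k\bigl(f(G_i)-\tfrac{m(G_i)}{2}\bigr)$. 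Discarding any $G_i$ with no vertices (they contribute nothing to the last sum and nothing to $\sum_i n(G_i)$), we may assume $n(G_i)\ge1$ for each $i$, and then $m(G_i)\ge\tfrac{1}{8\varepsilon}=\tfrac{d^{\beta}}{8\lambda}\ge\tfrac{1}{8\lambda}\ge1$. Hence the hypothesis of the lemma applies to each $G_i$, yielding $f(G_i)-\tfrac{m(G_i)}{2}\ge c'\,m(G_i)^a$.

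It then remains to lower bound $\sum_{i=1}^k m(G_i)^a$. Using that $x\mapsto x^a$ is increasing with $m(G_i)\ge\tfrac{n(G_i)}{8\varepsilon}$, then the bound $n(G_i)^a\ge n(G_i)\,d^{a-1}$ (valid since $1\le n(G_i)\le d$ and $a-1\le0$), and finally $\sum_i n(G_i)\ge\tfrac{m}{6d}$, I would estimate
\begin{align}
\sum_{i=1}^k m(G_i)^a \ \ge\ \frac{1}{(8\varepsilon)^a}\sum_{i=1}^k n(G_i)^a \ \ge\ \frac{d^{a-1}}{(8\varepsilon)^a}\sum_{i=1}^k n(G_i) \ \ge\ \frac{m}{6}\cdot\frac{d^{a-2}}{(8\varepsilon)^a}.
\end{align}
Substituting $\varepsilon=\lambda\,d^{-\beta}$ and using the identity $a-2+a\beta=-\beta$ (immediate from $\beta=\tfrac{2-a}{1+a}$), the right-hand side becomes $\tfrac{m}{6\cdot 8^a\lambda^a}\,d^{-\beta}$, so
\begin{align}
f(G)\ \ge\ \frac m2+c'\sum_{i=1}^k m(G_i)^a\ \ge\ \frac m2+\frac{c'}{6\cdot 8^a\lambda^a}\,m\,d^{-\beta}.
\end{align}
Finally, taking $c:=\min\!\bigl(c_2\lambda,\ \tfrac{c'}{6\cdot 8^a\lambda^a}\bigr)>0$ — which depends only on $H$, through $a=a(H')$ and $c'=c'(H')$, together with the absolute constant $c_2$ — makes both alternatives give $f(G)\ge(\tfrac12+c\,d^{-\beta})m$, as desired. (With $\lambda=\tfrac18$ one gets the concrete value $c=\min(c_2/8,\,c'/6)$, since $8\lambda=1$.)

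The only genuinely delicate point is the choice of exponent in $\varepsilon=\lambda\,d^{-\beta}$: it is forced by requiring the first-alternative surplus $c_2\varepsilon$ to scale in $d$ exactly like the second-alternative surplus $\tfrac{c'}{6\cdot 8^a}\varepsilon^{-a}d^{a-2}$, i.e.\ $\varepsilon^{1+a}\propto d^{a-2}$, hence $\varepsilon\propto d^{-(2-a)/(1+a)}$. Everything else is bookkeeping: the inequality $\varepsilon<1/\sqrt d$ (this is exactly where $a\le1$ enters), the bound $m(G_i)\ge1$ needed to apply the hypothesis to each $G_i$, and the exponent identity $a-2+a\beta=-\beta$.
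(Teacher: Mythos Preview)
Your proof is correct and follows essentially the same approach as the paper: both apply Lemma~\ref{lem:general} with $\varepsilon$ proportional to $d^{-(2-a)/(1+a)}$, dispatch the first alternative immediately, and in the second alternative bound $\sum_i m(G_i)^a$ via $m(G_i)\ge n(G_i)/(8\varepsilon)$, $n(G_i)^{a-1}\ge d^{a-1}$, and $\sum_i n(G_i)\ge m/(6d)$. Your choice of $\lambda=\tfrac18$ (versus the paper's $\lambda=c'$) and your explicit checks that $\varepsilon<1/\sqrt{d}$ strictly and that $m(G_i)\ge 1$ are minor bookkeeping differences that, if anything, make the argument slightly cleaner.
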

 \begin{proof}
 	Let $c_2$ be the parameter in Lemma~\ref{lem:general}.
 	We may assume without loss of generality that $c'\le 1$.
 	Let $G$ be a $d$-degenerate $H$-free graph.
 	Let $\varepsilon\defeq c'd^{-\frac{2-a}{1+a}} < d^{-1/2}$ and $c\defeq \min(c'c_2, \frac{c'}{48})$.
 	
 	Applying Lemma~\ref{lem:general} with parameter $\varepsilon$, either~\eqref{e:cond1} or~\eqref{e:cond2} holds.
 	If~\eqref{e:cond1} holds, then, as desired, $$f(G)\ge \left(\frac{1}{2} + c_2\varepsilon \right)m \ge \left(\frac{1}{2} + cd^{-\frac{2-a}{1+a}} \right)m.$$
 	Else~\eqref{e:cond2} holds. Let $G_1,\dots,G_{k+1}$ be the $H'$-free induced subgraphs satisfying the properties in Lemma~\ref{lem:general}, so that
 	\begin{align*}
 	f(G) 
 	\ &\ge \  \frac{m}{2} + \sum_{i=1}^{k} \left(f(G_i) - \frac{m(G_i)}{2}\right)  \nonumber\\
 	\ &\ge \   \frac{m}{2} + \sum_{i=1}^{k} c'\cdot m(G_i)^{a}.
 	\end{align*}
 	For all $i$, we have
 	\begin{align*}
 	c' \cdot m(G_i)^{a} \overset{(*)}\ge \frac{c'\varepsilon}{8\varepsilon^{1+a}}\cdot n(G_i)^{a}  \overset{(**)}\ge \frac{\varepsilon d}{8(c')^{a}}\cdot n(G_i)  \overset{(+)}\ge \frac{\varepsilon d}{8}\cdot n(G_i),
 	\end{align*}
 	where $(*)$ follows since $m(G_i)\ge \frac{n(G_i)}{8\varepsilon}$, $(**)$ follows since $n(G_i)^{a-1}\ge d^{a-1}$ and $\varepsilon^{1+a} = (c')^{1+a}d^{a-2}$, and $(+)$ follows since $c' \le 1$.
 	Hence, as $n(G_1)+\cdots+n(G_k)\ge \frac{m}{6d}$, we have
 	\begin{align*}
 	f(G) 
 	\ &\ge \   \frac{m}{2} + \varepsilon d \sum_{i=1}^{k}\frac{n(G_i)}{8}
 	\ \ge \   \frac{m}{2} + \frac{\varepsilon m}{48} 
 	\ \ge \ \left(\frac{1}{2} + cd^{-\frac{2-a}{1+a}}\right)\cdot m,
 	\end{align*}
 	as desired.
 \end{proof}

 \section{Max-Cut in $K_r$-free graphs}
 \label{sec:kr}
 In this section we specialize Lemmas~\ref{lem:general} and~\ref{lem:general-2} to the case $H = K_r$ to 
 prove Theorem~\ref{thm:kr-lb}.
 Let $\chi(G)$ denote the chromatic number of a graph $G$, the minimum number of colors needed to properly color the vertices of the graph so that no two adjacent vertices receive the same color. We first obtain a nontrivial upper bound on the chromatic number of a $K_r$-free graph $G$, giving an lower bound (Lemma~\ref{lem:kr-d}) on the Max-Cut of $K_r$-free graphs.
 This lower bound was implicit in \cite{AL03}, but we provide a proof for completeness. The lower bound on the Max-Cut of general $K_r$-free graphs enables us to apply Lemma~\ref{lem:general} to give a lower bound on the Max-Cut of $d$-degenerate $K_r$-free graphs per Theorem~\ref{thm:kr-lb}.
 The following well known lemma gives a lower bound on the Max-Cut using the chromatic number.
 \begin{lemma}(see e.g. Lemma 2.1 of \cite{AL03})\label{lem:mcchi}
 	Given a graph $G = (V, E)$ with $m$ edges and chromatic number $\chi(G) \le t$, we have $f(G) \ge (\frac12 + \frac{1}{2t}) m$.
 \end{lemma}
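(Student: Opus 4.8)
The plan is to prove Lemma~\ref{lem:mcchi} by the standard "random balanced merge of color classes" argument. Suppose $G$ has a proper coloring with color classes $C_1, \dots, C_t$ (padding with empty classes if $\chi(G) < t$, so we may assume exactly $t$ classes). Every edge of $G$ joins two vertices in distinct color classes, so if we produce a cut $V = A \sqcup B$ by independently assigning each entire color class $C_i$ to $A$ or to $B$ with probability $\tfrac12$ each, then for a fixed edge $(u,v)$ the endpoints lie in different classes $C_i \ne C_j$, and the events "$C_i \to A$" and "$C_j \to A$" are independent, so $(u,v)$ is cut with probability exactly $\tfrac12$. This alone only recovers $f(G) \ge m/2$; the point is to do slightly better by an averaging/derandomization step.

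The key step is to observe that a uniformly random assignment of the $t$ classes to two sides is the same as a uniformly random partition of $[t]$ into two (possibly empty) blocks, and for each edge we can instead count more carefully. A cleaner route: by Lemma's hypothesis we have a partition into $t$ independent sets; consider the complete graph $K_t$ on the color classes as ``super-vertices,'' and note that contracting each $C_i$ to a point gives a multigraph $\hat{G}$ on $t$ vertices with $m$ edges (loops impossible, since the coloring is proper), and any cut of $\hat G$ pulls back to a cut of $G$ of the same size. So it suffices to show any loopless multigraph on $t$ vertices with $m$ edges has a cut of size at least $(\tfrac12 + \tfrac{1}{2t})m$. For this, take a maximum cut $A \sqcup B$ of $\hat G$: by the local optimality (no vertex wants to switch sides), each vertex has at least half its incident edges crossing, and a short counting argument — sum the crossing-edge inequalities over one side, or use that the densest the ``uncut'' part can be is a union of two cliques on the two sides — yields that the number of uncut edges is at most $(1 - \tfrac1t)m$, equivalently the cut has size $\ge (\tfrac12 + \tfrac{1}{2t})m$. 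I would carry this out by noting the uncut edges form a subgraph on $t$ vertices that is (after the contraction) a subgraph of the disjoint union of two cliques $K_{|A|} \sqcup K_{|B|}$, whose edge density relative to $K_t$ is maximized when $|A|,|B|$ are as equal as possible, giving the $(1-1/t)$ bound; but the slicker version is just the random argument applied to $\hat G$ together with the observation that a random bipartition of $[t]$ cuts each edge of $\hat G$ with probability $\tfrac{1}{2}\cdot\tfrac{t}{t-1} = \tfrac{t}{2(t-1)} \ge \tfrac12 + \tfrac1{2t}$ when we condition on the two blocks being nonempty and of the multigraph-edge's endpoints landing in the appropriate relative position — so I would instead directly compute: pick a uniformly random function $\phi : [t] \to \{0,1\}$ conditioned on being non-constant; for an edge between super-vertices $i \ne j$, $\Pr[\phi(i) \ne \phi(j)] = \frac{2^{t-1}}{2^t - 2} \ge \frac12 + \frac1{2t}$, and linearity of expectation finishes it.

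Concretely, the cleanest writeup I would give: let $c : V \to [t]$ be a proper $t$-coloring. Choose $S \subseteq [t]$ uniformly at random among all subsets with $S \notin \{\emptyset, [t]\}$ (there are $2^t - 2$ such), and set $A = c^{-1}(S)$, $B = V \setminus A$. For any edge $(u,v)$, we have $c(u) \ne c(v)$, and $(u,v)$ is cut iff exactly one of $c(u), c(v)$ lies in $S$; among the $2^t - 2$ choices of $S$, the number that separate a fixed pair of elements of $[t]$ is $2 \cdot 2^{t-2} = 2^{t-1}$ (choose the side of the remaining $t-2$ colors freely, two ways to split the pair, and none of these $S$ is $\emptyset$ or $[t]$), so $\Pr[(u,v)\text{ cut}] = \frac{2^{t-1}}{2^t-2} = \frac{1}{2}\cdot\frac{2^{t-1}}{2^{t-1}-1} \ge \frac12 + \frac{1}{2t}$, where the last inequality holds because $\frac{N}{N-1} \ge 1 + \frac1N \ge 1 + \frac1t$ for $N = 2^{t-1} \le 2^t$... actually $2^{t-1} \ge t$ for $t \ge 1$, so $\frac{1}{2^{t-1}-1} \ge \frac{1}{2^{t-1}} \ge \frac1t$ fails direction-wise — I'd instead use $\frac{2^{t-1}}{2^{t-1}-1} = 1 + \frac{1}{2^{t-1}-1} \ge 1 + \frac{1}{t}$, which needs $2^{t-1} - 1 \le t$, i.e. $t \le 3$. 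So the binomial count is the wrong tool for large $t$; the main obstacle is exactly this, and the right fix is the max-cut-of-the-contraction argument: take a maximum cut of the contracted loopless multigraph $\hat G$ on $t$ vertices, and bound its uncut edges by the edge count of $K_{\lceil t/2\rceil} \sqcup K_{\lfloor t/2 \rfloor}$ — no wait, $\hat G$ need not be complete. The genuinely correct and standard argument is: among all balanced-as-possible assignments, average. Precisely, I would invoke that for the contracted multigraph $\hat G$ on vertex set $[t]$ with $m$ edges, a uniformly random ordered pair $(A,B)$ with $|A| = \lceil t/2 \rceil$ cuts each edge with probability $\frac{2\lceil t/2\rceil \lfloor t/2 \rfloor}{t(t-1)} \ge \frac{t}{2(t-1)} \ge \frac12 + \frac1{2t}$ (using $\lceil t/2\rceil\lfloor t/2\rfloor \ge \frac{t^2-1}{4} \cdot \frac{t}{t}$... for $t$ even it's $t^2/4$, for $t$ odd $(t^2-1)/4$; either way $\ge \frac{t(t-1)}{4}$ is false for $t$ even where we get $t^2/4 > t(t-1)/4$, good, and for $t$ odd $(t^2-1)/4 = (t-1)(t+1)/4 \ge (t-1)t/4$, good), so $\Pr \ge \frac{2}{t(t-1)}\cdot\frac{t(t-1)}{4}\cdot 2 = \frac{t}{2(t-1)} \ge \frac12 + \frac{1}{2t}$; then pull this cut back to $G$ via $c$ and apply linearity of expectation and the probabilistic method. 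The main obstacle is just getting this constant-chasing right; everything else is routine.

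\begin{proof}[Proof of Lemma~\ref{lem:mcchi}]
Fix a proper coloring $c \colon V \to [t]$ (if $\chi(G) < t$ some colors are unused; this is harmless). Choose an ordered partition $[t] = S \sqcup T$ uniformly at random among all such partitions with $|S| = \lceil t/2\rceil$ and $|T| = \lfloor t/2 \rfloor$, and set $A = c^{-1}(S)$, $B = c^{-1}(T)$. For any edge $(u,v) \in E$ we have $c(u) \neq c(v)$ since $c$ is proper, and $(u,v)$ is cut exactly when $c(u), c(v)$ lie on opposite sides of the partition $S \sqcup T$. For a fixed pair of distinct colors, the probability they are separated equals
\begin{align*}
\frac{2\,\lceil t/2\rceil\,\lfloor t/2\rfloor}{t(t-1)} \;\ge\; \frac{2 \cdot \frac{t(t-1)}{4}}{t(t-1)} \;=\; \frac12,
\end{align*}
and in fact $2\lceil t/2\rceil\lfloor t/2\rfloor \ge \frac{t^2}{2}$ when $t$ is even and $\ge \frac{t^2-1}{2}$ when $t$ is odd, so in both cases $2\lceil t/2\rceil\lfloor t/2\rfloor \ge \frac{t^2 - 1}{2} \ge \frac{(t-1)(t+1)}{2}$, hence the probability is at least
\begin{align*}
\frac{(t-1)(t+1)/2}{t(t-1)} \;=\; \frac{t+1}{2t} \;=\; \frac12 + \frac{1}{2t}.
\end{align*}
By linearity of expectation, the expected number of edges cut is at least $\left(\frac12 + \frac{1}{2t}\right)m$, so some choice of $S$ achieves a cut of at least this size, giving $f(G) \ge \left(\frac12 + \frac{1}{2t}\right)m$.
\end{proof}
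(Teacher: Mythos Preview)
Your final formal proof is correct and is essentially identical to the paper's: both take a proper $t$-coloring, partition the $t$ color classes uniformly at random into parts of sizes $\lceil t/2\rceil$ and $\lfloor t/2\rfloor$, and compute that each edge is cut with probability $\frac{2\lceil t/2\rceil\lfloor t/2\rfloor}{t(t-1)} \ge \frac{t+1}{2t}$. The lengthy exploratory preamble (random unconstrained assignment, conditioning on non-constant, the $K_{|A|}\sqcup K_{|B|}$ detour) is unnecessary and at points mistaken, as you yourself noticed; only the last paragraph is needed.
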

 \begin{proof}
 	Since $\chi(G) \le t$, we can decompose $V$ into independent subsets $V = V_1,\dots, V_t$. 
 	Partition the subsets randomly into two parts containing $\floor{\frac{t}{2}}$ and $\ceil{\frac{t}{2}}$ subsets $V_i$, respectively, to obtain a cut. 
 	The probability any edge is cut is $\frac{\floor{t/2}\cdot \ceil{t/2}}{\binom{t}{2}}\ge \frac{t+1}{2t}$, so the result follows from linearity of expectation.
 \end{proof}
 
 \begin{lemma}\label{lem:chikr}
 	Let $r\ge 3$ and $G = (V, E)$ be a $K_r$-free graph on $n$ vertices. Then, $$\chi(G) \le 4n^{(r-2)/(r-1)}.$$
 \end{lemma}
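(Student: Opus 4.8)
The plan is to derive the bound on $\chi(G)$ from a matching lower bound on the independence number of $K_r$-free graphs, which I would prove by induction on $r$, and then colour $G$ greedily by repeatedly deleting large independent sets.

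\emph{Step 1 (independence number).} First I would establish that every $K_r$-free graph $H$ on $N$ vertices satisfies $\alpha(H)\ge \tfrac12 N^{1/(r-1)}$, by induction on $r\ge 2$. The base case $r=2$ is immediate since $H$ is edgeless and $\alpha(H)=N$. For $r\ge 3$, look at the maximum degree $\Delta$ of $H$. If some vertex $v$ has $\deg(v)\ge N^{(r-2)/(r-1)}$, then $H[N(v)]$ is $K_{r-1}$-free (a $K_{r-1}$ among the neighbours of $v$ together with $v$ would form a $K_r$), so the inductive hypothesis gives an independent set of size $\ge \tfrac12 \deg(v)^{1/(r-2)}\ge \tfrac12\big(N^{(r-2)/(r-1)}\big)^{1/(r-2)}=\tfrac12 N^{1/(r-1)}$, which is also independent in $H$. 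Otherwise $\Delta<N^{(r-2)/(r-1)}$, and the greedy (Caro--Wei) bound gives $\alpha(H)\ge N/(\Delta+1)\ge N/(N^{(r-2)/(r-1)}+1)$, which is $\ge \tfrac12 N^{1/(r-1)}$ since this is equivalent to $N\ge N^{1/(r-1)}$.

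\emph{Step 2 (colouring).} Given the $K_r$-free graph $G$ on $n$ vertices, repeatedly pick a maximum independent set, give it a fresh colour, and delete it. At a stage where $n'$ vertices remain, the current graph is still $K_r$-free, so by Step 1 we delete an independent set of size $a\ge \tfrac12 (n')^{1/(r-1)}$. Thinking of these $a$ vertices as occupying the integer positions $\{n'-a+1,\dots,n'\}$, each such $j$ satisfies $j\le n'$, so this stage contributes $\sum_{j=n'-a+1}^{n'} j^{-1/(r-1)}\ge a\,(n')^{-1/(r-1)}\ge \tfrac12$ to the sum $\sum_{j=1}^{n} j^{-1/(r-1)}$. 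Since the position-blocks of all stages partition $\{1,\dots,n\}$ and each contributes at least $\tfrac12$, the number of colours used is at most $2\sum_{j=1}^{n} j^{-1/(r-1)}\le 2\big(1+\int_1^n x^{-1/(r-1)}\,dx\big)=\tfrac{2(r-1)}{r-2}\,n^{(r-2)/(r-1)}-\tfrac{2}{r-2}$. As $\tfrac{2(r-1)}{r-2}$ is decreasing in $r$ and equals $4$ at $r=3$, this is at most $4n^{(r-2)/(r-1)}$.

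I do not anticipate a real obstacle since everything is elementary, but two points require care. In Step 1 the constant $\tfrac12$ (rather than $1$) is exactly what is needed to close the low-degree case. In Step 2 the leading constant $4$ only survives if one does the stage-by-stage accounting against $\sum_j j^{-1/(r-1)}$; cruder bookkeeping, such as deleting independent sets in batches until the graph has halved in size, would inflate the constant past $4$.
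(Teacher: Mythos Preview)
Your proof is correct and follows essentially the same strategy as the paper: extract a large independent set of size $\Omega(n^{1/(r-1)})$ guaranteed by Ramsey-type considerations, give it one colour, and iterate. The only differences are in bookkeeping---you prove the independence-number bound by a direct induction on $r$ (which is precisely the Erd\H{o}s--Szekeres argument) rather than quoting $R(r,s)\le s^{r-1}$, and you control the number of colour classes via the integral comparison for $\sum_{j\le n} j^{-1/(r-1)}$ rather than by induction on $n$ as the paper does.
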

 \begin{proof}
 	We proceed by induction on $n$. 
 	For $n\le 4^{r-1}$, the statement is trivial as the chromatic number is always at most the number of vertices.
 	Now assume $G=(V,E)$ has $n>4^{r-1}$ vertices and that $\chi(G)\le 4n_0^{(r-2)/(r-1)}$ for all $K_r$-free graphs on $n_0\le n-1$ vertices.
 	The off-diagonal Ramsey number $R(r,s)$ satisfies $R(r,s) \le \binom{r+s-2}{s-1} \le s^{r-1}$ \cite{ES35}.
 	Hence, $G$ has an independent set $I$ of size $s=\floor{n^{1/(r-1)}}$.
 	The induced subgraph $G[V\setminus I]$ is $K_r$-free and has fewer than $n$ vertices, so its chromatic number is at most $4(n-s)^{(r-2)/(r-1)}$.
 	Hence, $G$ has chromatic number at most
 	\begin{align}
 	1 + 4(n-s)^{(r-2)/(r-1)}
 	\ &= \   1 + 4n^{(r-2)/(r-1)}\left( 1-\frac{s}{n} \right)^{(r-2)/(r-1)} \nonumber\\
 	\ &\overset{(*)}\le \   1 + 4n^{(r-2)/(r-1)} - 4n^{(r-2)/(r-1)} \cdot \frac{s}{3n}
 	\ \overset{(**)}< \   4n^{(r-2)/(r-1)}
 	\label{}
 	\end{align}
 	In $(*)$, we used that $\frac{r-2}{r-1}\ge \frac{1}{2}$, that $\frac{s}{n} \le \frac{1}{4}$, and that $(1-x)^a\le 1 - \frac{x}{3}$ for $a\ge \frac{1}{2}$ and $x\le \frac{1}{4}$.
 	In $(**)$, we used that $s\ge 4$ and hence $\frac{3s}{4} < n^{1/(r-1)}$.
 	This completes the induction, completing the proof.
 \end{proof}
 
 \begin{rem}
 	The upper bound on the off-diagonal Ramsey number $R(r, k^{1/(r-1)})$ has an extra logarithmic factor which suggests that the upper bound on $\chi(G)$ of Lemma~\ref{lem:chikr} can be improved by a logarithmic factor with a more careful analysis.
 \end{rem}
 
 \begin{lemma} 
 	\label{lem:kr-d}
 	If $G$ is a $K_r$-free graph with at most $n$ vertices and $m$ edges, then 
 	\begin{align*}
 	f(G) \ge \left( \frac{1}{2} + \frac{1}{8n^{(r-2)/(r-1)}} \right) m
 	\end{align*}
 \end{lemma}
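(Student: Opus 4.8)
The plan is to combine Lemma~\ref{lem:chikr} with Lemma~\ref{lem:mcchi} directly. Lemma~\ref{lem:chikr} tells us that a $K_r$-free graph $G$ on at most $n$ vertices has chromatic number $\chi(G) \le 4n^{(r-2)/(r-1)}$. Lemma~\ref{lem:mcchi} tells us that any graph with $m$ edges and chromatic number at most $t$ has $f(G) \ge \left(\frac{1}{2} + \frac{1}{2t}\right)m$. So setting $t = 4n^{(r-2)/(r-1)}$, we immediately get
\begin{align*}
f(G) \ge \left(\frac{1}{2} + \frac{1}{2 \cdot 4n^{(r-2)/(r-1)}}\right)m = \left(\frac{1}{2} + \frac{1}{8n^{(r-2)/(r-1)}}\right)m,
\end{align*}
which is exactly the claimed bound.

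The only minor subtlety is that Lemma~\ref{lem:mcchi} is stated for $\chi(G) \le t$, and here $\chi(G)$ could be strictly less than $4n^{(r-2)/(r-1)}$; but that is fine because the function $\frac{1}{2} + \frac{1}{2t}$ is decreasing in $t$, so applying the lemma with the (possibly loose) upper bound $t = 4n^{(r-2)/(r-1)}$ on $\chi(G)$ only weakens the conclusion, and the weakened conclusion is still what we want. A second triviality is that Lemma~\ref{lem:chikr} is phrased for a graph \emph{on $n$ vertices}, whereas here $G$ has \emph{at most} $n$ vertices; since $n^{(r-2)/(r-1)}$ is increasing in $n$, the bound $\chi(G) \le 4n^{(r-2)/(r-1)}$ still holds when the true vertex count is smaller.

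There is essentially no obstacle here: the lemma is a one-line corollary of the two preceding lemmas, included (as the surrounding text notes) so that the dependence of the Max-Cut surplus on the vertex count $n$ for $K_r$-free graphs is recorded explicitly before it is fed into Lemma~\ref{lem:general} to handle the $d$-degenerate case. The real content lives in Lemma~\ref{lem:chikr} (the Ramsey-type chromatic number bound) and Lemma~\ref{lem:mcchi} (the standard chromatic-number-to-Max-Cut reduction), both already proved above.
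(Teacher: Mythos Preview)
Your proposal is correct and matches the paper's own proof exactly: the paper simply states that the result ``follows immediately via Lemma~\ref{lem:mcchi} and Lemma~\ref{lem:chikr},'' which is precisely the combination you spell out. Your extra remarks about monotonicity in $t$ and in $n$ are valid but not needed, since Lemma~\ref{lem:mcchi} is already stated for $\chi(G)\le t$ rather than $\chi(G)=t$.
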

 \begin{proof}
 	This follows immediately via Lemma~\ref{lem:mcchi}
 	and Lemma~\ref{lem:chikr}. 
 \end{proof}
 
 The above bounds allow us to prove Theorem~\ref{thm:kr-lb}.
 \begin{proof}[Proof of Theorem~\ref{thm:kr-lb}] 
 	Let $G$ be a $d$-degenerate $K_r$-free graph and $\varepsilon=d^{-1 + \frac{1}{2r-4}}$.
 	Let $c_2$ be the parameter given by Lemma~\ref{lem:general}.
 	Let $c = \min(c_2,\frac{1}{388})$.
 	
 	Applying Lemma~\ref{lem:general} with parameter $\varepsilon$, one of two properties hold.
 	If~\eqref{e:cond1} holds, then 
 	\begin{align}
 	f(G)
 	\ \ge \ \left(\frac{1}{2} + c_2\varepsilon\right)m 
 	\ &\ge \  \left(\frac{1}{2} + c d^{- 1 + \frac{1}{2r-4}}\right)m 
 	\end{align}
 	as desired.
 	If~\eqref{e:cond2} holds, there exist graphs $G_1,\dots,G_k$ that are $K_{r-1}$-free with at most $d$ vertices such that $G_i$ has at least $\frac{n(G_i)}{8\varepsilon}$ edges, $n(G_1)+\cdots+n(G_k)\ge \frac{m}{6d}$, and 
 	\begin{align*}
 	f(G) 
 	\ &\ge \  \frac{m}{2} + \sum_{i=1}^{k} \left(f(G_i) - \frac{m(G_i)}{2}\right).
 	\end{align*}
 	For all $i$, we have
 	\begin{align*} 
 	f(G_i) - \frac{m(G_i)}{2}
 	\ &\ge \  \frac{m(G_i)}{8n(G_i)^{(r-3)/(r-2)}}  \nonumber\\
 	\ &\ge \ \frac{n(G_i)}{64\varepsilon n(G_i)^{(r-3)/(r-2)}} 
 	\ \ge \ \frac{n(G_i)}{64\varepsilon d^{(r-3)/(r-2)}} 
 	\ = \ \frac{\varepsilon d n(G_i)}{64}.
 	\end{align*}
 	In the first inequality, we used Lemma~\ref{lem:kr-d}.
 	In the second inequality, we used that $m(G_i)\ge \frac{n(G_i)}{8\varepsilon}$.
 	In the third inequality, we used that $n(G_i)\le d$.
 	Hence, as $d(n(G_1)+\cdots+n(G_k))\ge \frac{m}{6}$, we have as desired that
 	\begin{align}
 	f(G) 
 	\ &\ge \   \frac{m}{2} + \sum_{i=1}^{k}\frac{\varepsilon d n(G_i)}{64}
 	\ \ge \   \frac{m}{2} + \frac{\varepsilon m}{388} 
 	\ \ge \ \left(\frac{1}{2} + cd^{-1 + \frac{1}{2r-4}}\right)\cdot m.
 	\qedhere
 	\end{align}
 \end{proof}
 
\begin{rem}
As we already mentioned in the introduction, we can improve the result of Theorem~\ref{thm:kr-lb} in the case that $r = 4$ using Lemma~\ref{lem:general-2} as follows.
Let $H=K_4$, and $H'=K_3$.
By a result of \cite{AL96}, there exists a constant $c'>0$ such that, for all triangle-free graphs $G$ with $m'\ge 1$ edges, we have $f(G)\ge \frac{m'}{2} + c'(m')^{4/5}$.
By Lemma~\ref{lem:general-2} with $H$ and $H'$ and $a=4/5$, there exists a constant $c>0$ such that any $K_4$-free $d$-degenerate graph $G$ with $m\ge 1$ edges satisfies
\begin{align}
f(G)  \ge    \left( \frac{1}{2} + cd^{-\frac{2-(4/5)}{1+(4/5)}} \right)\cdot m =  \left( \frac{1}{2} + cd^{-2/3} \right)\cdot m.
\end{align}
\end{rem}

\section{Concluding Remarks}
\label{sec:conclusion}
In this paper we presented an approach, based on semidefinite programming (SDP), to prove lower bounds on Max-Cut and used it to find large cuts in graphs with few triangles and in
$K_r$-free graphs. A closely related problem of interest is bounding the \textsf{Max-$t$-Cut} of a graph, i.e. the largest $t$-colorable ($t$-partite) subgraph of a given graph. Our results imply good lower bounds for this problem as well. Indeed, by taking a cut for a graph $G$ with $m$ edges and surplus $W$, one can produce a $t$-cut for $G$ of size 
$\frac{t-1}{t}m +\Omega(W)$ as follows.  Let $A, B$ be the two parts of the original cut. If $t=2s$ is even, simply split randomly both $A, B$ into $s$ parts. If $t=2s+1$ is odd, then put every vertex of $A$ randomly in  the parts $1, \ldots, s$ with probability $2/(2s+1)$ and in the part $2s+1$ with probability $1/(2s+1)$. Similarly, put every vertex of $B$ randomly in  the parts $s+1, \ldots, 2s$ with probability $2/(2s+1)$ and in the part $2s+1$ with probability $1/(2s+1)$.
An easy computation (which we omit here) shows that the expected size of the resulting $t$-cut is
$\frac{t-1}{t}m +\Omega(W)$.

The main open question left by our work is Conjecture~\ref{conj:opt}. Proving this conjecture will require some major new ideas. Even showing that any $d$-degenerate $H$-free graph with $m$ edges has a cut with surplus at least
$m/d^{1-\delta}$ for some fixed $\delta$ (independent of $H$) is out of reach of current techniques.

\vspace{0.3cm}

\noindent
{\bf Acknowledgements.} \, The authors thank Jacob Fox and Matthew Kwan for helpful discussions and feedback. 
The authors thank Joshua Brakensiek for pointing out an error in an earlier draft of this paper.
The authors thank Joshua Brakensiek and Yuval Wigderson for helpful comments on an earlier draft of the paper.

\appendix

\section{Max cut in $H$-free graphs for other $H$}
\label{app:gen}

Our methods above and bound on the $\MC$ of a graph in terms of a global triangle count enable us to give improved bounds on the $\MC$ of $d$-regular or $d$-degenerate $H$-free graphs for a variety of other graphs $H$. These bounds (along with a comparison to existing literature) are summarized in Table~\ref{t:lb}.

Although Theorem~\ref{thm:kr-lb} gives a lower bound on $f(G)$ for $K_r$-free graphs, we can improve on this bound using an ad-hoc approach when $r = 4$:
\begin{prop}
  \label{thm:k4-lb}
  There exists a constant $c>0$ such that for all $K_4$-free $d$-degenerate graphs $G$ with $m\ge 1$ edges,
  \begin{align}
    f(G)\ge \left(\frac{1}{2} + cd^{-2/3}\right)m.
  \end{align}
\end{prop}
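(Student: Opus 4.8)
The plan is to follow exactly the template used in the proof of Theorem~\ref{thm:kr-lb} for $r=4$, but to replace the crude chromatic-number bound on the Max-Cut of triangle-free graphs (Lemma~\ref{lem:kr-d} with $r=3$, which gives surplus $\Omega(m/\sqrt{n})$) with Alon's tight bound~\cite{AL96}: every triangle-free graph with $m'$ edges has $f(G) \ge \frac{m'}{2} + c'(m')^{4/5}$. In fact this is essentially the content of the remark immediately preceding the appendix, so I would prove this via Lemma~\ref{lem:general-2}: take $H = K_4$, $H' = K_3$, and $a = 4/5$. Alon's theorem supplies the required hypothesis of Lemma~\ref{lem:general-2}, and the conclusion is that there is a constant $c = c(K_4) > 0$ such that every $K_4$-free $d$-degenerate graph $G$ with $m \ge 1$ edges satisfies
\begin{align*}
f(G) \ge \left(\frac{1}{2} + cd^{-\frac{2 - 4/5}{1 + 4/5}}\right) m = \left(\frac{1}{2} + cd^{-2/3}\right) m,
\end{align*}
which is the claim. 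This is a two-line proof once Lemma~\ref{lem:general-2} and Alon's result are invoked.

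If instead one wants a self-contained ``ad hoc'' argument not routing through Lemma~\ref{lem:general-2}, I would unwind that lemma's proof in this special case. Apply Lemma~\ref{lem:general} to $G$ with parameter $\varepsilon = c' d^{-2/3}$ (which satisfies $\varepsilon < d^{-1/2}$ for $d$ large, and small $d$ can be absorbed into the constant). Either~\eqref{e:cond1} holds and we are done with surplus $\Omega(\varepsilon m) = \Omega(d^{-2/3} m)$; or~\eqref{e:cond2} holds, producing triangle-free ($K_3$-free) induced subgraphs $G_1, \dots, G_k$ with $n(G_i) \le d$, $m(G_i) \ge \frac{n(G_i)}{8\varepsilon}$, $\sum_i n(G_i) \ge \frac{m}{6d}$, and $f(G) \ge \frac{m}{2} + \sum_i (f(G_i) - \frac{m(G_i)}{2})$. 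Apply Alon's bound to each $G_i$: $f(G_i) - \frac{m(G_i)}{2} \ge c' m(G_i)^{4/5}$. Using $m(G_i) \ge \frac{n(G_i)}{8\varepsilon}$ and $n(G_i) \le d$, bound $m(G_i)^{4/5} \ge \frac{m(G_i)}{(8\varepsilon)^{1/5} \cdot 8\varepsilon}$ ... more cleanly, $m(G_i)^{4/5} = m(G_i) \cdot m(G_i)^{-1/5} \ge m(G_i)^{-1/5}$ is the wrong direction; instead write $m(G_i)^{4/5} \ge \left(\frac{n(G_i)}{8\varepsilon}\right)^{4/5}$ and then use $n(G_i)^{4/5} = n(G_i) \cdot n(G_i)^{-1/5} \ge n(G_i) d^{-1/5}$, so $m(G_i)^{4/5} \ge \frac{n(G_i) d^{-1/5}}{(8\varepsilon)^{4/5}}$. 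With $\varepsilon = c' d^{-2/3}$ one checks $\varepsilon^{-4/5} d^{-1/5} = (c')^{-4/5} d^{8/15 - 3/15} = (c')^{-4/5} d^{1/3}$, giving $c' m(G_i)^{4/5} \ge \Omega(d^{1/3} \varepsilon \cdot d^{... }) \cdot n(G_i)$; the exponents are arranged so that this is $\Omega(\varepsilon d \, n(G_i))$. Summing and using $\sum_i n(G_i) \ge \frac{m}{6d}$ yields $f(G) \ge \frac{m}{2} + \Omega(\varepsilon m) = \frac{m}{2} + \Omega(d^{-2/3} m)$.

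There is essentially no obstacle here: the only nontrivial input is Alon's theorem on triangle-free Max-Cut, which is cited, and the degeneracy-reduction machinery is already packaged in Lemmas~\ref{lem:general} and~\ref{lem:general-2}. The one point requiring a little care is the bookkeeping of exponents in the second (ad hoc) version — specifically verifying that the choice $\varepsilon = \Theta(d^{-2/3})$ is exactly the value that balances the two cases of Lemma~\ref{lem:general}, i.e. that $\varepsilon d \cdot \varepsilon^{-4/5} d^{-1/5} = \Theta(\varepsilon)$ forces $d^{1/5} = \varepsilon^{-... }$, giving $\varepsilon = d^{-2/3}$ — but this is precisely the optimization $\frac{2 - a}{1 + a}\big|_{a = 4/5} = \frac{2}{3}$ carried out abstractly in Lemma~\ref{lem:general-2}, so the cleanest exposition simply cites that lemma.
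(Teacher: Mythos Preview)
Your proposal is correct and matches the paper's own argument exactly: the paper proves this proposition precisely by applying Lemma~\ref{lem:general-2} with $H=K_4$, $H'=K_3$, and $a=4/5$, using Alon's bound $f(G)\ge \frac{m'}{2}+c'(m')^{4/5}$ for triangle-free graphs as the input (this is the content of the remark at the end of Section~\ref{sec:kr}). Your alternative ad hoc unwinding via Lemma~\ref{lem:general} is also fine and is just the proof of Lemma~\ref{lem:general-2} specialized to this case, as you note.
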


Our methods above also allow us to leverage the bounds in \cite{AL05} to give nontrivial lower bounds on the $\MC$ of $d$-degenerate $H$-free graphs for several families of forbidden subgraphs $H$.

\begin{prop}\label{thm:generalh-lb}
For a graph $H$, let
$$\alpha_H(d) =
\begin{cases}
d^{-(r+1)/(2r - 1)} & \text{if }H = W_{r}\text{ and } r \text{ is odd} \\
d^{-7/11} & \text{if } H = K_{3, s} \\
d^{-2/3} &  \text{if }H = K_{4, s} \\
d^{-1/2} & \text{if deleting some vertex from $H$ gives a forest (forest+1)} \\
d^{-2/3} & \text{if deleting two vertices from $H$ gives a forest (forest+2)} \\
\end{cases}.
$$
When $H$ is one of the above, there exists $c=c(H) >0$ such that, for all $H$-free $d$-degenerate graphs $G$ on $m\ge 1$ edges,
$$f(G) \ge \left( \frac12 + c\cdot \alpha_H(d) \right) m.$$
\end{prop}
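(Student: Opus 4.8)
The plan is to derive every case of the statement from Lemma~\ref{lem:general-2}. Recall that this lemma takes a Max-Cut surplus bound of the form $f(G)\ge\frac{m'}{2}+c'(m')^{a}$, valid for \emph{all} $H'$-free graphs $G$ with $m'\ge1$ edges (where $H'$ is obtained from $H$ by deleting a vertex), and converts it into $f(G)\ge\bigl(\tfrac12+c\,d^{-(2-a)/(1+a)}\bigr)m$ for all $H$-free $d$-degenerate graphs. So for each forbidden family I would (i) choose the vertex of $H$ whose deletion places $H'$ in a family for which such a surplus bound is known — for the complete bipartite and cycle cases these are the estimates of \cite{AL05} (the triangle-free case going back to \cite{AL96}), and for the forest cases they are elementary — and (ii) check that the exponent $\frac{2-a}{1+a}$ produced by Lemma~\ref{lem:general-2} equals the claimed $\alpha_H(d)$. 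In every case below $a\in[\tfrac12,1]$, so Lemma~\ref{lem:general-2} applies.

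Here are the choices of $H'$ and input exponent $a$. \emph{Forest$+1$:} $H'$ is the forest $F$ obtained by deleting the special vertex; since a graph with minimum degree $\ge|V(F)|$ contains $F$ greedily, every $F$-free graph (and hence every subgraph of one) has a vertex of degree $<|V(F)|$, so $F$-free graphs are $O_F(1)$-degenerate, have chromatic number $O_F(1)$, and by Lemma~\ref{lem:mcchi} satisfy $f(G)\ge\frac{m'}{2}+c'm'$ — thus $a=1$ and $\frac{2-1}{1+1}=\tfrac12$. \emph{Forest$+2$:} deleting one vertex of $H$ leaves a forest$+1$ graph $H'$, and by \cite{AL05} (generalizing Alon's $m^{4/5}$ bound \cite{AL96} for triangle-free graphs, the case $H'=K_3$) every $H'$-free graph has surplus $\Omega((m')^{4/5})$, so $a=\tfrac45$ and $\frac{2-4/5}{1+4/5}=\tfrac23$. \emph{$K_{3,s}$:} deleting a vertex from the side of size $3$ gives $H'=K_{2,s}$, whose surplus bound from \cite{AL05} has exponent $a=\tfrac56$, and $\frac{2-5/6}{1+5/6}=\tfrac{7}{11}$. \emph{$K_{4,s}$:} deleting a vertex from the side of size $4$ gives $H'=K_{3,s}$, with surplus exponent $a=\tfrac45$ from \cite{AL05}, giving $\tfrac23$ again. \emph{$W_r$ with $r$ odd:} deleting the hub gives $H'=C_r$; feeding in the surplus bound $f(G)\ge\frac{m'}{2}+\Omega\bigl((m')^{(r-1)/r}\bigr)$ for $C_r$-free graphs (available from \cite{AL05}, and of Erd\H{o}s--Lov\'asz type already for $r=3$), we get $a=\frac{r-1}{r}$ and $\frac{2-a}{1+a}=\frac{2-(r-1)/r}{1+(r-1)/r}=\frac{r+1}{2r-1}$.

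The exponent arithmetic above is routine; the substantive content is imported from \cite{AL05} and \cite{AL96}, so the one real obstacle is to record those earlier results in exactly the shape Lemma~\ref{lem:general-2} consumes: surplus bounds $\frac{m'}{2}+c'(m')^{a}$ holding for \emph{every} $H'$-free graph with $m'\ge1$ edges — not only degenerate ones, and with no "$m'$ large" proviso (any such proviso being absorbed into $c'$ using the elementary bound $f(G)\ge\frac{m'}{2}+\Omega(\sqrt{m'})$ for the finitely many remaining small $m'$). For the complete bipartite and triangle-free/forest$+1$ families this is essentially verbatim. The least standard ingredient is the $C_r$-free estimate used for the odd wheels: for small $r$ the resulting bound is not tight — e.g.\ $W_3=K_4$ is already covered, and sharpened to $d^{-2/3}$, by Proposition~\ref{thm:k4-lb}, whereas the formula gives only $d^{-4/5}$ — while for large $r$ one should note that the bound used genuinely beats what "$C_r$-free $\Rightarrow K_r$-free" together with Theorem~\ref{thm:kr-lb} would give, so the $C_r$-specific input is really needed (alternatively it can be established directly along the lines of the proof of Theorem~\ref{thm:kr-lb}, feeding a Lemma~\ref{lem:kr-d}-type bound for $C_r$-free graphs on at most $d$ vertices into Lemma~\ref{lem:general}). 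One must also take the \emph{right} deletion throughout — $K_{2,s}$ rather than $K_{3,s-1}$ when $H=K_{3,s}$, the hub rather than a rim vertex when $H=W_r$ — so that $H'$ falls into the better-understood family.
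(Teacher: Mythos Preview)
Your overall scheme --- plug each family into Lemma~\ref{lem:general-2} with an appropriate $H'$ and a known surplus exponent $a$ from \cite{AL05} --- is exactly the paper's proof, and for forest$+1$, forest$+2$, $K_{3,s}$ and $K_{4,s}$ your choices of $H'$ and $a$ match the paper's table verbatim.

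The wheel case diverges, and there your argument has a gap. The paper's proof takes $W_r$ to be the wheel on $r$ vertices (hub plus rim $C_{r-1}$); with $r$ odd this makes the rim an \emph{even} cycle, and the input is the even-cycle surplus bound $c'(m')^{r/(r+1)}$ from \cite{AL05}, yielding exponent $\tfrac{r+2}{2r+1}$. (You will have noticed this is off by one from the $\tfrac{r+1}{2r-1}$ printed in the proposition --- an indexing slip in the paper itself.) You instead follow the printed exponent, take the rim to be the \emph{odd} cycle $C_r$, and assert a surplus of $\Omega\bigl((m')^{(r-1)/r}\bigr)$ for $C_r$-free graphs ``available from \cite{AL05}''. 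But \cite{AL05} contains no such result for odd $r$: the only general bound there for $C_r$-free graphs (odd $r$) is the forest$+1$ bound $\Omega\bigl((m')^{4/5}\bigr)$, which is strictly weaker than $(m')^{(r-1)/r}$ once $r\ge 7$. Your parenthetical fallback --- mimic the proof of Theorem~\ref{thm:kr-lb} via a Lemma~\ref{lem:kr-d}-type chromatic bound for $C_r$-free graphs --- does not close the gap either, since $C_r$-free only gives $K_r$-free and hence nothing beyond the $K_r$ exponent you already dismissed. The repair is to adopt the paper's convention (rim an even cycle) and feed in the established even-cycle bound.
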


\begin{proof}
We repeatedly apply Lemma~\ref{lem:general-2} by combining it with results from \cite{AL05}.
Table~\ref{t:prooflb} shows the choices of $H,H',$ and $a$ used in the applications of Lemma~\ref{lem:general-2}, along with the associated surplus bounds on $H'$-free graphs from \cite{AL05} and the resulting surplus bounds on $H$-free $d$-degenerate graphs.
\begin{table}[h!]
\begin{center}
\begin{tabular}{c |c |c |c |c |c}
$H$ & $H'$ & $H'$-free surplus \cite{AL05} & $a$ & $\frac{2-a}{1+a}$  & $d$-deg. $H$-free surplus\\
\hline
\hline 
forest+1 & forest & $c'm$ & 1 & $\frac12$  & $cd^{-1/2}m$\\
forest+2 & forest+1 & $c'm^{4/5}$ & $\frac45$ & $\frac23$  & $cd^{-2/3}m$\\
$W_r$ ($r$ odd) & $C_{r-1}$ & $c'm^{r/(r+1)}$ & $\frac{r}{r+1}$ & $\frac{r+2}{2r+1}$  & $cd^{-(r+2)/(2r+1)}m$\\
$K_{3,s}$ & $K_{2,s}$ & $c'm^{5/6}$ & $\frac56$ & $\frac{7}{11}$  & $cd^{-7/11}m$\\
$K_{4,s}$ & $K_{3,s}$ & $c'm^{4/5}$ & $\frac45$ & $\frac23$  & $cd^{-2/3}m$.
\end{tabular}
\end{center}
\caption{We apply Lemma~\ref{lem:general-2} to the above given $H$ using the listed values of $H'$ and $a$ to obtain the given lower bound.} \label{t:prooflb}
\end{table}

Here, forest+1 means that $H$ is some forbidden subgraph such that removing one vertex from $H$ gives a forest, and forest+2 means that removing two vertices from $H$ gives a forest.
\end{proof}
We note that, for $H=K_{3,s}$, the surplus bound in Proposition~\ref{thm:generalh-lb} can be improved to $cd^{-1/2}$, using similar ideas from \cite{AL05}. We sketch a proof here for completeness.
\begin{proposition}
\label{prop:k3s}
  For all $s\ge 1$, there exists a constant $c>0$ such that for all $K_{3,s}$-free $d$-degenerate graphs $G$ with $m\ge 1$ edges,
  \begin{align}
    f(G) \ge \left( \frac{1}{2} + \frac{c}{\sqrt{d}} \right)m
  \end{align}
\end{proposition}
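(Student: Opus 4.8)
The plan is to leverage the decomposition framework established in Lemma~\ref{lem:general}, but with a sharper input bound for $K_{2,s}$-free graphs than the generic surplus bound used in Lemma~\ref{lem:general-2}. The key observation is that $H' = K_{2,s}$-free graphs of bounded size have favorable structure: a $K_{2,s}$-free graph on $N$ vertices has at most $O_s(N^{3/2})$ edges by the \Komlos--\Szemeredi / \Kovari--\Sos--\Turan bound, but more usefully, in the decomposition we are applying it to small induced subgraphs $G_i$ with $n(G_i) \le d$ that arise as neighborhoods and hence already carry extra sparsity. First I would set $\varepsilon = \Theta(1/\sqrt{d})$ and apply Lemma~\ref{lem:general} with $H = K_{3,s}$, so that $H' = K_{2,s}$; in the first case of that lemma we are immediately done with surplus $\Omega(m/\sqrt{d})$, so the work is entirely in the second case.

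In the second case we obtain graphs $G_1, \dots, G_k$ that are $K_{2,s}$-free, have $n(G_i) \le d$ vertices, have $m(G_i) \ge n(G_i)/(8\varepsilon) = \Omega(\sqrt{d}\,n(G_i))$ edges, and satisfy $\sum_i n(G_i) \ge m/(6d)$ together with $f(G) \ge m/2 + \sum_i (f(G_i) - m(G_i)/2)$. The heart of the argument is a good lower bound on the surplus $f(G_i) - m(G_i)/2$ for each $K_{2,s}$-free graph $G_i$. Here I would invoke the known tight Max-Cut bound for $K_{2,s}$-free graphs from \cite{AL05}, which gives surplus $\Omega((m(G_i))^{5/6})$; but crucially, since $m(G_i) = O_s(n(G_i)^{3/2}) = O_s(n(G_i)\sqrt{d})$ and simultaneously $m(G_i) \ge \Omega(\sqrt d\, n(G_i))$, we have $m(G_i) = \Theta_s(\sqrt d\, n(G_i))$. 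Plugging this into the surplus bound, $(m(G_i))^{5/6} = \Theta_s((\sqrt d\, n(G_i))^{5/6}) = \Theta_s(d^{5/12} n(G_i)^{5/6})$. Comparing this to the target contribution of $\Omega(\varepsilon d\, n(G_i)) = \Omega(\sqrt d\, n(G_i))$, we need $d^{5/12} n(G_i)^{5/6} \gtrsim \sqrt d\, n(G_i)$, i.e. $n(G_i)^{1/6} \lesssim d^{-1/12}$, which fails when $n(G_i)$ is large. So the naive substitution does not suffice, and this is exactly the main obstacle.

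To overcome it, I would instead use a \emph{weighted} version of the argument: rather than the global surplus bound for $K_{2,s}$-free graphs, use the degree-sensitive Shearer-type bound (as extended in \cite{AL05}) together with the fact that a $K_{2,s}$-free graph has few edges inside any neighborhood, so it is "locally triangle-sparse" in the precise sense Corollary~\ref{cor:sdp-2} (via Theorem~\ref{thm:sdp}) can exploit. Concretely, apply Theorem~\ref{thm:sdp} directly to each $G_i$ with $V_v$ the back-neighborhood of $v$ in a degeneracy order of $G_i$ and $\varepsilon_v = \Theta(1/\sqrt{d_i})$ where $d_i$ is the degeneracy of $G_i$: the cross term $\sum_{(u,v)\in E(G_i)} \varepsilon_u\varepsilon_v|V_u \cap V_v|$ counts triangles of $G_i$, and a $K_{2,s}$-free graph has at most $O_s(m(G_i))$ triangles since each edge lies in fewer than $s$ triangles. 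This yields $f(G_i) - m(G_i)/2 = \Omega_s(m(G_i)/\sqrt{d_i}) = \Omega_s(m(G_i)/\sqrt{d}) = \Omega_s(n(G_i))$, using $m(G_i) = \Theta_s(\sqrt d\, n(G_i))$ once more. Summing over $i$ and using $\sum_i n(G_i) \ge m/(6d)$ gives $f(G) \ge m/2 + \Omega_s(m/d)$ — which is weaker than claimed. The fix for the final $\sqrt d$ versus $d$ discrepancy is to rerun Lemma~\ref{lem:general} with the \emph{improved} input surplus $\Omega_s(m(G_i)/\sqrt{d})$ folded back in: since each $G_i$ has only $d$ vertices and degeneracy $O_s(\sqrt d)$ (because it is $K_{2,s}$-free with $d$ vertices), we effectively gain a factor $\sqrt d$ in the degeneracy, so the whole argument self-improves. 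I expect the main technical care to be in tracking that $K_{2,s}$-freeness forces degeneracy $O_s(\sqrt n)$ on an $n$-vertex graph, combining this with the edge count $m(G_i) \ge n(G_i)/(8\varepsilon)$ to pin down $n(G_i) = \Theta_s(\sqrt d)$ wait—to pin down the relationship $m(G_i) = \Theta_s(\sqrt{d}\, n(G_i))$, and then verifying the arithmetic $\sqrt{d}^{-1} \cdot \sqrt{d} = 1$ type cancellations go through cleanly to land exactly at surplus $c\,m/\sqrt d$.
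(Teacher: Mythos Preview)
Your approach has a genuine gap. The decomposition route via Lemma~\ref{lem:general} does not reach surplus $\Omega_s(m/\sqrt d)$, and the ``self-improvement'' you sketch at the end does not close the shortfall. Concretely: with $\varepsilon=\Theta(1/\sqrt d)$, the second case of Lemma~\ref{lem:general} only guarantees $\sum_i n(G_i)\ge m/(6d)$. Even after you observe that each $G_i$ is $K_{2,s}$-free on at most $d$ vertices and hence (by K\H{o}v\'ari--S\'os--Tur\'an) has degeneracy $d_i=O_s(\sqrt d)$, your own bound gives
\[
f(G_i)-\tfrac{m(G_i)}{2}\;\ge\;\Omega_s\!\left(\frac{m(G_i)}{\sqrt{d_i}}\right)\;\ge\;\Omega_s\!\left(\frac{m(G_i)}{d^{1/4}}\right)\;\ge\;\Omega_s\!\left(d^{1/4}\,n(G_i)\right),
\]
using $m(G_i)\ge \Omega(\sqrt d\,n(G_i))$. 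Summing and plugging in $\sum_i n(G_i)\ge m/(6d)$ yields only $\Omega_s(m/d^{3/4})$, not $\Omega_s(m/\sqrt d)$. The bottleneck is the factor $1/d$ in $\sum_i n(G_i)$, which the decomposition lemma cannot improve; no amount of re-running Lemma~\ref{lem:general} recovers the missing $d^{1/4}$.

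The paper's proof bypasses Lemma~\ref{lem:general} entirely and is much more direct. It first bounds the total number of triangles in $G$ itself: in a degeneracy ordering, each back-neighbourhood $N_i$ is $K_{2,s}$-free on at most $d$ vertices, so by K\H{o}v\'ari--S\'os--Tur\'an it spans at most $c'\lvert N_i\rvert^{3/2}\le c'\sqrt d\,\lvert N_i\rvert$ edges; summing gives $t\le c'\sqrt d\, m$. This is slightly too many triangles to apply Corollary~\ref{cor:sdp-2} with $\varepsilon=1/\sqrt d$ directly, so the paper uses a \emph{random vertex sampling} trick: include each vertex independently with probability $p$, so the induced subgraph has $p^2 m$ expected edges but only $p^3 t$ expected triangles. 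Applying Corollary~\ref{cor:sdp-2} in expectation and choosing $p=\Theta_s(1)$ small enough makes the triangle term negligible and yields surplus $\Omega_s(m/\sqrt d)$ on the sample, which extends to $G$ by adding the remaining vertices randomly. The missing idea in your proposal is precisely this sampling step applied globally to $G$, rather than working inside the pieces $G_i$ of a decomposition.
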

\begin{proof}
  Let $G$ be a $d$-degenerate, $K_{3,s}$-free graph with associated vertex ordering $1,\dots,n$ and let $N_i$ denote the neighbors $j<i$ of $i$, so that $|N_i|\le d$ for all $i$.
  Since $G$ is $K_{3,s}$-free, the subgraph induced by $N_i$ is $K_{2,s}$-free, so by the K\H{o}v\'ari-S\'os-Tur\'an theorem \cite{KST}, this subgraph has at most $c'\cdot |N_i|^{3/2}\le c'\sqrt{d}\cdot |N_i|$ edges for an absolute $c'>0$, so the total number of triangles $t$ satisfies $t\le c'\sqrt{d}\cdot m$.
  Fix $p>0$ to be chosen later and include each vertex independently with probability $p$ in a set $V'$.
  If the induced subgraph $G[V']$ has $m'$ edges and $t'$ triangles, we have, by Corollary~\ref{cor:sdp-2} with $\varepsilon=\frac{1}{\sqrt{d}}$ that
  \begin{align}
    \E\left[f(G[V'])-\frac{m'}{2}\right]\ge \E\left[\frac{\varepsilon m'}{4\pi} - \frac{\varepsilon^2 t'}{2}  \right]
    =  \frac{p^2 m}{4\pi\sqrt{d}} - \frac{p^3t}{2d}
    \overset{(*)} \ge \frac{m(p^2-2\pi c'p^3)}{4\pi\sqrt{d}},
  \end{align}
  where in $(*)$ we used $t\le c'\sqrt{d}m$.
  Choosing $p=\frac{1}{10c'}$, we have that $\E[f(G[V'])-\frac{m'}{2}]\ge c\frac{m}{\sqrt{d}}$ for some absolute constant $c>0$, so there exists some choice of $V'$ for which $f(G[V'])\ge \frac{m'}{2} + \frac{cm}{\sqrt{d}}$.
  By adding the vertices outside $V'$ to this cut randomly, we conclude $\E[f(G)] \ge \frac{m}{2} + \frac{cm}{\sqrt{d}}$.
\end{proof}

Further, as discussed in the introduction, assuming Conjecture~\ref{conj:abks}, we can get a surplus for $H$-free graphs bounded away from the trivial surplus $\Omega(\frac{m}{d})$ for all $H$.
\begin{prop}
  \label{thm:conditional-kr-lb}
  Assuming Conjecture~\ref{conj:abks}, for any graph $H$, there exist constants $\varepsilon=\varepsilon(H)>0$ and $c=c(H)>0$ such that, for all $H$-free $d$-degenerate graphs $G$ on $m\ge 1$ edges, we have
  \begin{align}
    f(G) \ge \left(\frac{1}{2} + cd^{-5/7 + \varepsilon}\right)m.
  \end{align}
\end{prop}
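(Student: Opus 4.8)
The plan is to obtain this as an immediate corollary of Lemma~\ref{lem:general-2}, feeding in the conjectured surplus bound for $H'$-free graphs, where $H'$ is $H$ with one vertex deleted. Fix a graph $H$, pick any vertex of $H$, and let $H'$ be the resulting graph. Conjecture~\ref{conj:abks}, applied to $H'$, provides constants $\varepsilon_0 = \varepsilon(H') > 0$ and $c' = c'(H') > 0$ with $f(G) \ge \frac{m'}{2} + c'(m')^{3/4+\varepsilon_0}$ for every $H'$-free graph with $m' \ge 1$ edges.

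First I would normalise $\varepsilon_0$ so that Lemma~\ref{lem:general-2} is applicable: since $m' \ge 1$, the bound only weakens if we decrease the exponent, so we may assume $\varepsilon_0 \le \tfrac14$ and thus $a := \tfrac34 + \varepsilon_0 \in [\tfrac12, 1]$, retaining the same constant $c'$. With this $a$, the hypothesis of Lemma~\ref{lem:general-2} holds for $H$ and $H'$, so the lemma produces a constant $c = c(H) > 0$ such that every $H$-free $d$-degenerate graph with $m \ge 1$ edges satisfies
\[
f(G) \ge \left(\frac12 + c\, d^{-\frac{2-a}{1+a}}\right) m .
\]

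Then I would check that plugging $a = \tfrac34 + \varepsilon_0$ into the exponent gives $\frac{2-a}{1+a} = \frac{5/4 - \varepsilon_0}{7/4 + \varepsilon_0}$, a quantity that equals $\tfrac57$ at $\varepsilon_0 = 0$ and is strictly decreasing in $\varepsilon_0$ (its derivative is $-3/(7/4+\varepsilon_0)^2 < 0$). Hence $\frac{2-a}{1+a} = \tfrac57 - \varepsilon$ for some $\varepsilon = \varepsilon(H) > 0$ depending only on $\varepsilon_0$, and, using $d \ge 1$ (which $m \ge 1$ forces), the displayed bound becomes $f(G) \ge (\tfrac12 + c\,d^{-5/7+\varepsilon})m$, as claimed. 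There is no genuine obstacle here — the argument is a one-line reduction to Lemma~\ref{lem:general-2} — and the only point requiring care is truncating $\varepsilon_0$ at $\tfrac14$ so that $a \le 1$ and the lemma applies verbatim.
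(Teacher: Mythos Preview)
Your proposal is correct and follows essentially the same approach as the paper: delete a vertex of $H$ to obtain $H'$, apply Conjecture~\ref{conj:abks} to $H'$-free graphs with exponent $a = \tfrac34 + \varepsilon_0$, and plug into Lemma~\ref{lem:general-2} to compute $\frac{2-a}{1+a} = \frac{5/4-\varepsilon_0}{7/4+\varepsilon_0} < \tfrac57$. Your extra care in truncating $\varepsilon_0 \le \tfrac14$ so that $a \le 1$ (required by the hypothesis of Lemma~\ref{lem:general-2}) is a detail the paper leaves implicit.
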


\begin{proof}
  Fix a graph $H$, and let $H'$ be obtained by deleting any vertex of $H$.
  Assuming Conjecture~\ref{conj:abks}, there exist constants $c'=c'(H'), \varepsilon'=\varepsilon'(H')>0$ such that any $H'$-free graph with $m$ edges satisfies $f(G)\ge \frac{m}{2}+c'm^{3/4+\varepsilon'}$.
  By Lemma~\ref{lem:general-2} with $H$ and $H'$ and $a=3/4+\varepsilon'$, there exists constants $c_3=c_3(H),\varepsilon=\varepsilon(H) >0$ such that any $H$-free $d$-degenerate graph $G$ with $m$ edges satisfies
  \begin{align}
    f(G)
    \ \ge \  \left(\frac{1}{2} + c_3d^{-\frac{5/4 - \varepsilon'}{7/4 + \varepsilon'}}\right)m
    \ \ge \ \left( \frac{1}{2} + c_3d^{-5/7 + \varepsilon} \right)m.
    &\qedhere
  \end{align}
\end{proof}

\begin{table}[ht!]
\begin{center}
\begin{tabular}{l|lr|lr|c}
Forbidden subgraph $H$ & Prior work &  & This work &  & Tight? \\
\hline
\hline
None & $cd^{-1}$ & & & & Y \\
$K_3$  & $cd^{-1/2}$ &  \cite{SHE92}& & & Y \\
$K_4$  & & & $cd^{-2/3}$  & Prop~\ref{thm:k4-lb} &  \\
$K_r$ & & & $cd^{-1+1/(2r-4)}$ & Thm~\ref{thm:kr-lb} &  \\
 & & & $cd^{-5/7 + \varepsilon_r}$ & Prop~\ref{thm:conditional-kr-lb} if Conj~\ref{conj:abks} &  \\
$C_r$ &   & & $cd^{-1/2}$ & Prop~\ref{thm:generalh-lb} & Y \\
$W_r$ for odd $r$ & &  & $cd^{-(r+1)/(2r-1)}$ & Prop~\ref{thm:generalh-lb} &  \\
$K_{2,s}$  & &  & $cd^{-1/2}$  & Prop~\ref{thm:generalh-lb} & Y \\
$K_{3,s}$  & &  & $cd^{-1/2}$ & Prop~\ref{prop:k3s} & Y \\
$K_{4,s}$  & &  & $cd^{-2/3}$  & Prop~\ref{thm:generalh-lb} &  \\
forest   & $\frac{1}{2r}$ & \cite{AL05} & &  & Y \\
forest+1   & &  & $cd^{-1/2}$  & Prop~\ref{thm:generalh-lb} & Y \\
forest+2   & &  & $cd^{-2/3}$  & Prop~\ref{thm:generalh-lb} & \\
\end{tabular}
\caption{Lower bounds for the surplus of $H$-free $d$-degenerate graphs in the literature and our work. They are noted as tight if there is a known construction whose surplus is within a constant factor of the lower bound.}
\label{t:lb}
\end{center}
\end{table}

\section{A stronger conjecture}
\label{sec:conj}

Here, we do some technical work to show that Conjecture~\ref{conj:opt} implies the more well known Conjecture~\ref{conj:abks}.
Again, Conjecture~\ref{conj:opt} easily implies that the surplus for an $H$-free graph is $\Omega(m^{3/4})$, and here we show that the surplus is in fact $\Omega(m^{3/4+\varepsilon})$ for some $\varepsilon=\varepsilon(H)>0$.

\begin{theorem}
  Conjecture~\ref{conj:opt} implies Conjecture~\ref{conj:abks}.
\label{thm:conj}
\end{theorem}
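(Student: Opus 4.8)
The plan is to deduce Conjecture~\ref{conj:abks} from Conjecture~\ref{conj:opt} by a case analysis on the degeneracy of the $H$-free graph $G$. Fix $H$ and let $c = c(H)$ be the constant from Conjecture~\ref{conj:opt}, so every $H$-free $d$-degenerate graph with $m$ edges has surplus at least $\frac{c}{\sqrt{d}}m$. The point is that this bound is strong (surplus $\Omega(m/\sqrt d)$) when $d$ is small, but degenerate when $d$ is large; conversely, a graph of large degeneracy contains a dense subgraph, which itself has a good cut essentially because it has many edges relative to its vertex count. So I would fix a threshold $d_0 = d_0(m)$, to be optimized at the end as a power of $m$, and split into two cases.

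\textbf{Case 1: $G$ is $d_0$-degenerate.} Then Conjecture~\ref{conj:opt} directly gives $f(G) \ge \frac{m}{2} + \frac{c}{\sqrt{d_0}} m$.

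\textbf{Case 2: $G$ is not $d_0$-degenerate.} By the definition of degeneracy, $G$ has a subgraph $G'$ of minimum degree at least $d_0$; in particular $G'$ has some number $n'$ of vertices and at least $\frac{d_0 n'}{2}$ edges, so $n' \le \frac{2m}{d_0}$ and $m(G') \ge \frac{d_0 n'}{2}$. Now apply Edwards' bound (or just the elementary Erd\H{o}s--Lov\'asz-type bound) to $G'$: it has a cut with surplus $\Omega(\sqrt{m(G')})$. Since $m(G')$ is large relative to $n'$ — indeed $m(G') \ge \tfrac{d_0}{2} n' \ge \tfrac{d_0}{2}\cdot \text{(something)}$ — I actually want a surplus that scales like a power of $m$, so I would instead note $m(G') \ge \frac{d_0 n'}{2}$ and also that $G'$ contains at least $\frac{m - m(G\setminus V(G'))}{1}$... more cleanly: iterate the peeling. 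The robust way is: if $G$ is not $d_0$-degenerate, repeatedly delete vertices of degree $< d_0$; this removes fewer than $d_0 n \le d_0\sqrt{2m}$ edges in total, but that may exceed $m$, so instead one argues that \emph{either} this peeling process terminates having removed at most $m/2$ edges, leaving a subgraph $G'$ with $\ge m/2$ edges and minimum degree $\ge d_0$ (hence $n(G') \le m/d_0$, and $G'$ has a cut of surplus $\Omega(m/\sqrt{n(G')}) = \Omega(\sqrt{d_0 \cdot m})$ by Edwards applied cleverly — actually Edwards gives $\Omega(\sqrt{m(G')}) = \Omega(\sqrt m)$, which is not yet enough), \emph{or} it removes more than $m/2$ edges, in which case... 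Let me restructure: the correct extraction is that a graph with $\ge m/2$ edges on $\le m/d_0$ vertices has average degree $\ge d_0$ and hence (by Edwards, or by the bound $f \ge \frac{m}{2} + \Omega(\frac{m}{\sqrt{n}})$ which holds for all graphs on $n$ vertices via a random cut plus variance/Edwards analysis) a cut with surplus $\Omega(m/\sqrt{n(G')}) = \Omega(\sqrt{d_0 m})$. Then extend this cut to all of $G$ by randomly placing the peeled vertices, preserving the surplus up to constants. This gives $f(G) \ge \frac{m}{2} + \Omega(\sqrt{d_0 m})$.

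Combining the two cases, we get $f(G) \ge \frac{m}{2} + c'\min\!\big(\tfrac{m}{\sqrt{d_0}},\, \sqrt{d_0 m}\big)$ for an absolute-ish constant $c' = c'(H)$. Optimizing, the two terms are equal when $\sqrt{d_0} = m^{1/4}$, i.e.\ $d_0 = \sqrt m$, giving surplus $\Omega(m^{3/4})$ — which only recovers the weak form. To get the extra $m^\varepsilon$, I would \emph{not} use Edwards in Case 2 but rather the known unconditional result that every $H$-free graph (for the specific $H$, hence also $H'$-free subgraphs) has surplus $\Omega(m^{1/2+\epsilon(H)})$, cited from the introduction (\cite{AL03}): apply this to $G'$ to get surplus $\Omega(m(G')^{1/2+\epsilon_0})$ with $m(G') \ge m/2$, hence $\Omega(m^{1/2+\epsilon_0})$. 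Hmm, that gives only $m^{1/2+\epsilon_0}$, worse than $m^{3/4}$ for small $\epsilon_0$. The resolution is to apply Conjecture~\ref{conj:opt} \emph{again} inside Case 2: $G'$ is $H$-free and, say, $d_1$-degenerate for $d_1 \le n(G') \le m/d_0$, so $f(G') \ge \frac{m(G')}{2} + \frac{c}{\sqrt{d_1}} m(G') \ge \frac{m(G')}{2} + \frac{c}{\sqrt{m/d_0}}\cdot \frac{m}{2} = \frac{m(G')}{2} + \Omega(\sqrt{d_0}\cdot\sqrt{m})$. So the surplus in Case 2 is $\Omega(\sqrt{d_0 m})$ as before — same bottleneck.

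\textbf{Main obstacle.} The genuine difficulty, and where real ``extra technical work'' is needed, is breaking past the $m^{3/4}$ barrier to get $m^{3/4+\varepsilon}$. The $m^{3/4}$ comes from balancing a single threshold; to do better one must use that in Case 2 the dense subgraph $G'$ is itself $H$-free with $m(G') \ge m/2$ edges but on only $\le m/d_0$ vertices, so its \emph{degeneracy} is also at most $m/d_0$, but moreover one can recurse: extract the densest subgraph of $G'$, and so on, or alternatively observe that a graph that fails to be $d_0$-degenerate on \emph{every} scale has structure forcing $m = \Omega(n^{3/2})$-ish, contradicting $H$-freeness via a Kővári–Sós–Turán / Ramsey-type bound (an $H$-free graph on $n$ vertices has $O(n^{2-\delta(H)})$ edges for $\delta(H)>0$ when $H$ is not bipartite, and $O(n^{2-\delta})$ in general once we forbid any fixed subgraph... actually only if $H$ has a cycle / is bipartite-degenerate). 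I would therefore set up the argument as: choose $d_0 = m^{\beta}$; in Case 2, $n(G') \le m/d_0 = m^{1-\beta}$ while $m(G') \ge m/2$, and $H$-freeness of $G'$ forces $m(G') = O(n(G')^{2-\delta})$ for some $\delta = \delta(H) > 0$ (this is the standard fact that for every fixed $H$, $\mathrm{ex}(n,H) = O(n^{2-\delta(H)})$ — valid since $H$ contains a cycle or we apply KST), giving $m \lesssim m^{(1-\beta)(2-\delta)}$, i.e.\ $1 \le (1-\beta)(2-\delta)$, which \emph{fails} once $\beta$ is chosen appropriately large (close to $1 - \tfrac{1}{2-\delta}$), so Case 2 simply cannot occur for that $\beta$; then Case 1 with $d_0 = m^{\beta}$ gives surplus $\frac{c}{\sqrt{d_0}} m = c\, m^{1 - \beta/2}$, and $1 - \beta/2 = 1 - \tfrac12(1 - \tfrac{1}{2-\delta}) = \tfrac34 + \tfrac{1}{2(2-\delta)} - \tfrac14\cdot\text{(correction)}$; working out the arithmetic, $\beta$ slightly below $1/2$... let me just say: picking $\beta$ so that Case 2 is vacuous yields $1 - \beta/2 = 3/4 + \varepsilon$ for $\varepsilon = \varepsilon(H) = \varepsilon(\delta(H)) > 0$, which is exactly Conjecture~\ref{conj:abks}. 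The crux to get right is the interplay between the degeneracy threshold, the extremal number $\mathrm{ex}(n,H)$, and the surplus exponent, and ensuring the constant $c(H)$ survives; the dense-subgraph extraction and random extension of the cut are routine.
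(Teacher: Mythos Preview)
Your overall skeleton matches the paper's: fix a degeneracy threshold $d_0 = m^{\beta}$, apply Conjecture~\ref{conj:opt} directly if $G$ is $d_0$-degenerate, and otherwise exploit a dense induced subgraph. Case~1 is exactly the paper's Case~1. The gap is entirely in Case~2.

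Your proposed resolution of Case~2 is to choose $\beta$ large enough that Case~2 becomes vacuous, using that an $H$-free graph on $n'$ vertices has at most $O((n')^{2-\delta(H)})$ edges for some $\delta(H)>0$. This is false whenever $H$ is not bipartite: by Erd\H{o}s--Stone, $\mathrm{ex}(n,H) = (1-\tfrac{1}{\chi(H)-1}+o(1))\binom{n}{2} = \Theta(n^2)$ for $\chi(H)\ge 3$. Concretely, if $H=K_{r+1}$ then the balanced complete $r$-partite graph is $H$-free with $\Theta(n^2)$ edges, so no choice of $\beta<1$ can rule out Case~2. You seem to sense this (``actually only if $H$ has a cycle / is bipartite-degenerate''), but do not repair it; the argument as written only proves the theorem for bipartite $H$.

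The paper's Case~2 is genuinely different and is where the ``extra technical work'' lives. It does \emph{not} try to make the dense case vacuous; instead it proves that any graph on $n$ vertices with at least $n^{2-\varepsilon}$ edges and at most $n^{r+1-\delta}$ copies of $K_{r+1}$ has surplus at least $m^{1-\delta}$ (Lemma~\ref{lem:conj-induct}). This is established by induction on $r$: one repeatedly peels vertices that lie in few $K_r$'s; either the peeled part has many edges and few $K_r$'s (apply the inductive hypothesis with $r-1$), or the unpeeled part has every vertex in many $K_r$'s, in which case a random $(r-1)$-tuple sampling argument \`a la Alon (Lemma~\ref{lem:conj-color}) shows a large induced subgraph has small chromatic number and hence large surplus. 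Since the dense subgraph $G[U]$ in Case~2 is $K_{r+1}$-free (zero copies, trivially $\le |U|^{r+1-\delta}$), this lemma applies and yields the required $m^{3/4+\varepsilon}$ after extending the cut back to $G$. None of this structure is present in your proposal, and it cannot be replaced by an extremal-number bound.
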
 

For a graph $G$ and a subset $U$ of the vertices, let $m(U)$ denote the number of edges in the induced subgraph $G[U]$.
We first observe that large cuts in induced subgraphs can be extended to large cuts in the overall graph.
\begin{lemma}
  \label{lem:conj-add}
  Let $G$ be a graph and $U$ be a subset of the vertices.
  If the induced subgraph $G[U]$ has a cut of size at least $\frac{m(U)}{2} + C$ for some $C > 0$, then $f(G) \ge \frac{m}{2}+C$.
\end{lemma}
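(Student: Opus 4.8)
The plan is to take a maximum cut of the induced subgraph $G[U]$ and extend it to a cut of $G$ by placing the remaining vertices randomly, then argue that in expectation the extension picks up at least half of the edges not internal to $U$.

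First I would fix a cut of $G[U]$ given by a vertex partition $U = A_0 \sqcup B_0$ with $|E(A_0, B_0)| \ge \frac{m(U)}{2} + C$; such a cut exists by hypothesis. Next, for each vertex $v \in V(G) \setminus U$, independently assign $v$ to $A$ or to $B$ with probability $\frac12$ each, and set $A = A_0 \cup (\text{vertices assigned to } A)$ and $B = B_0 \cup (\text{vertices assigned to } B)$. This gives a (random) cut of $G$. Now I would partition the edge set $E(G)$ into the $m(U)$ edges with both endpoints in $U$ and the remaining $m - m(U)$ edges, each of which has at least one endpoint outside $U$. The edges inside $U$ are cut deterministically according to the fixed partition, contributing at least $\frac{m(U)}{2} + C$ to the cut size. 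For any edge $e = (x,y)$ with at least one endpoint, say $y$, outside $U$: conditioning on the placement of $x$ (which may be fixed, if $x \in U$, or random, if $x \notin U$), the independent uniform choice for $y$ puts $y$ on the opposite side of $x$ with probability exactly $\frac12$, so $e$ is cut with probability $\frac12$. By linearity of expectation the expected number of cut edges among these is $\frac{m - m(U)}{2}$.

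Combining, the expected size of the random cut of $G$ is at least $\frac{m(U)}{2} + C + \frac{m - m(U)}{2} = \frac{m}{2} + C$, so some outcome of the random process yields a cut of $G$ of size at least $\frac{m}{2} + C$, giving $f(G) \ge \frac{m}{2} + C$ as claimed.

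There is no serious obstacle here; the only point requiring a little care is the conditioning argument showing each edge with an endpoint outside $U$ is cut with probability exactly $\frac12$ — one must make sure to condition on the side of the (possibly already-placed) other endpoint before invoking the independent fair coin for the outside endpoint, and to handle the case where both endpoints lie outside $U$ (still probability $\frac12$ by independence). This is exactly the standard "random extension" trick already used elsewhere in the paper (e.g., in the proof of Lemma~\ref{lem:general}), so it should be stated briefly.
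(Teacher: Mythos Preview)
Your proposal is correct and follows exactly the same approach as the paper's own proof: fix a large cut of $G[U]$, extend it by assigning each vertex of $V\setminus U$ to a side uniformly and independently, and compute the expected cut size by linearity of expectation. If anything, your version is slightly more explicit about the conditioning step than the paper's proof, which simply asserts the expected contribution $\frac{m-m(U)}{2}$ from the non-internal edges without comment.
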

\begin{proof}
Fix a cut of $G[U]$ into vertex sets $U_1 \sqcup U_2 = U$ of size at least $m(U)/2$. Then, for all $v \in V \backslash U$, uniformly at random add $v$ to either $U_1$ or $U_2$ (cutting any internal edges) to grow $U_1 \sqcup U_2$ into a partition of $V$ that induces a cut of expected size at least
$$\frac{m - m(U)}{2} + \frac{m(U)}{2} + C = \frac{m}{2} + C.$$
Thus, there exists a cut of $G$ with at least this size, as desired.
\end{proof}

In the next lemma, we show that a graph with few $K_{r+1}$'s and with every vertex participating in many $K_r$'s has a cut with large advantage over a random cut.
To do this, we adapt an argument of \cite{AL96} to show that such a graph has a large subgraph with small chromatic number.
Hence, this large subgraph has a cut with a significant advantage over a random cut.
This cut can then be extended (using Lemma~\ref{lem:conj-add}) to a cut over the original graph with large advantage.

\begin{lemma}
  \label{lem:conj-color}
  Let $r$ be an integer at least $2$.
  For any $\delta\in(0,1)$, Then, for all graphs $G=(V,E)$ on $n$ vertices and $m$ edges with $n$ sufficiently large, if $G$ contains at most $n^{r+1-\delta}$ copies of $K_{r+1}$ and each $v \in V$ is part of at least $n^{r-1-(\delta/3r)}$ many copies of $K_r$, then
  \begin{align}
    f(G) \ge \frac{m}{2} + m^{1-\delta/3}.
  \end{align}
\end{lemma}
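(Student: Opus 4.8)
The plan is to adapt the argument of Alon~\cite{AL96} on triangle-free graphs (here generalized to the $K_{r+1}$-free-ish setting) to show that $G$ contains a large induced subgraph with small chromatic number, and then invoke Lemma~\ref{lem:mcchi} together with Lemma~\ref{lem:conj-add} to conclude. More precisely, I would first pass to a large induced subgraph $G[U]$ in which the chromatic number is bounded by roughly $m^{\delta/3}$ (or some small power of $n$); this is the heart of the matter.

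To build $U$, I would use a random sampling / probabilistic deletion argument in the spirit of~\cite{AL96}: sample each vertex independently with some probability $p$ (a small power of $n$, chosen so that the expected number of $K_{r+1}$'s killed is negligible while each vertex still retains many $K_r$'s in which it participates), and then delete one vertex from each surviving copy of $K_{r+1}$. The hypothesis that $G$ has at most $n^{r+1-\delta}$ copies of $K_{r+1}$ makes the number of deleted vertices $o(pn)$, so the surviving set $U$ has $|U| = \Omega(pn)$. The hypothesis that every vertex is in at least $n^{r-1-\delta/(3r)}$ copies of $K_r$ is what guarantees (via a second-moment or Chernoff estimate) that after the deletion almost every vertex of $U$ still lies in some $K_r$ — hmm, actually the real point is the opposite direction: I want the resulting graph $G[U]$ to be $K_{r+1}$-free, so that a Ramsey-type bound (as in Lemma~\ref{lem:chikr}) gives $\chi(G[U]) \le O(|U|^{(r-1)/r})$, and then a further random-sampling step reduces $|U|$ (and hence this chromatic bound) down to something like $m^{\delta/3}$ while still retaining a constant fraction, say $\Omega(1)$, of the edges of $G$. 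The balancing of the two exponents $r+1-\delta$ and $r-1-\delta/(3r)$ against the desired bound $m^{1-\delta/3}$ is the delicate bookkeeping.

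Once I have a subset $U$ with $m(U) = \Omega(m)$ and $\chi(G[U]) \le t$ for $t$ a sufficiently small power of $m$ (concretely, $t \le m^{\delta/3}/C$ for a suitable constant $C$), Lemma~\ref{lem:mcchi} gives a cut of $G[U]$ of size at least $(\tfrac12 + \tfrac{1}{2t}) m(U) \ge \tfrac{m(U)}{2} + \Omega(m^{1-\delta/3})$, i.e.\ surplus at least $m^{1-\delta/3}$ after adjusting constants. Then Lemma~\ref{lem:conj-add} extends this to a cut of $G$ of size at least $\tfrac{m}{2} + m^{1-\delta/3}$, which is exactly the claimed bound.

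The main obstacle I expect is the precise design of the sampling probabilities and the verification that the $K_r$-participation hypothesis $n^{r-1-\delta/(3r)}$ is strong enough to make the cleanup step (removing vertices in surviving $K_{r+1}$'s, or removing low-degree vertices) lose only a negligible fraction of edges, while simultaneously being weak enough to be a genuine hypothesis — this is where the factor $1/(3r)$ in the exponent comes from, and getting all three exponents ($r+1-\delta$, $r-1-\delta/(3r)$, and $1-\delta/3$) to line up requires care. A secondary subtlety is that $m$ could be much smaller than $n^2$, so one must track whether the target is phrased in terms of $m$ or $n$; since $m \le n^2$, a bound of the form $n^{2-\delta/3}$-ish surplus translates to $m^{1-\delta/3}$ only after checking the graph is not too sparse, which the $K_r$-participation hypothesis again handles by forcing $m = \Omega(n^{?})$.
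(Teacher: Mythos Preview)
Your high-level plan is right: find a large induced subgraph with small chromatic number, apply Lemma~\ref{lem:mcchi}, and extend via Lemma~\ref{lem:conj-add}. But the specific mechanism you propose for getting a small chromatic number does not work, and this is a genuine gap.

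Your proposed route is: sample, delete to make the subgraph $K_{r+1}$-free, and then invoke the Ramsey-type bound $\chi(G[U])\le O(|U|^{(r-1)/r})$ of Lemma~\ref{lem:chikr}. This chromatic bound is far too weak. Even in the best case $|U|=\Theta(n)$, $m=\Theta(n^2)$, the resulting surplus is about $m/|U|^{(r-1)/r}\approx n^{(r+1)/r}$, whereas you need $m^{1-\delta/3}\approx n^{2-2\delta/3}$; for any $r\ge 3$ and $\delta<1$ this falls short. Your suggested fix, ``a further random-sampling step reduces $|U|$ \ldots\ while still retaining a constant fraction of the edges,'' is impossible: a set on $|U|$ vertices carries at most $\binom{|U|}{2}$ edges, so you cannot shrink $|U|$ to a small power of $m$ while keeping $\Omega(m)$ edges. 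In short, the $K_r$-participation hypothesis is not merely a regularity condition to survive a cleanup step; it is the engine that produces the coloring, and your outline never uses it for that purpose.

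What the paper actually does is conceptually different and is the missing idea. Fix $\varepsilon=\delta/(3r)$ and choose a uniformly random set $T$ of size $t=\Theta(n^{\varepsilon})$. Let $X$ be the set of vertices $v$ such that some $(r-1)$-subset of $T$ forms a $K_r$ together with $v$. The hypothesis that every vertex lies in at least $n^{r-1-\varepsilon}$ copies of $K_r$ is used, via a second-moment computation on the number of such $(r-1)$-subsets hitting $T$, to show $\Pr[v\notin X]$ is small, so with positive probability $G[X]$ retains at least $m/2$ edges. Now color each $v\in X$ by one $(r-1)$-subset of $T$ that completes a $K_r$ with it: this uses only $\binom{t}{r-1}=n^{O(\delta)}$ colors. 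Two adjacent vertices with the same color would yield a $K_{r+1}$ with $r-1$ vertices in $T$; call such cliques \emph{bad}. The bound of $n^{r+1-\delta}$ copies of $K_{r+1}$ makes the expected number of bad cliques only about $t^{r-1}n^{2-\delta}$, so deleting their edges costs a negligible fraction of $m(X)$ and leaves a properly $\binom{t}{r-1}$-colored graph $G'$. Applying Lemma~\ref{lem:mcchi} to $G'$ and then Lemma~\ref{lem:conj-add} gives the claimed surplus $m^{1-\delta/3}$. The point is that the coloring comes directly from $T$ and the $K_r$-participation hypothesis, not from a $K_{r+1}$-freeness plus Ramsey argument.
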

\begin{proof}
Let $G = (V, E)$ be as above and let $\varepsilon=\delta/3r$.
Since each $v \in V$ is part of at least $n^{r-1-\varepsilon}$ many copies of $K_r$, the graph $G$ has at least $\frac{1}{r}n^{r-\varepsilon}$ copies of $K_r$.
Since each edge is in at most $n^{r-2}$ many copies of $K_r$, we have $$ m \ge \binom{r}{2}\cdot \frac{1}{r} \cdot \frac{n^{r-\varepsilon}}{n^{r-2}} > \frac{n^{2-\varepsilon}}{2}.$$
  Let $t=64n^{\varepsilon}$, so $m > n^2/t$ and choose a set $T$ of exactly $t$ distinct vertices of $V$ uniformly at random.
  Let $X \subset V$ be the set of vertices that, along with some collection of $r-1$ elements of $T$, form a copy of $K_r$ in $G$.

  We next show that we expect most vertices to lie in $X$.
  Fix some vertex $v \in V$.
  Let $A_1,\dots,A_\ell$ denote the subsets of $r-1$ vertices that form a $K_r$ with $v$, where $\ell \ge n^{r-1-\varepsilon}$.
  For $i=1,\dots,\ell$, let $Z_i$ be the indicator random variable $\mathbf{1}\{A_i \subseteq T\}$. Let random variable $Z := Z_1+\cdots+Z_\ell$.
  Note that
  $$\mathbf{P}(Z_i = 1) = \mathbf{P}(A_i \subset T) = \frac{\binom{n-(r-1)}{t-(r-1)}}{\binom{n}{t}} \ge \frac{t^{r-1}}{2n^{r-1}},$$
  where the inequality holds if $n$ is sufficiently large.
 Thus, $$\E[Z] = \sum_{i = 1}^{\ell} \E[Z_i] \ge n^{r-1-\varepsilon}\cdot \frac{t^{r-1}}{2n^{r-1}} = \frac{t^{r-1}}{2n^{\varepsilon}}.$$
  If $A_i$ and $A_j$ are disjoint,  $Z_i$ and $Z_j$ are negatively correlated, so $\E[Z_iZ_j]-\E[Z_i]\E[Z_j]\le 0$.
  If $|A_i\cup A_j|=s$ for $r\le s\le 2r-3$, then we have $\E[Z_iZ_j] = \frac{\binom{n-s}{t-s}}{\binom{n}{t}}\le \frac{t^s}{n^s}$.
  Furthermore, for $r\le s\le 2r-3$, there are at most $n^{r-1-\varepsilon}\cdot n^{s-(r-1)}= n^{s-\varepsilon}$ pairs $(A_i,A_j)$ such that $|A_i\cap A_j|=s$.
  Thus,
  \begin{align}
    \Var[Z]
    \ &= \   \sum_{i,j}^{} \E[Z_iZ_j] - \E[Z_i]\E[Z_j]
    \ \le \ \sum_{s=r}^{2r-3} \sum_{i,j:|A_i\cap A_j|=s}^{} \E[Z_iZ_j]  \nonumber\\
    \ &\le \ \sum_{s=r}^{2r-3} \frac{t^s}{n^s} \cdot \#\{i,j: |A_i\cup A_j|=s\}
    \ \le \ \sum_{s=r}^{2r-3} t^sn^{-\varepsilon}
    \ < \ 2t^{2r-3}n^{-\varepsilon}.
  \end{align}
  For all random variables, we have $\Pr[Z=0]\le \frac{\Var[Z]}{\E[Z]^2}$ (see, e.g. Theorem~4.3.1 of \cite{AS92}).
  Hence,
  \begin{align}
    \Pr[v\notin X]
    \ = \  \Pr[Z=0]
    \ \le \ \frac{\Var[Z]}{\E[Z]^2}
    \ < \ \frac{2t^{2r-3}/n^{\varepsilon}}{(t^{r-1}/2n^{\varepsilon})^2}
    \ = \  \frac{1}{8}.
  \end{align}
  Thus, the probability an edge has at least one vertex not in $X$ is less than $\frac{1}{4}$, so the expected number of edges not in $X$ is less than $\frac{m}{4}$.
  Thus, by Markov's inequality, with probability less than $\frac{1}{2}$, at most $\frac{m}{2}$ edges are in $X$.

  Call an $(r+1)$-clique of $G$ \textit{bad} if exactly $r-1$ of the vertices are in $T$.
  Each $(r+1)$-clique is bad with probability at most $\binom{r+1}{r-1}\frac{\binom{t-(r-1)}{n-(r-1)}}{\binom{n}{t}}< \frac{r^2t^{r-1}}{n^{r-1}}$.
  As there are at most $n^{r+1-\delta}$ many $(r+1)$-cliques, the expected number of bad cliques is at most $r^2t^{r-1}n^{2- \delta}$.
  By Markov's inequality, with probability at least 1/2, there are at most $2r^2t^{r-1}n^{2-\delta}$ bad cliques.
  This means that there exists some subset $T$ of $t$ vertices such that (1) the corresponding $X$ has $m(X)\ge \frac{m}{2}$ edges and (2) there are at most $2r^2t^{r-1}n^{2-\delta}$ bad cliques.

  Fix this $T$, and let $G'$ be the graph on vertex set $X$ obtained by removing the edges from every bad $(r+1)$-clique in the induced subgraph $G[X]$.
  The total number of edges in bad cliques is at most
  \begin{align}
    \binom{r+1}{2}\cdot 2r^2t^{r-1}n^{2-\delta}
    \ \overset{(*)}< \ \frac{n^2}{t^{r+1}}
    \ \overset{(**)}< \ \frac{m}{2t^{r}}
    \ \le \ t^{-r}m(X).
  \end{align}
  In $(*)$, we used that $2r^4t^{2r} = c_r n^{2r\varepsilon} <n^{\delta}$ for $n$ sufficiently large.
  In $(**)$, we used that $m > n^2/t$.
  Hence, $G'$ has at least $m(X)\cdot (1 - t^{-r})$ edges.
  Additionally, $\chi(G')\le \binom{t}{r-1}$, seen by coloring each vertex $v\in X$ with an unordered $(r-1)$-tuple corresponding to a subset of $(r-1)$ vertices in $T$ that form a $K_r$ with $v$.
  By definition of $X$, such an $(r-1)$-tuple exists.
  Since $G'$ has no edge forming a $K_{r+1}$ with $r-1$ elements of $T$, the above coloring is a proper coloring of $X$.
  Hence, by Lemma~\ref{lem:mcchi},
  \begin{align}
    f(G')
    \ &\ge \ \left(\frac{1}{2} + \frac{1}{2\binom{t}{r-1}}\right)\cdot m(X)\cdot(1-t^{-r})  \nonumber\\
    \ &> \ \left(\frac{1}{2} + \frac{1}{4\binom{t}{r-1}}\right)m(X)
    \ \overset{(*)}> \ \frac{m(X)}{2} + m^{1-\delta/3},
  \end{align}
  where $(*)$ follows since $\frac{m(X)}{4\binom{t}{r-1}}\ge \frac{m}{8t^{r-1}} > \frac{m}{n^{r\varepsilon}} > m^{1-r\varepsilon} = m^{1-\delta/3}$.
  Hence, the induced subgraph $G[X]$ has a cut of at least the same value.
  By Lemma~\ref{lem:conj-add}, $G$ has a cut of size $\frac{m}{2} + m^{1-\delta/3}$.
\end{proof}

In the next lemma, we show that a graph with few $K_{r+1}$'s and many edges has a cut with large advantage over a random cut.
To do this, we induct on $r$.
We show there are two nontrivial cases: either (1) there is a subgraph with many edges and few $K_r$'s, in which case we apply the induction hypothesis or (2) there is some subgraph with many edges and every vertex is in many $K_r$'s, in which case we apply Lemma~\ref{lem:conj-color}.

\begin{lemma}
  \label{lem:conj-induct}
  Let $r\ge 1$.
  Let $\delta\in(0,1)$.
  For $n$ sufficiently large, every graph $G$ on $n$ vertices with more than $n^{2-\delta/(2^rr!)}$ edges and at most $n^{r+1-\delta}$ many $K_{r+1}$'s, has $f(G) \ge \frac{m}{2} + m^{1-\delta}$.
\end{lemma}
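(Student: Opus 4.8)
The plan is to prove this by induction on $r$, following the case analysis sketched before the statement. The base case $r=1$ says that a graph with more than $n^{2-\delta/2}$ edges and at most $n^{2-\delta}$ copies of $K_2$ — i.e., at most $n^{2-\delta}$ edges — has a good cut; but this hypothesis is vacuous for $n$ large since $n^{2-\delta/2}>n^{2-\delta}$, so there is nothing to prove. (Alternatively one can start the induction at $r=2$ using that a triangle-free graph with many edges has a large surplus by Alon's theorem or even just Shearer's bound, but the vacuous base case is cleanest.) For the inductive step, assume the statement for $r-1$ and let $G$ be a graph on $n$ vertices (with $n$ large) with $m>n^{2-\delta/(2^rr!)}$ edges and at most $n^{r+1-\delta}$ copies of $K_{r+1}$.

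The key dichotomy: either many vertices of $G$ lie in few copies of $K_r$, or we can pass to a large induced subgraph in which every vertex lies in many copies of $K_r$. Concretely, set a threshold $\tau = n^{r-1-(\delta'/3r)}$ for an appropriate $\delta'$ (roughly $\delta/(2^{r-1}(r-1)!)$, chosen so the induction hypothesis at level $r-1$ applies with parameter $\delta'$), and repeatedly delete vertices participating in fewer than $\tau$ copies of $K_r$. Each deletion removes at most $\tau$ copies of $K_r$. If this process deletes at least half the vertices, then we have deleted at most $\frac n2\cdot \tau \le n^{r-\delta'/3r}$ copies of $K_r$ along the way; but more usefully, the set $S$ of deleted vertices with the induced subgraph $G[S]$ then has at most (something like) $n^{r-\delta'}$ copies of $K_r$ — here one must be careful and instead track edges. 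The cleaner bookkeeping: if at least half the vertices get deleted, look at the point where exactly $n/2$ have been deleted; the deleted set $D$ induces a subgraph where the number of $K_r$'s is at most $\frac n2\tau$, which is at most $|D|^{r-\delta''}$ for a suitable $\delta''$, and if $G[D]$ still has more than $|D|^{2-\delta''/(2^{r-1}(r-1)!)}$ edges we apply the induction hypothesis to $G[D]$ to get a cut of $G[D]$ with surplus $m(D)^{1-\delta''}$, which extends to $G$ by Lemma~\ref{lem:conj-add}. One needs to check that $G[D]$ retains enough edges — if not, then the complementary set has most of the edges and we recurse/iterate on it. If fewer than half the vertices are deleted, the surviving induced subgraph $G[X]$ has at least $n/2$ vertices, at most $n^{r+1-\delta}\le (n/2)^{r+1-\delta'}$ copies of $K_{r+1}$ (adjusting constants), and every vertex lies in at least $\tau$ copies of $K_r$, so Lemma~\ref{lem:conj-color} applies directly to $G[X]$ and gives it a cut with surplus $m(X)^{1-\delta'/3}\ge m^{1-\delta}$; again Lemma~\ref{lem:conj-add} lifts this to $G$.

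The main obstacle I anticipate is the bookkeeping of the exponents: one has to choose the auxiliary parameters ($\tau$, the reduced $\delta$ at each recursive level, the edge-count thresholds) so that (i) the induction hypothesis at level $r-1$ is genuinely applicable — meaning the subgraph we hand it really does have both few enough $K_r$'s and many enough edges — and (ii) the resulting surplus, $m(X)^{1-\delta'/3}$ or $m(D)^{1-\delta''}$, is still at least $m^{1-\delta}$ after accounting for the loss of edges when passing to subgraphs. The factor $2^rr!$ in the exponent hypothesis is exactly what is needed to absorb the repeated halving of $\delta$ across the $r$ levels of recursion, so the verification amounts to checking that $\delta/(2^rr!)$ shrinks to $\delta/(2^{r-1}(r-1)!)$ at the cost of one more factor of $2r$, which is consistent. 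A secondary subtlety: the deletion process could, in the "many vertices deleted" branch, leave a deleted set $D$ with few edges, in which case the argument must be applied to $G\setminus D$ (which has most of the edges and no more $K_{r+1}$'s) rather than to $D$; handling this cleanly — perhaps by choosing at each step whichever of $D$ or $X$ has more than half the edges, or by iterating the whole argument — requires a little care but introduces no new ideas.
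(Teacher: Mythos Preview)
Your overall strategy matches the paper's: induct on $r$ with a vacuous base case $r=1$, run a deletion process removing vertices in few $K_r$'s, then apply the induction hypothesis to the deleted side (few $K_r$'s) or Lemma~\ref{lem:conj-color} to the surviving side (every vertex in many $K_r$'s), lifting back to $G$ via Lemma~\ref{lem:conj-add}.

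The one place your proposal differs, and where it is genuinely incomplete, is the case split. You split on how many \emph{vertices} are deleted (stopping at $n/2$), which forces the awkwardness you flag: in the ``many vertices deleted'' branch, if the deleted set $D$ carries few edges you want to pass to $G\setminus D$, but you cannot apply Lemma~\ref{lem:conj-color} there since the deletion was halted prematurely and vertices of $G\setminus D$ need not lie in many $K_r$'s. The paper avoids this entirely by running the deletion to completion, obtaining a partition $V=U\sqcup W$ (deleted, surviving), and then splitting by where the \emph{edges} lie: if $m(G[U])\ge m/6$ apply the induction hypothesis; if $m(G[W])\ge m/6$ apply Lemma~\ref{lem:conj-color}; and otherwise at least $2m/3$ edges go between $U$ and $W$, so the bipartition $(U,W)$ is itself already a cut of size $\ge 2m/3 > m/2 + m^{1-\delta}$. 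This third ``crossing edges'' case is the missing ingredient that dissolves your secondary subtlety without any iteration. Your suggested fix of ``choosing whichever side has more edges'' is on the right track but still leaves a gap, since neither side need carry a constant fraction of the edges; the trichotomy including the crossing case is what closes it.
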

\begin{proof}
  We prove by induction on $r$.
  For $r=1$, the statement is vacuous: no graph $G$ has more than $n^{2-\delta/2}$ edges while also having at most $n^{2-\delta}$ many $K_2$'s.

  Assume the assertion is true for $r-1$.
  For simplicity, let $\varepsilon=\frac{\delta}{2^rr!}$.
  Let $\delta' = \frac{11\delta}{20r}$ and $\varepsilon' = \frac{\delta'}{2^{r-1}(r-1)!}$, so that $\varepsilon'  > \varepsilon$.

  Suppose $G$ has at most $n^{r+1-\delta}$ many $K_{r+1}$'s and $m\ge n^{2-\varepsilon}$ edges.
  Suppose we find a vertex of $G$ contained in less than $n^{r-1-(\delta/6r)}$ many $K_r$'s, delete it, and repeat on the resulting graph until no such vertex exists.
  Let $W$ be the set of vertices that remain after this procedure, and let $U$ be the set of vertices that are deleted.
  We have three cases.
  \begin{enumerate}

\item[Case 1] (Easy). If there are at least $\frac{2m}{3}$ edges between $U$ and $W$, then $(U,W)$ forms a cut of $G$ with at least $\frac{2m}{3} > \frac{m}{2} + m^{1-\delta}$ edges.

  \item[Case 2] (Few $K_r$'s). If there are at least $\frac{m}{6}$ edges in the induced subgraph $G[U]$, then the following two statements are true about $G[U]$:
  \begin{itemize}
  \item \textit{The induced subgraph $G[U]$ has at most $|U|^{r-\delta'}$ many $K_r$'s.}

  Since $G[U]$ has at least $m/6$ edges, $$|U| \ge \sqrt{\frac{m}{3}} > \frac{n^{1-\varepsilon/2}}{2}.$$
  When each vertex in $U$ was deleted, it was in at most $n^{r-1-(\delta/6r)}$ many $K_r$'s. Thus, the total number of $K_r$'s of $G$ that touch the vertex subset $U$ is at most $$|U|n^{r-1-(\delta/6r)}<n^{r-(\delta/6r)}.$$
  Hence, $G[U]$ has at most $n^{r-(\delta/6r)} \le |U|^{r-\delta'}$ many $K_r$'s; the inequality follows since $n$ is sufficiently large and $r-\delta' <  (r-\frac{\delta}{6r})(1-\frac{\varepsilon}{2}).$

  \item \textit{The induced subgraph $G[U]$ has at least $|U|^{2-\varepsilon'}$ edges.}
 This follows since
  $$m(G[U]) = \frac{n^{2-\varepsilon}}{6} \ge \frac{|U|^{2-\varepsilon}}{6} \ge |U|^{2-\varepsilon'},$$
 which holds since $\varepsilon'>\varepsilon$ and $n$ is sufficiently large.
  \end{itemize}
  By the above two properties, the $G[U]$ satisfies the setup of the inductive hypothesis, with parameters $r-1$ and $\delta'$.
  Hence, by the inductive hypothesis, we have that for sufficiently large $n$
  \begin{align}
    f(G[U])\ge \frac{m(U)}{2} + (m/6)^{1-\delta'} > \frac{m(U)}{2} + m^{1-\delta}
    \ge \frac{m}{2} + m^{1-\delta},
  \end{align}
  since $m(U)>m/6$ and $\delta' < \delta$, applying Lemma~\ref{lem:conj-add}.

  \item[Case 3] (Many $K_r$'s). If there are at least $\frac{m}{6}$ edges in the induced subgraph $G[W]$, the following two statements are true about the induced subgraph $G[W]$.

  \begin{itemize}
  \item \textit{Each vertex is in at least $|W|^{r-1-\delta/6r}$ many $K_r$'s.}

  By construction, each vertex is in at least $n^{r-1-\delta/6r}$ many $K_r$'s, or else we would have deleted it in the above procedure.
  Furthermore $n\ge |W|$, so each vertex is in at least $|W|^{r-1-\delta/6r}$ many $K_r$'s.

  \item \textit{It has at most $|W|^{r+1-\delta/2}$ many $K_r$'s.}

  Since $G[W]$ has at least $\frac{m}{6}$ edges, $W$ has at least $\sqrt{\frac{m}{3}} > \frac{n^{1-\varepsilon/2}}{2}$ vertices.
  In $G[W]$, there are at most $n^{r+1-\delta} \le |W|^{r+1 - \delta/2}$ many $K_{r+1}$'s, which holds since $$r+1-\delta < \left(1-\frac{\varepsilon}{2}\right)\left(r+1-\frac{\delta}{2}\right).$$
  \end{itemize}
  By the above two properties, $G[W]$ satisfies the setup of the Lemma~\ref{lem:conj-color} with parameters $r$ and $\frac{\delta}{2}$.
  Hence, by Lemma~\ref{lem:conj-color}, we have for sufficiently large $m$
  \begin{align}
    f(G[W])\ge \frac{m(W)}{2} + (m/6)^{1-\delta/2} > \frac{m(W)}{2} + m^{1-\delta}.
  \end{align}
  Hence, by Lemma~\ref{lem:conj-add}, we have
  \begin{align}
    f(G)\ge \frac{m}{2} + m^{1-\delta}.
  \end{align}
\end{enumerate}
  This covers all the cases, and in each case, we have $f(G)\ge \frac{m}{2} + m^{1-\delta}$, as desired.
\end{proof}

The above tools will enable us to show Theorem~\ref{thm:conj}.

\begin{proof}[Proof of Theorem~\ref{thm:conj}]
  Fix $r\ge 2$.
  Assume Conjecture~\ref{conj:opt} is true.
  It suffices to lower bound the Max-Cut of $K_{r+1}$-free graphs, since all graphs are subgraphs of a clique.
  Let $\delta=\frac{1}{5}$ and $\varepsilon=\frac{\delta}{2^rr!}$.
  Suppose $G$ is a graph with $m$ edges and $n$ vertices.
  We show that $G$ has a cut of size $\frac{m}{2} + \Omega(m^{3/4+\varepsilon/8})$ in two cases:
  Let $d=m^{1/2 - \varepsilon/4}$ and assume $m$ and $n$ are sufficiently large.
  \begin{enumerate}
  \item[Case 1] (Sparse: $G$ has no induced subgraph of minimum degree $d$).
  This implies that $G$ is $d$-degenerate, in which case Conjecture~\ref{conj:opt} implies that for some $c > 0$ $$f(G) \ge \frac{m}{2} + \frac{cm}{\sqrt{d}} = \frac{m}{2} + cm^{3/4 + \varepsilon/8}.$$

  \item[Case 2] (Dense: there exists an induced subgraph $G[U]$ of minimum degree $d$).
  Then
  \begin{align}
    m(U)\ge \frac{|U|d}{2} = \frac{|U|\cdot m^{1/2-\varepsilon/4}}{2} \ge \frac{|U|\cdot m(U)^{1/2-\varepsilon/4}}{2}.
  \end{align}
  Rearranging, and using that $|U|$ and $m(U)$ are sufficiently large, gives that
  \begin{align}
    m(U) > |U|^{2-\varepsilon}.
  \end{align}
  Since $G[U]$ has $0<n^{r+1-\delta}$ many $K_{r+1}$'s, we may apply Lemma~\ref{lem:conj-induct} to obtain that $G[U]$ has a cut of size
  \begin{align}
     f(G[U])
    \ \ge \  \frac{m(U)}{2} + m(U)^{1-\delta}
  \end{align}
  We know that $m(U)\ge \frac{d^2}{2} \ge \frac{m^{1-\varepsilon/2}}{2}$, so
  \begin{align}
     f(G[U])
    \ > \ \frac{m(U)}{2} + \frac{m^{(1-\varepsilon/2)(1-\delta)}}{2^{1-\delta}}
    \ > \ \frac{m(U)}{2} + m^{3/4 + \varepsilon/8}.
  \end{align}
  In the last inequality, we used that $\delta=\frac{1}{5}$ and $\varepsilon \le \frac{\delta}{8}=\frac{1}{40}$.
  By Lemma~\ref{lem:conj-add}, we have
  \begin{align}
    f(G)
    \ \ge \ \frac{m}{2} + m^{3/4+\varepsilon/8}.
  \end{align}
\end{enumerate}
  This completes the proof.
\end{proof}
\begin{remark}
  The above argument proves that, assuming Conjecture~\ref{conj:opt}, a $K_r$-free graph with $m$ edges has Max-Cut value at least $\frac{m}{2} + c_rm^{3/4 + \varepsilon_r}$ for $\varepsilon_r = 2^{-\Theta(r\log r)}$.
  For clarity, we did not optimize the value of $\varepsilon_r$.
\end{remark}

\section{Max-Cut upper bound matching Conjecture~\ref{conj:opt}}
\label{app:ub}

Alon, Krivelevich, and Sudakov \cite{AL05} showed that, when $H$ is a forest, the $\MC$ of $H$-free graphs is $(\frac{1}{2}+c)\cdot m$ for some $c > 0$ independent of $m$.
This result holds independently of the density of the graph, and in particular also applies to $d$-degenerate graphs, where the constant in the lower order term is independent of $d$.

For $d$-degenerate graphs, we observe that forests are the \emph{only} graphs for which this is true: whenever $H$ contains a cycle, there exist infinitely many $H$-free $d$-degenerate (and, in fact, maximum degree $d$) graphs $G$ on $n$ vertices with $\MC$ no larger than $(\frac{1}{2} + \frac{c}{\sqrt{d}})\cdot m(G)$.
In particular, Conjecture~\ref{conj:opt} is optimal (up to a constant depending on $H$ in the lower order term) if it is true when $H$ is not a forest.
\begin{prop}
\label{prop:ub}
  For all $r\ge 3$ and $d\ge 1$, there exist $c=c(r)>0$ and $n_0=n_0(r,d)$ such that for all $n\ge n_0$, there exists a $C_r$-free graph $G$ on $n$ vertices with maximum degree $d$ and
  \begin{align}
    f(G)\le \left(\frac{1}{2} + \frac{c}{\sqrt{d}}\right)\cdot m(G).
  \end{align}
\end{prop}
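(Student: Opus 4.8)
The plan is to take $G$ to be a lightly pruned uniformly random $d$-regular multigraph and to control its Max-Cut by a first-moment union bound over all bipartitions. We may assume $dn$ is even (otherwise run the construction on $n-1$ vertices and append one isolated vertex, which changes none of $m$, $f$, the maximum degree, or $C_r$-freeness). Let $G_0$ be the random $d$-regular multigraph on $[n]$ produced by the configuration model: a uniformly random perfect matching on $dn$ half-edges, $d$ at each vertex. It has exactly $dn/2$ edges.

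First I would prune $G_0$ to a graph of large girth. By the standard first-moment computation for the configuration model (see e.g.\ Bollob\'as or Wormald), the expected number of loops, parallel pairs, and cycles of length between $3$ and $r$ is a constant $\mu=\mu(d,r)$, so by Markov's inequality, with probability at least $\tfrac34$ there are at most $K:=4\mu(d,r)$ of these short structures. On that event, delete one edge from each of them to obtain $G'$; then $G'$ is a \emph{simple} graph of girth greater than $r$ (in particular $C_r$-free), has maximum degree at most $d$, and $m(G')\ge dn/2-K$. Since $G'$ is a subgraph of $G_0$, every bipartition cuts at most as many edges of $G'$ as of $G_0$, so $f(G')\le f(G_0)$.

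The heart of the proof is the union bound on $f(G_0)$. Fix a bipartition $(S,\bar S)$ and let $Y_S$ count the edges of $G_0$ across it; a direct computation gives $\E Y_S=\frac{d|S|\,d(n-|S|)}{dn-1}\le \frac{dn}{4}+1$. Revealing the random perfect matching one edge at a time exhibits $Y_S$ as a Doob martingale with $O(1)$-bounded increments over $dn/2$ steps, so Azuma's inequality yields $\Pr[Y_S\ge \E Y_S+t]\le 2\exp(-t^2/(Adn))$ for an absolute constant $A$. Choosing $t=C\sqrt d\,n$ with $C$ a sufficiently large absolute constant, a union bound over the $2^n$ bipartitions shows that with probability tending to $1$, \emph{every} bipartition of $G_0$ cuts at most $\frac{dn}{4}+C\sqrt d\,n+1$ edges; hence $f(G_0)\le \frac{dn}{4}+C\sqrt d\,n+1$. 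The delicate point here is the bookkeeping of exponents: the deviation that survives the factor $2^n$ must be of order exactly $\sqrt d\,n$ (equivalently $m/\sqrt d$), which is why one needs a concentration inequality with variance proxy $O(dn)$ rather than anything larger — this is precisely what ties the maximum degree $d$ to the $1/\sqrt d$ term and matches Conjecture~\ref{conj:opt}.

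Finally I would assemble the pieces. For $n\ge n_0(d,r)$ both good events above occur with positive probability simultaneously, so a graph $G'$ with all the stated structural properties exists. On it,
\begin{align*}
f(G')-\frac{m(G')}{2}\ \le\ \Big(\frac{dn}{4}+C\sqrt d\,n+1\Big)-\frac{dn/2-K}{2}\ =\ C\sqrt d\,n+1+\frac{K}{2}\ \le\ 2C\sqrt d\,n
\end{align*}
for $n$ large, while $\frac{c}{\sqrt d}\,m(G')\ge \frac{c}{\sqrt d}\big(\frac{dn}{2}-K\big)\ge \frac{c\sqrt d\,n}{4}$ for $n$ large. Taking $c=c(r):=8C$ (an absolute constant) gives $f(G')\le\big(\tfrac12+\tfrac{c}{\sqrt d}\big)m(G')$, which is the desired bound; since $n$ ranges over all integers $\ge n_0(d,r)$ this produces infinitely many examples. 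The only genuinely substantive step is the union bound of the third paragraph; the pruning and the final arithmetic are routine once short cycles are known to be rare in the configuration model.
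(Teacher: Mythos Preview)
Your proof is correct and takes a genuinely different route from the paper's. The paper invokes two black-box results: Bollob\'as's Poisson limit for short cycles in $G_{n,d}$ and the Dembo--Montanari--Sen bound on the Max-Cut of random regular graphs, then combines them by a union bound on the two bad events. The DMS result is rather deep (it pins down the leading constant via the Parisi formula), but only the cruder statement $f(G_{n,d})\le (\tfrac12+O(d^{-1/2}))m$ is needed here.

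Your argument is entirely self-contained: a first-moment bound on short cycles in the configuration model replaces the Poisson limit, and the edge-exposure martingale plus Azuma, together with a union bound over the $2^n$ bipartitions, replaces the citation to DMS. The key quantitative point --- that the martingale has $dn/2$ steps with $O(1)$ increments, so that the deviation surviving the factor $2^n$ is $\Theta(\sqrt{d}\,n)=\Theta(m/\sqrt d)$ --- is exactly right. As a bonus your construction has girth exceeding $r$ (not just $C_r$-free) and your constant $c$ is absolute, both slightly stronger than stated. The trade-off is that your argument cannot recover the sharp constant that DMS gives, but for this proposition that is irrelevant.
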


Let $G_{n, d}$ denote a random $d$-regular graph on $n$ vertices. We will show that a random regular graph $G_{n,d}$ (with a few alterations to make it $C_r$-free) gives the desired bound.
The following result of~\Bollobas~\cite{Bol80} implies that a random regular graph has few $r$-cycles with high probability.
\begin{prop}[Theorem 2 of \cite{Bol80}]
  \label{prop:bol}
  For $r\ge 3$ and $d\ge 1$ fixed, as $n\to\infty$, the distribution of the number of copies of $C_r$ in a random $d$-regular graph $G_{n,d}$ converges to $Poi(\lambda)$ for $\lambda=(d-1)^r/2r$.
\end{prop}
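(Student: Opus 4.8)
The plan is to prove the statement by the method of factorial moments in Bollob\'as's configuration (pairing) model, and then transfer the conclusion to the uniform random $d$-regular graph $G_{n,d}$ by conditioning on simplicity. Since this is a classical fact, I will only sketch the structure.

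First I would recall the pairing model. Take $n$ buckets, each holding $d$ points (``stubs''), for $nd$ stubs in all (with $nd$ even), and pick a uniformly random perfect matching $M$ on the stubs; contracting each bucket to a vertex turns $M$ into a $d$-regular multigraph $\mathbf{G}^*_{n,d}$. The standard facts are that, conditioned on $\mathbf{G}^*_{n,d}$ being simple, it is distributed exactly as $G_{n,d}$, and that $\Pr[\mathbf{G}^*_{n,d}\ \mathrm{simple}]\to e^{-(d^2-1)/4}>0$ as $n\to\infty$. Hence it suffices to control the number $X$ of genuine $r$-cycles (distinct vertices, distinct edges) in $\mathbf{G}^*_{n,d}$ and then condition.

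Next I would compute, for each fixed $k\ge 1$, the factorial moment $\E[(X)_k]$ with $(X)_k=X(X-1)\cdots(X-k+1)$, which counts ordered $k$-tuples of distinct copies of $C_r$ in $\mathbf{G}^*_{n,d}$. The leading contribution is from pairwise vertex-disjoint $k$-tuples: the number of $r$-cycles on the labelled vertex set is $(1+o(1))n^r/(2r)$, at each of the $r$ vertices one chooses an ordered pair of distinct stubs in $d(d-1)$ ways, and the probability that $r$ prescribed disjoint stub-pairs all lie in $M$ is $\prod_{j=0}^{r-1}(nd-2j-1)^{-1}=(1+o(1))(nd)^{-r}$; multiplying gives $(1+o(1))\lambda$ per copy with $\lambda=(d-1)^r/(2r)$, hence $(1+o(1))\lambda^k$ for the disjoint tuples. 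I would then argue that every non-disjoint configuration is negligible: if the $k$ cycles jointly span $v<kr$ vertices, they span $e\ge v$ edges (any identification of vertices between two cycles, each having as many edges as vertices, forces at least one extra edge), so such configurations number $O(n^{v})\cdot O((nd)^{-e})=O(n^{v-e})=O(n^{-1})$, and there are only boundedly many overlap patterns. I would also dispose of degenerate ``cycles'' of the multigraph that reuse a stub at a vertex or repeat an edge, which contribute negligibly for the same reason. Therefore $\E[(X)_k]\to\lambda^k$ for every fixed $k$, and by the method of moments $X\to\mathrm{Poi}(\lambda)$ in distribution, in the pairing model.

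Finally, to reach $G_{n,d}$ I would upgrade this to a joint statement. Let $X_1$ and $X_2$ be the numbers of loops and of unordered parallel-edge pairs and $X=X_r$ (with $r\ge 3$) the number of $r$-cycles; running the same moment computation on the vector $(X_1,X_2,X)$ shows its mixed factorial moments factor in the limit, so $(X_1,X_2,X)$ converges jointly to independent Poissons with means $\tfrac{d-1}{2}$, $\tfrac{(d-1)^2}{4}$, and $\lambda$. Since $\{X_1=0,\,X_2=0\}$ is exactly the simplicity event, conditioning on it leaves the limiting law of $X$ unchanged, giving that the number of copies of $C_r$ in $G_{n,d}$ converges to $\mathrm{Poi}(\lambda)$, as claimed. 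The main obstacle is the combinatorial bookkeeping in the factorial-moment estimate: one must check, uniformly in $k$, that \emph{every} way two or more of the $k$ cycles can share vertices or paths produces a strict loss in the power of $n$, and separately handle the multigraph-specific degeneracies; the leading-order count itself is short, so all the genuine work lies in showing that these error terms are $o(1)$.
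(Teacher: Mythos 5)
Your sketch is correct and is essentially the standard (and original) proof of this result: the paper itself gives no argument, merely citing Theorem~2 of Bollob\'as's 1980 paper, and that theorem is proved exactly as you describe, via the configuration model, convergence of joint factorial moments of short-cycle counts to products of Poisson means, and conditioning on the simplicity event $\{X_1=X_2=0\}$. The only point to tighten in a full write-up is that for non-disjoint $k$-tuples you need $e\ge v+1$ (not merely $e\ge v$) to get the $O(n^{-1})$ bound, which your parenthetical about the forced extra edge already supplies.
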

The following result (e.g. in \cite{DMS17}) shows that with high probability, random regular graphs have $\MC$ within a constant factor of the bound in Conjecture~\ref{conj:opt}.

\begin{prop}[\cite{DMS17}]
  \label{prop:dms}
  There exists an absolute constant $c>0$ such that, for any $d \ge 1$, there exists $n_d$ such that for all $n\ge n_d$, with probability at least $0.99$, the random regular graph $G_{n,d}$ has $\MC$ at most $(\frac{1}{2} + \frac{c}{\sqrt{d}})m$, where $m=\frac{dn}{2}$.
\end{prop}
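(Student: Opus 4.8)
\textbf{Proof plan for Proposition~\ref{prop:dms}.}
The optimal value of $c$ (the limiting surplus constant as $n,d\to\infty$) is the main theorem of \cite{DMS17}, proved via the interpolation method and the Parisi formula; since the statement here only asks for \emph{some} absolute constant, the plan is to give the much softer first–moment / union bound proof. I would work in the configuration (pairing) model: $G_{n,d}$ has the same law as the multigraph produced by a uniformly random perfect matching $M$ on the $dn$ half–edges ($d$ at each vertex), \emph{conditioned} on simplicity, and for fixed $d$ this conditioning event has probability bounded below by an absolute constant $p_d>0$ once $n$ is large (and $dn$ even). Since on the ``simple'' event the $\MC$ of $G_{n,d}$ equals the $\MC$ of the multigraph, it is enough to bound, in the \emph{unconditioned} pairing model, the probability that some bipartition crosses more than $\tfrac m2 + t$ pairs of $M$, and then divide by $p_d$.

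First I would fix a bipartition $V = S\sqcup \bar S$ with $\abs S = k$ and let $X_S$ be the number of pairs of $M$ joining one of the $dk$ half–edges at $S$ to one of the $d(n-k)$ half–edges at $\bar S$. A one–half–edge–at–a–time computation gives $\E[X_S] = \frac{d^2 k(n-k)}{dn-1}$, which is maximized near $k=n/2$ at $\frac{dn}{4} + O(1) = \frac m2 + O(1)$. Then I would expose the pairs of $M$ one at a time, obtaining a Doob martingale for $X_S$ whose increments are bounded by an absolute constant (revealing one more matched pair changes the conditional mean of $X_S$ by $O(1)$, equivalently $X_S$ changes by $O(1)$ under a single switch of two matched pairs). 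Azuma's inequality over the $dn/2$ steps then yields, for an absolute $C_0 > 0$,
\begin{align}
  \Pr\left[\,X_S \ge \E[X_S] + t\,\right] \le \exp\left(-\frac{t^2}{C_0\, dn}\right).
\end{align}

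Finally I would union bound over all $2^n$ subsets $S$. Taking $t = C_1\sqrt d\, n$ with $C_1$ an absolute constant large enough that $C_1^2/C_0 > \ln 2$,
\begin{align}
  \Pr\left[\,\exists S:\ X_S \ge \tfrac m2 + O(1) + C_1\sqrt d\, n\,\right]
  \ \le\ 2^n \exp\left(-\frac{C_1^2 n}{C_0}\right)
  \ =\ o(1).
\end{align}
Dividing by $p_d$ preserves $o(1)$, so for $n \ge n_d$ this is at most $0.01$. On the complementary event every bipartition of $G_{n,d}$ cuts at most $\frac m2 + O(1) + C_1\sqrt d\, n$ edges, and since $m = \frac{dn}{2}$ we have $\sqrt d\, n = \frac{2m}{\sqrt d}$ and $O(1) = o\!\left(\frac{m}{\sqrt d}\right)$, so this is at most $\left(\frac12 + \frac{c}{\sqrt d}\right)m$ with $c := 2C_1 + 1$ an absolute constant — exactly the claim, with probability at least $0.99$.

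The hard part is the concentration step: because the pairing model has dependent edges one cannot apply a plain Chernoff bound, and the work is to set up the edge–exposure martingale and verify the $O(1)$ bound on its increments. A minor but important point is to check that the resulting constant $C_1$ is genuinely absolute — uniform over all $d \ge 1$ — which holds since $\E[X_S] \le \frac m2 + O(1)$ and the Azuma tail both scale correctly in $d$ and $n$. I emphasize that this plan only recovers \emph{some} fixed $c$; identifying the sharp constant requires the substantially more delicate argument of \cite{DMS17}.
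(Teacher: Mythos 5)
Your argument is correct, but it takes a genuinely different route from the paper, which offers no proof at all: Proposition~\ref{prop:dms} is simply quoted from \cite{DMS17}, whose main theorem is a far stronger statement (the exact limiting surplus constant $\lim_{n\to\infty}(f(G_{n,d})-\tfrac m2)/(n\sqrt d)$, identified via the interpolation method and its connection to the Sherrington--Kirkpatrick ground state). What you give instead is the classical, self-contained first-moment argument in the configuration model: expectation $\E[X_S]=\tfrac{d^2k(n-k)}{dn-1}\le \tfrac m2+O(1)$ for each bipartition, Azuma over the $dn/2$ pair exposures with $O(1)$ switching increments, a union bound over $2^n$ bipartitions forcing $t=\Theta(n\sqrt d)=\Theta(m/\sqrt d)$, and division by the simplicity probability $p_d>0$ (which, as you should perhaps say more carefully, is a constant depending on $d$ but not on $n$ --- this is harmless since $n_d$ may depend on $d$, and only $c$ must be absolute, which your $c=2C_1+1$ is). The trade-off is clear: the citation buys the sharp constant, which this paper does not need, while your elementary proof buys self-containedness and makes transparent that only a crude absolute constant is required for Proposition~\ref{prop:ub}; since the whole argument is standard (it is essentially the folklore upper bound for Max-Cut of sparse random graphs, predating \cite{DMS17}), either route is legitimate here.
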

Combining the above two results gives Proposition~\ref{prop:ub}.
\begin{proof}[Proof of Proposition~\ref{prop:ub}]
  Let $\lambda = \frac{(d-1)^r}{2r}$.
  By Proposition~\ref{prop:bol}, there exists an $n_0'$ such that, for all $n\ge n_0'$ the probability that a random regular graph $G_{n,d}$ has at least $d^r > 6\lambda$ copies of $C_r$ is at most $e^{-6} < 0.01$.
  Let $c>0$ and $n_d$ be given by Proposition~\ref{prop:dms}, and let $n_0=\max(n_0',n_d, 2d^r)$. For all $n\ge n_0$, with probability at least $0.98$, a random regular graph $G_{n,d}$ has at most $d^r$ copies of $C_r$ and $\MC$ at most $(\frac{1}{2} + \frac{c}{\sqrt{d}})\cdot\frac{dn}{2}$.
  Let $G'$ be such a graph, and let $G$ be the graph obtained by removing (at least) one edge from each $C_r$, so that at most $d^r$ edges are removed, and $G$ has $m(G)\ge \frac{dn}{2} - d^r \ge \frac{dn}{2}(1-\frac{1}{2d})$ edges.
  Then, the $\MC$ of $G$ is at most
  \begin{align}
    g(G)
    \ &\le \  \left( \frac{1}{2} + \frac{c}{\sqrt{d}} \right)\cdot \frac{dn}{2}
    \ &\le \ \left( \frac{1}{2} +  \frac{c}{\sqrt{d}}\right)\frac{m(G)}{1-(1/2d)}
    \ < \   \left( \frac{1}{2} + \frac{c'}{\sqrt{d}} \right)m(G)
  \label{}
  \end{align}
  for some $c'>0$.
\end{proof}


\begin{thebibliography}{}



\bibitem{AL96} N. Alon, Bipartite subgraphs, \textit{Combinatorica}, {\bf{16}} (1996), 301--311.

\bibitem{AL03} N. Alon, B. Bollob{\'a}s, M. Krivelevich, and B. Sudakov, Maximum cuts and judicious partitions in graphs without short cycles, {\it J. Combin. Theory, Ser. B,} {\bf 88} (2003), 329--346.

\bibitem{AL05} N. Alon, M. Krivelevich, and B. Sudakov, Max Cut in $H$-Free Graphs, {\it Combin. Prob. and Computing} {\bf 14} (2005), 629--647.

\bibitem{AH}
N. Alon and E. Halperin, Bipartite subgraphs of integer weighted
graphs, {\it Discrete Mathematics} {\bf 181} (1998), 19--29.




\bibitem{Bol80} B. Bollob{\'a}s, A Probabilistic Proof of an Asymptotic Formula for the Number of Labelled Regular Graphs, \textit{European Journal of combinatorics} {\bf 1} (1980), 311--316.

\bibitem{BS1} B. Bollob\'as and A. D. Scott,       
Better bounds for max cut, 
in: {\bf Contemporary Combinatorics},
(B. Bollob\'as, ed.), Bolyai Society 
Mathematical Studies, Springer 2002, pp. 185-246.


\bibitem{CFKS}
D. Conlon, J. Fox, M. Kwan and B. Sudakov, Hypergraph cuts above the average , 
\emph{Israel J. of Mathematics} {\bf 233} (2019), 67--111. 

\bibitem{DMS17} A. Dembo, A. Montanari, and S. Sen, Extremal cuts of sparse random graphs, \emph{Ann. Probab.} {\bf 45} (2017), 1190--1217.

\bibitem{E73} C.S. Edwards, Some extremal properties of bipartite subgraphs, {\it Canad. J. Math.} {\bf 3} (1973) 475–485.

\bibitem{E75} C.S. Edwards, An improved lower bound for the number of edges in a largest bipartite subgraph,
Proceedings of Second Czechoslovak Symposium on Graph Theory, Prague, (1975) 167–181.



\bibitem{E1}
P. Erd\H{o}s, On even subgraphs of graphs, {\it Mat. Lapok} {\bf 18} (1967), 283--288.


\bibitem{ER75} P. Erd\H{o}s, Problems and results in graph theory and combinatorial analysis, {\it Proc. Fifth British Comb. Conf.} {\bf 15} (1975) 169-–192.

\bibitem{EFPS}
P. Erd\H{o}s, R. Faudree, J. Pach, and J. Spencer, How to make a graph bipartite, 
{\it J. Combin.Theory Ser. B} {\bf 45} (1988), 86--98.

\bibitem{EGS92} P. Erd\H{o}s, E. Győri, and M. Simonovits, How many edges should be deleted to make a triangle-free graph bipartite? Sets, graphs and numbers (Budapest, 1991), 239--263, {\it Colloq. Math. Soc. J\'anos Bolyai}, {\bf 60}, North-Holland, Amsterdam, 1992.



\bibitem{ES35} P. Erd\H{o}s and G. Szekeres, A combinatorial problem in geometry, {\it Compos. Math.} {\bf 2} (1935), 463--470.






\bibitem{GW95} M. X. Goemans and D. P. Williamson. Improved approximation algorithms for maximum cut and satisfiability problems using semidefinite programming. \textit{J. ACM,} {\bf 42} (1995), 1115--1145.

\bibitem{KST} T. K\H{o}v\'ari, V. T. S\'os and P. Tur\'an. On a problem of K. Zarankiewicz, Colloquium Math., {\bf 3}, (1954), 50-57.

\bibitem{OW08} R. O’Donnell and Y. Wu, An optimal SDP algorithm for Max-Cut, and equally optimal long code tests, \emph{Proc. 40th ACM S. Theory Comput.} (2008), 335--344.


\bibitem{PT} S. Poljak and Zs. Tuza, Bipartite subgraphs of
triangle-free graphs, {\it SIAM J. Discrete Math.} {\bf 7} (1994),
307--313.

\bibitem{SHE92} J. Shearer, A note on bipartite subgraphs of triangle-free graphs, \textit{Rand. Struct. Alg.} {\bf 3} (1992), 223--226.

\bibitem{Su07} B. Sudakov, Making a $K_4$-free graph bipartite, \textit{Combinatorica} {\bf 27} (2007), 509--518.






\end{thebibliography}
\end{document}